\newif\ifarxiv
\newif\ifdraftversion
\DeclareSymbolFont{symbols}{OMS}{cmsy}{m}{n}
\let\oldalign\align
\let\oldendalign\endalign
\renewenvironment{align}
  {\linenomathNonumbers\oldalign}
  {\oldendalign\endlinenomath}
\definecolor{MyGreen}{rgb}{0,.6,.2}
\definecolor{MyDarkBlue}{rgb}{.1,.1,.75}
\newcounter{Figure}
\theoremstyle{plain}
\newtheorem{theorem}{Theorem}
\newtheorem{lemma}[theorem]{Lemma}
\newtheorem{proposition}[theorem]{Proposition}
\newtheorem{corollary}[theorem]{Corollary}
\theoremstyle{definition}
\newtheorem{definition}[theorem]{Definition}
\newtheorem{assumption}[theorem]{Assumption}
\numberwithin{theorem}{section}
\numberwithin{equation}{section}
\newcommand{\Reals}{\mathbb R}
\newcommand{\Ints}{\mathbb Z}
\newcommand{\Nats}{\mathbb N}
\DeclareMathOperator{\Div}{\mathrm{div}}
\DeclareMathOperator{\im}{\mathrm{im}}
\DeclareMathOperator{\ck}{{\mathbb L}}
\newcommand{\be}{\begin{equation}}
\newcommand{\ee}{\end{equation}}
\newcommand{\bea}{\begin{eqnarray}}
\newcommand{\eea}{\end{eqnarray}}
\newcommand{\beas}{\begin{eqnarray*}}
\newcommand{\eeas}{\end{eqnarray*}}
\newcommand{\tr}{\text{tr} \,}
\newcounter{mnotecount}[section]
\let\oldmarginpar\marginpar
\renewcommand\marginpar[1]{\-\oldmarginpar[\raggedleft\footnotesize #1]%
{\raggedright\footnotesize #1}}
\begin{document}
\ifdraftversion
\linenumbers
\fi

\title{Asymptotically Euclidean Solutions of the Constraint Equations with Prescribed Asymptotics}

\ifarxiv
\author[1]{Lydia Bieri\thanks{lbieri@umich.edu}}
\affil[1]{Department of Mathematics, University of Michigan}

\author[2,3]{David Garfinkle\thanks{garfinkl@oakland.edu}}
\affil[2]{Department of Physics, Oakland University}
\affil[3]{Michigan Center for Theoretical Physics, Randall Laboratory of Physics, University of Michigan}

\author[4]{James Isenberg\thanks{isenberg@uoregon.edu}}
\affil[4]{Department of Mathematics, University of Oregon}

\author{David Maxwell\thanks{damaxwell@alaska.edu}}
\affil[5]{Department of Mathematical Sciences, University of Alaska Fairbanks}

\author[1]{James Wheeler\thanks{jcwheel@umich.edu}}

\else
\author{Lydia Bieri}
\email{lbieri@umich.edu}
\affiliation{Department of Mathematics, University of Michigan, Ann Arbor, MI 48109-1120, USA}

\author{David Garfinkle}
\email{garfinkl@oakland.edu}
\affiliation{Department of Physics, Oakland University,
Rochester, MI 48309, USA \\ 
and Michigan Center for Theoretical Physics, Randall Laboratory of Physics, University of Michigan, Ann Arbor, MI 48109-1120, USA}

\author{James Isenberg}
\email{isenberg@uoregon.edu}
\affiliation{Department of Mathematics, University of Oregon, Eugene OR 97403}

\author{David Maxwell}
\affiliation{Department of Mathematical Sciences, University of Alaska Fairbanks, Fairbanks, AK, 99775, USA}
\email{damaxwell@alaska.edu} 

\author{James Wheeler}
\email{jcwheel@umich.edu}
\affiliation{Department of Mathematics, University of Michigan, Ann Arbor, MI 48109-1120, USA}
\fi

\date{\today}

\ifarxiv
\maketitle
\fi

\begin{abstract} 
We demonstrate that in constructing asymptotically flat vacuum initial data sets in General Relativity via the conformal method, certain asymptotic structures may be prescribed a priori through the specified seed data, including the ADM momentum components, the leading- and next-to-leading-order decay rates, and the anisotropy in the metric's mass term, yielding a recipe to construct initial data sets with desired asymptotics. We numerically construct a simple explicit example of an initial data set, with stronger asymptotics than have been obtained in previous work, such that the evolution of this initial data set does not exhibit the conjectured antipodal symmetry between future and past null infinity.
\end{abstract} 

\ifarxiv
\newpage
\else
\maketitle
\fi

\tableofcontents

\section{Introduction}
\label{intro}

General relativity is our foremost theory of gravity on classical scales, and its predictive power largely stems from the well-posedness of the initial value problem formulation: Given a global ``instant in time" characterized by a complete Riemannian 3-manifold $(\Sigma, \tilde g)$ equipped with a symmetric $(0,2)$-tensor $\widetilde K_{ab}$ subject to the Einstein constraint equations \eqref{eqn:hamiltonian-gen}-\eqref{eqn:momentum-gen}, one seeks a globally hyperbolic Lorentzian 4-manifold $(\widehat M, \widehat g)$ satisfying Einstein's gravitational field equation into which $\Sigma$ embeds as a Cauchy surface with the induced metric $\tilde g$ and the second fundamental form $\widetilde K_{ab}$. 
Since Choquet-Bruhat's seminal result, proven over half a century ago \cite{choquet1952,choquet1969global}, establishing the existence of such a (maximal) spacetime $(\widehat M, \widehat g)$ under general hypotheses, the mathematical relativity community has been broadly interested in characterizing the geometric structure of $(\widehat M, \widehat g)$ satisfying various conditions on the initial data set $(\Sigma, \tilde g, \widetilde K)$, with particular emphasis on the {\it asymptotically flat} case modeling an isolated system. 
Features of particular interest include the existence of past and future null infinity $\mathscr I^\pm$ with particular prescribed curvature behaviors there, and much work has been done on establishing the stability of these behaviors under asymptotically flat perturbations of initial data sets extracted from fundamental explicit solutions such as the Minkowski, Schwarzschild, and Kerr spacetimes. 
Such works invoke different notions of asymptotic flatness to different effects, and there is some interest in the question of how stringent one's notion of asymptotic flatness on $(\Sigma,\tilde g, \widetilde K)$ must be to ensure that desirable physical features are realized within $(\widehat M, \widehat g)$.

The first work which established the broad existence of solutions to the initial value problem for Einstein's equation admitting a complete notion of infinity, outside of the few known explicit spacetimes, was Christodoulou and Klainerman's proof of the nonlinear stability of Minkowski space \cite{sta}. 
They proved the stability of key geometric features of the maximal vacuum Cauchy development of sufficiently ``small" initial data sets that are asymptotically flat in the sense that there exists a coordinate system $(x_1,x_2,x_3)$ on the complement of a compact set in $\Sigma$ in which\footnote{For $f \in C^m(\mathbb R^n)$, we define $f = o_m(r^\delta)$ provided that $D^\alpha f = o(r^{\delta-|\alpha|})$ as $r \to \infty$ for any multi-index $\alpha$ of order $|\alpha| \leq m$.}
\begin{align}
\tilde g_{ij} & = \left(1+\frac{2M}{r}\right)\delta_{ij} + o_4(r^{-3/2}), \label{eqn:typeckg} \\
\widetilde K_{ij} & = o_3(r^{-5/2}). \label{eqn:typeckk}
\end{align}
We define these falloff conditions as type (CK). This conclusion was later generalized by Bieri \cite{lydia1,lydia2} to initial data sets with the relaxed falloff conditions (type (B))
\begin{align}
\tilde g_{ij} & = \delta_{ij} + o_3(r^{-1/2}), \label{eqn:typebg} \\
\widetilde K_{ij} & = o_2(r^{-3/2}). \label{eqn:typebk}
\end{align}
In each case, one has different degrees of control on the induced falloff rates of various curvature components as one approaches $\mathscr I^\pm$. As also investigated by Bieri \cite{lydia4,lydia14,lydia5}, different control still is afforded by the intermediate falloff conditions (type (A))
\begin{align}
\tilde g_{ij} & = \delta_{ij} + h_{ij} + o_3(r^{-3/2}), \label{eqn:typeag} \\
\widetilde K_{ij} & = o_2(r^{-5/2}), \label{eqn:typeak}
\end{align}
where each component of $h_{ij}$ is homogeneous of degree $-1$, representing an anisotropic mass. 

To better understand the asymptotic behavior of the geometry of spacetimes compatible with these three falloff conditions, and to assess the status of physically-motivated conjectures associated to that geometry, it is interesting and necessary to construct examples of constraint-satisfying initial data sets in each of the classes (CK), (B), and (A), requiring the ability to {\it prescribe} the data's falloff rate.
While the ultimate act of construction is a numerical task, it is vital to first develop a mathematical blueprint of the procedure underpinned by theorems guaranteeing its efficacy. 
A subset of the present authors \cite{bieri_brill_2025} recently provided such a blueprint in the rather restrictive setting of Brill wave initial data sets, requiring both azimuthal and time symmetry. These results, however, were insufficient to guarantee the construction of the more delicate type (A) data. That work numerically constructed an initial data set of type (B) whose evolution cannot satisfy Strominger's antipodal conjecture in \cite{strominger}, which is a topic of interest in the literature \cite{symmprabhuetal,symmmagdyetal,lz1}. That conjecture posits a symmetry of a particular component of the electric part of the Weyl curvature ($\Psi_2$ in the Newman-Penrose formalism, or $\rho$ in Christodoulou and Klainerman's notation) evaluated asymptotically along $\mathscr I^+$ as compared to $\mathscr I^-$: The conjecture says that if one evaluates this component's limit on each of $\mathscr I^\pm$ and proceeds along each to spatial infinity, the two resulting functions on the sphere should agree up to composition with the antipodal map on the sphere, $p \mapsto -p$.

In this work, we significantly generalize the results of  \cite{bieri_brill_2025} both to do away with any symmetry assumptions and to handle type (A) data, establishing a sequence of results guaranteeing that one can control the falloff rates and the ADM momenta of initial data sets generated via the widely utilized conformal method. 
We use these results to construct an initial data set of type (A) whose evolution cannot satisfy the antipodal conjecture of \cite{strominger}, restricting the class of spacetimes to which this conjecture could apply. 
Since the conjecture holds trivially for spacetimes evolved from type (CK) data, our results indicate that the domain on which the conjecture might hold nontrivial content must be somewhat small. 
For simplicity of presentation, we work with $\Sigma = \mathbb R^n$, but the analytical results generalize straightforwardly to all asymptotically Euclidean manifolds, even with multiple ends. Of course, one takes $n = 3$ in standard general relativity (as above), but we have found it interesting and instructive to generalize the mathematical results presented here to arbitrary dimension $n \geq 3$.

We work with finite regularity measured in Sobolev scales, and in particular with weighted Sobolev spaces. This common setting for building asymptotically Euclidean initial data usually focuses on decay rates within the isomorphism range of the applicable elliptic operators (e.g. \cite{bruhat-isenberg}, \cite{maxwell_rough_2006}).  While this approach suffices for establishing existence and uniqueness, the slow decay rates lose asymptotic information. We work instead with faster decay rates that expose asymptotic structure, and with particular care at the critical transition rates where the Fredholm index of the operators changes (see Proposition \ref{prop:flat-P-threshold} and Lemma \ref{lem:borderline}).  Although we work exclusively in the context of constant mean curvature (CMC) solutions of the constraints, the tools developed here are equally applicable to perturbative constructions of non-CMC initial data.

We remark that Dain and Friedrich published a work in 2001 \cite{dain2001asymptotically} having a fair bit of {\it conceptual} overlap with (and a remarkably similar title to) the present work. 
They also proved a number of interesting results allowing control of asymptotic features of initial data sets constructed via the conformal method, including the prescription of a leading-order Euclidean transverse traceless tensor in $\widetilde K_{ij}$ allowing control of the ADM momentum (cf.\@ our Theorem \ref{thm:summary_a}). 
However, there is little {\it technical} overlap with the present work, either in precisely what is proved or in the analytical techniques employed. 
Namely, Dain and Friedrich worked (appropriately to their objectives) in the restricted setting of initial data sets admitting a conformal compactification at spatial infinity, so they were interested in obtaining asymptotics which amount to analyticity at spatial infinity, asking that $\tilde g_{ij}$ and $\widetilde K_{ij}$ admit a full power series expansion in integer powers of $1/r$.
The present work seeks to provide a simple procedure for constructing initial data sets in a much broader category, with emphasis on controlling the structure of the leading-order terms (especially in $\tilde g_{ij}$) and the precise (non-integral) decay rate of subleading terms, and making no demand of a full power series expansion. 
Broadly, we are interested in the construction of initial data sets conforming to varying notions of asymptotic flatness for the ultimate purpose of probing the extent to which physically interesting features of spacetime depend on this notion, but the results in \cite{dain2001asymptotically} are not amenable to handling a variety of falloff behaviors. 
In particular, results in \cite{dain2001asymptotically} could be described as allowing the construction of highly specialized examples of type (CK) data, but they are not relevant to the construction of type (A) data which is not (CK), which is among our chief concerns.

In a similar vein, in 1996 Beig and O'Murchadha used spherical harmonic expansions to investigate the asymptotics of solutions to the CMC vacuum momentum constraint equation \eqref{eqn:momentum} \cite{beig_momentum_1996}. Working in the smooth setting in dimension $n = 3$, they fully characterized the asymptotic forms and decay rates of solutions to the momentum constraint with seed data having given arbitrarily fast, non-exceptional falloff compatible with type (CK) data. This characterization was made up to a collection of constant ``multipole moments", including the ADM momentum components, which can be prescribed via an appropriate choice of seed data (but not anticipated for an arbitrary choice of seed data). We have built upon Beig and O'Murchadha's work by doing the following: extending some of their conclusions to lower regularity and arbitrary dimension; establishing that similar conclusions hold for solutions to the Hamiltonian constraint equation  \eqref{eqn:hamiltonian} (and hence one's entire initial data set); showing that the ADM momentum components can be computed directly from the seed data; and investigating the asymptotics of solutions if the seed data has the exceptional ``threshold" falloff $O(r^{2-n})$, allowing the construction of type (A) data which is not (CK).

The remainder of this work is organized as follows: In Section \ref{sec:prelim}, we establish our notational
conventions and discuss the procedure for carrying out the conformal method.
To proceed with obtaining solutions of the constraint equations using the conformal
method, one must solve the ``conformal constraint equations"
 \eqref{eqn:lichnerowicz} and \eqref{eqn:conformalmomentum} for the 
positive scalar $\varphi$ and the vector field $W$. We note in that section that, so long as we
presume that the mean curvature $\tr_{\tilde g}(\widetilde K)$ for the initial data set is zero, these
conformal constraint equations semi-decouple. In Section \ref{sec:momentum}, we discuss the conformal
momentum constraint equation \eqref{eqn:conformalmomentum} by analyzing the mapping properties of the vector Laplacian operator, and we show how to choose the ``seed data" so that the extrinsic curvature of the solution has various falloff properties. 
We show there that the ADM momentum of the ultimate initial data set may be prescribed a priori through the choice of the seed data.
In Section \ref{sec:hamiltonian}, we discuss the Hamiltonian constraint equation \eqref{eqn:lichnerowicz} in a manner closely analogous to the discussion of the momentum constraint, showing that the seed data can be chosen so that the conformal factor $\varphi$ implements the desired falloff properties. 
In Section \ref{sec:constructions}, we summarize the implications of our results for the construction of full initial data sets satisfying the desired falloff rates-- those of type (CK), (A), and (B) data--, collated into Theorems \ref{thm:summary_b}, \ref{thm:summary_ck}, and \ref{thm:summary_a}. In Section \ref{sec:numerics}, we produce a numerical example of type (A) data whose evolution cannot satisfy the antipodal conjecture in \cite{strominger}. We briefly summarize and conclude in Section \ref{sec:conclusion}.
Appendix \ref{app:vec-fredholm} presents a largely self-contained Fredholm theory of vector Laplacians, analogous to results for the scalar Laplacian-type operators in \cite{bartnik1}.

\section{Preliminaries and The Conformal Method}
\label{sec:prelim}

The problem of constructing initial data sets $(\Sigma,\tilde g, \widetilde K)$ is more analytically challenging than one might like primarily due to the constraint equations, which are geometrically imposed relations between the geometry of $(\widehat M, \widehat g)$ and $(\Sigma, \tilde g, \widetilde K)$ which read (in vacuum)
\begin{align}
R(\tilde g) - |\widetilde K|_{\tilde g}^2 + \text{tr}_{\tilde g}(\widetilde K)^2 & = 0,
\label{eqn:hamiltonian-gen}
\\ \text{div}_{\tilde g}(\widetilde K) - d(\text{tr}_{\tilde g}(\widetilde K)) & = 0.
\label{eqn:momentum-gen}
\end{align}
The first of these is the {\it Hamiltonian constraint}, and the second is the {\it momentum constraint}. There is a broad literature on the study of the constraint equations, spanning from methods of constructing solutions, either from scratch or from existing solutions, to properties that solutions enjoy, such as the positive mass theorem and the Riemannian Penrose inequality. We are interested in the former, and we seek to construct initial data sets corresponding to a {\it maximal} time slice in $\widehat M$, wherein $\tr_{\tilde g}(\widetilde K) = 0$. The vacuum constraint equations then take the following form:
\begin{align}
R(\tilde g) & = |K|_{\tilde g}^2, \label{eqn:hamiltonian}
\\ \text{div}_{\tilde g}(\widetilde K) & = 0 \label{eqn:momentum}.
\end{align}
We pose these equations on $\mathbb R^n$, to be solved for $\tilde g$ and $\widetilde K$. These are underdetermined, as they impose only $n+2$ conditions (including $\tr_{\tilde g}(\widetilde K) = 0$) on the two symmetric (0,2)-tensors $\tilde g$ and $\widetilde K$. To determine a parameterized set of solutions, one must supplement these conditions, effectively specifying part of $\tilde g$ and $\widetilde K$ and leaving the rest of their content to be determined by the constraints. 

A historically powerful approach in the analysis of these equations, known as the {\it conformal method} \cite{bruhat-york,lichnerowicz1944}, is to specify the conformal class of $\tilde g$, amounting to stipulating an arbitrary ``unphysical" metric $g$, as well as a traceless and symmetric $(0,2)$-tensor $A_{ab}$  (the {\it conformal velocity}) and a scalar function $N = 1+\delta N : \mathbb R^n \to \mathbb R_+$ (the {\it lapse function}). Once this {\it seed data} comprised of  $g$, $A_{ab}$, and $\delta N$ has been specified on $\Reals^n$, one sets
\[
K_{ab} := \frac{1}{2N}\left( A_{ab} - (\mathbb L_g W)_{ab} \right)
\]
and $q_n := \frac{2n}{n-2}$ and proceeds to solve the PDE system
\begin{align}
-(q_n+2)\Delta_g \varphi + R(g) \varphi & = \varphi^{-q_n-1} |K|_g^2, \label{eqn:lichnerowicz}\\
\text{div}_g\left( \frac{1}{2N} (\mathbb L_g W)_{ab} \right) & = \text{div}_g\left( \frac{1}{2N}  A_{ab} \right) \label{eqn:conformalmomentum}
\end{align}
for the vector field $W^a$ and the positive scalar field $\varphi$. 
Here, $\mathbb L_g$ is the conformal Killing operator, which is defined to be the following symmetrized and traceless covariant derivative: 
\[
\mathbb (\mathbb L_g W)_{ab} := \nabla_a W_b + \nabla_b W_a - \frac{2}{n} g_{ab} \nabla_c W^c.
\]
Once $W^a$ and $\varphi$ have been obtained by solving equations \eqref{eqn:lichnerowicz} and \eqref{eqn:conformalmomentum}, one sets 
\begin{align}
\tilde g & := \varphi^{q_n-2} g, \notag
\\ \widetilde K_{ab} & := \varphi^{-2} K_{ab}. \notag
\end{align}
Using the conformal divergence identity $\text{div}_{\tilde g}(\varphi^{-2} K) = \varphi^{-q_n} \text{div}_{g}(K)$ for a traceless and symmetric $(0,2)$-tensor $K_{ab}$ as well as the well-known conformal transformation of scalar curvature, one readily confirms that this choice of the physical metric $\tilde g$ and the second fundamental form $\widetilde K_{ab}$ solves the constraint equations \eqref{eqn:hamiltonian}-\eqref{eqn:momentum}.

A key feature of the conformal method for maximal data is that the conformal constraint equations \eqref{eqn:lichnerowicz}-\eqref{eqn:conformalmomentum} are semi-decoupled in the sense that \eqref{eqn:conformalmomentum} can be solved for $W^a$ independently of $\varphi$, and then equation \eqref{eqn:lichnerowicz}– the Lichnerowicz equation– can be solved for $\varphi$. We note that in the Einstein vacuum case discussed here, these equations admit unique solutions for each choice of the seed data in appropriate function spaces (discussed below) \cite{maxwell_solutions_2005}.

We are interested in prescribing the falloff rates of $\tilde g$ and $\widetilde K$ by means of building them into the seed data $g$, $A_{ab}$, and $\delta N$. To leverage analytical control over the decay rates of these quantities, we operate in weighted Sobolev spaces, which we now define:

\begin{definition}\label{def:weighted-basic}
    Let $k\in \Ints_{\ge 0}$, $1<p<\infty$ and $\delta\in \Reals$.
    The \textbf{weighted Sobolev space} $W^{k,p}_\delta$ consists of the functions 
    $u:\Reals^n\to\Reals$ such that 
\[
    ||u||_{W^{k,p}_\delta}^p := \sum_{|\alpha|\le k}\int_{\Reals^n} \left<x\right>^{-n - p\delta} |\partial_{\alpha} u|^p\; dx < \infty,
\]
where $\left<x\right> = \sqrt{1+|x|^2}$.
The analogous norm for tensors on $\Reals^n$ is defined 
component-wise, and we use the same notation $W^{k,p}_\delta$ for spaces of tensor-valued functions,
leaving the tensor bundle implicit. We also introduce the following notation:
\[ W^{k,p}_{\delta^+} := \bigcap_{\epsilon > 0} W^{k,p}_{\delta+\epsilon}, \qquad \qquad W^{k,p}_{\delta^-} := \bigcup_{\epsilon > 0} W^{k,p}_{\delta-\epsilon} \]
\end{definition}

Intuitively (and literally, provided $k > n/p$), the role of the weight $\delta$ is to impose that these spaces are comprised of functions that decay more quickly than $|x|^\delta$ at infinity, being $o(|x|^\delta)$, with control of (distributional) derivatives up to order $k$. The ``plus" space $W^{k,p}_{\delta^+}$ further allows terms decaying at exactly the rate $|x|^\delta$, as well as for log terms such as $\left<x\right>^\delta \log(\left<x\right>)$, and the ``minus" space $W^{k,p}_{\delta^-}$ ensures one has some power law decay strictly faster than $|x|^\delta$. A weight $\delta$ is called {\it exceptional} if it is an integer and either $\delta\le 2-n$ or $\delta\ge0$, and {\it non-exceptional} otherwise. Exceptional weights are precisely the possible growth rates of homogeneous harmonic functions on flat space-- these act as barriers at which the analytical properties of our elliptic operators acting on these spaces, such as their kernel and cokernel dimensions, may change. Extension of Definition \ref{def:weighted-basic} to negative values of $k \in \mathbb Z$ is necessary for complete coverage of the results obtained herein, but such technicalities are reserved for discussion in the appendix: this classical definition is sufficient to parse the main body of this work. We now define asymptotically Euclidean metrics.

\begin{definition} A metric $g_{ab}\in W^{k,p}_{\mathrm{loc}}$ 
    with $1<p<\infty$ and $k\in \Nats$ satisfying $k>n/p$ is \textbf{asymptotically Euclidean 
    of class} $W^{k,p}_{\tau}$ for some $\tau<0$ if
    \[
    g_{ab}- \delta_{ab} \in W^{k,p}_{\tau}.
    \] 
\end{definition}

We note that the condition $k>n/p$ ensures that $g_{ab}$ is H\"older continuous
and, since $\tau < 0$, that it converges uniformly to $\delta_{ab}$ as $x\to\infty$. In keeping with this definition, we work with metrics and lapses in the following category:

\begin{assumption}\label{assumption:main}
    The metric $g_{ab}$ and the lapse $N$ satisfy the following two conditions for some $\tau<0$,  $k\in\Nats$, and $1<p<\infty$ with $k > 1+n/p$:
    \begin{itemize}
        \item $g_{ab}$ is asymptotically Euclidean of class $W^{k,p}_{\tau}$.
        \item $N$ is a positive function with $\delta N = N-1 \in W^{k,p}_{\tau}$
    \end{itemize}
\end{assumption}

We remark that the core results of this work are true under the relaxed hypothesis $k > n/p$, provided that one additionally assumes that $g$ admits no nontrivial conformal Killing fields (vector fields in the kernel of $\ck$) of class $W^{k,p}_{\delta}$ for any $\delta < 0$, i.e.\@ which vanish at infinity\footnote{It is a reasonable conjecture that this additional assumption is implied by $k > n/p$, but a proof remains elusive.}. The following essential multiplication lemma is an immediate consequence of the more general Lemma \ref{lem:mult} discussed in Appendix \ref{app:vec-fredholm}, and we note that it suffices for all of our work here.
\begin{lemma}\label{lem:mult-basic}
	Suppose $1<p<\infty$, $k_1,k_2,j\in\Ints_{\geq 0}$ and $\delta_1, \delta_2 \in\Reals$.
Pointwise multiplication of functions in $C^\infty_c(\mathbb R^n)$  extends to a continuous bilinear map 
$W^{k_1,p}_{\delta_1}\times W^{k_2,p}_{\delta_2}
\rightarrow W^{j,p}_{\delta_1 + \delta_2}$ if
\begin{align}
    j &\le \min(k_1,k_2), \notag \\
%\frac{2}{p} -\frac{k_1+k_2}{n} &< \frac{1}{p} - \frac{j}{n}. \\
    j & < k_1+k_2 - \frac{n}{p}. \notag
\end{align}
In particular, we may choose the optimal value $j = \min(k_1,k_2)$ so long as $\max(k_1, k_2) > n/p$.
\end{lemma}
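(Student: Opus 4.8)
The plan is to establish the bilinear estimate directly for $u,v\in C^\infty_c(\mathbb R^n)$ and then pass to the claimed continuous extension using that $C^\infty_c(\mathbb R^n)$ is dense in every $W^{k,p}_\delta$ with $1<p<\infty$. For the estimate I would expand by the Leibniz rule and then apply a weighted H\"older inequality, which decouples the problem into a weighted Sobolev embedding applied to each factor separately. Writing $\partial^\beta(uv)=\sum_{\gamma\le\beta}\binom{\beta}{\gamma}\,\partial^\gamma u\,\partial^{\beta-\gamma}v$ for $|\beta|\le j$ and using $(\sum a_i^p)^{1/p}\le\sum a_i$ for nonnegative $a_i$, one gets
\[
\|uv\|_{W^{j,p}_{\delta_1+\delta_2}}\ \le\ C\sum_{|\beta|\le j}\ \sum_{\gamma\le\beta}\ \bigl\|\,\langle x\rangle^{-n/p-(\delta_1+\delta_2)}\,\partial^\gamma u\ \partial^{\beta-\gamma}v\,\bigr\|_{L^p(\mathbb R^n)}.
\]
For any exponents with $\tfrac1{p_1}+\tfrac1{p_2}=\tfrac1p$ the algebraic identity $\langle x\rangle^{-n/p-(\delta_1+\delta_2)}=\langle x\rangle^{-n/p_1-\delta_1}\,\langle x\rangle^{-n/p_2-\delta_2}$ together with H\"older bounds each summand by $\|\partial^\gamma u\|_{W^{0,p_1}_{\delta_1}}\,\|\partial^{\beta-\gamma}v\|_{W^{0,p_2}_{\delta_2}}$, since $\|w\|_{W^{0,q}_\delta}=\|\langle x\rangle^{-n/q-\delta}w\|_{L^q}$.

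It then remains to bound $\|\partial^\gamma u\|_{W^{0,p_1}_{\delta_1}}$ by $\|u\|_{W^{k_1,p}_{\delta_1}}$, and symmetrically for $v$. In the weight convention of Definition \ref{def:weighted-basic} each differentiation trivially costs only one unit of regularity, $\partial^\gamma u\in W^{k_1-|\gamma|,p}_{\delta_1}$, so I would invoke the weighted Sobolev embedding $W^{m,p}_\delta\hookrightarrow W^{0,q}_\delta$, valid for $m\ge 0$, $q\ge p$, $\tfrac1q\ge\tfrac1p-\tfrac mn$ with the last inequality taken strictly, together with the weighted Morrey embedding $W^{m,p}_\delta\hookrightarrow W^{0,\infty}_\delta$ when $m>n/p$. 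Applying these with $(m,q)=(k_1-|\gamma|,p_1)$ and $(m,q)=(k_2-|\beta-\gamma|,p_2)$ requires a splitting $p_1,p_2$ meeting both conditions at once; summing the two conditions forces $\tfrac1p\ge\tfrac2p-\tfrac{k_1+k_2-|\beta|}{n}$, i.e.\ $|\beta|\le k_1+k_2-n/p$, while the hypothesis $j\le\min(k_1,k_2)$ is exactly what guarantees $k_1-|\gamma|\ge 0$ and $k_2-|\beta-\gamma|\ge 0$ for every $\gamma\le\beta$ with $|\beta|\le j$. As only finitely many multi-indices appear, summing gives $\|uv\|_{W^{j,p}_{\delta_1+\delta_2}}\le C\,\|u\|_{W^{k_1,p}_{\delta_1}}\|v\|_{W^{k_2,p}_{\delta_2}}$ on $C^\infty_c$; the ``in particular'' clause is just the case $j=\min(k_1,k_2)$, for which $j<k_1+k_2-n/p$ becomes $\max(k_1,k_2)>n/p$.

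The step I expect to be the real work is the endpoint bookkeeping needed to choose $p_1,p_2$ legitimately, which is where the \emph{strict} inequality $j<k_1+k_2-n/p$ enters. One splits into cases: if $k_1-|\gamma|>n/p$ (or symmetrically $k_2-|\beta-\gamma|>n/p$) that factor goes into a weighted $L^\infty$ via Morrey and the other into weighted $L^p$; otherwise both remaining regularities are $\le n/p$ but their sum $k_1+k_2-|\beta|$ strictly exceeds $n/p$ because $|\beta|\le j<k_1+k_2-n/p$, so there is positive room which can be distributed between $\tfrac1{p_1}$ and $\tfrac1{p_2}$ while keeping both strictly above their critical values and strictly inside $(0,\tfrac1p)$, so that no factor sits at a failing Sobolev exponent. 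An equivalent and more ``geometric'' route, following \cite{bartnik1}, is the dyadic rescaling argument: decompose $\mathbb R^n$ into dyadic annuli, rescale each to a fixed annulus, apply the classical unweighted Sobolev product estimate there with uniform constants, and reassemble, using that the target weight $\delta_1+\delta_2$ makes the powers of the dilation factor split as a product so that $\sum_j a_jb_j\le(\sup_j a_j)(\sum_j b_j)$ reconstitutes $\|u\|_{W^{k_1,p}_{\delta_1}}^p\|v\|_{W^{k_2,p}_{\delta_2}}^p$; with the non-improving weights of Definition \ref{def:weighted-basic} this requires tracking the extra dilation powers carried by the derivatives, which is why I prefer the Leibniz--H\"older argument. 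Within the paper itself the shortest route is of course to observe that this is precisely the subrange $k_1,k_2,j\ge 0$ of the more general Lemma \ref{lem:mult} proved in Appendix \ref{app:vec-fredholm}.
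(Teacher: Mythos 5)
Your proposal is correct, and its main line of argument is genuinely different from the paper's. The paper does not prove Lemma \ref{lem:mult-basic} directly at all: it simply observes that the statement is the special case $p_1=p_2=q=p$, $k_1,k_2,j\ge 0$ of the general multiplication Lemma \ref{lem:mult} in Appendix \ref{app:vec-fredholm}, whose own proof is in turn deferred to Bartnik's dyadic rescaling technique combined with the unweighted Sobolev product estimates of Holst--Maxwell--Tsogtgerel. (Your translation of the hypotheses of Lemma \ref{lem:mult} into the two conditions $j\le\min(k_1,k_2)$ and $j<k_1+k_2-n/p$ is the correct bookkeeping for that specialization, and you correctly note this shortest route at the end.) What you supply instead is a self-contained Leibniz--H\"older--embedding proof: the weight identity $\langle x\rangle^{-n/p-(\delta_1+\delta_2)}=\langle x\rangle^{-n/p_1-\delta_1}\langle x\rangle^{-n/p_2-\delta_2}$ under $1/p_1+1/p_2=1/p$ is exactly what makes the weighted H\"older step decouple, and your endpoint case analysis (Morrey on the supercritical factor versus distributing the strict surplus $k_1+k_2-|\beta|-n/p>0$ between $1/p_1$ and $1/p_2$) is where the strict hypothesis is genuinely used; the corner case $|\gamma|=k_1$, which forces $p_1=p$ and $p_2=\infty$, is covered because $j<k_1+k_2-n/p$ then yields $k_2>n/p$. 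The trade-off is the usual one: the paper's route is short but outsources all the analysis to cited scaling machinery valid also for negative differentiability orders, while yours is longer but elementary, restricted to $k_1,k_2,j\ge 0$, and relies only on the standard weighted Sobolev and Morrey embeddings (which you should cite, e.g.\ Theorem 1.2 of \cite{bartnik1}, rather than reprove). One small imprecision: as stated, your embedding $W^{m,p}_\delta\hookrightarrow W^{0,q}_\delta$ with the inequality taken strictly excludes the trivial case $m=0$, $q=p$, which you implicitly use; this costs nothing but should be phrased as ``strict unless $q=p$.''
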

\qed

\section{The Momentum Constraint}
\label{sec:momentum}

The asymptotics of a second fundamental form constructed using the conformal method follow from the mapping properties of the 
\textbf{lapse-weighted vector Laplacian} $P_{g,N} := \Div_g (\frac{1}{2N} \ck)$ appearing in equation \eqref{eqn:conformalmomentum} along with 
the asymptotic structure of its source term, the conformal velocity $A_{ab}$.
Indeed, setting $Z_a := \nabla^b \left( \frac{1}{2N} A_{ab} \right)$, equation (\ref{eqn:conformalmomentum}) becomes the inhomogenenous linear equation
\be \label{eqn:pmomentum}
(P_{g,N} W)_a = Z_a
\ee
which is the main focus of this section.

A routine computation using Lemma \ref{lem:mult-basic}, under Assumption \ref{assumption:main}, yields that for a smooth vector field $X^a$ one has
\be \label{eqn:pdiff}
(P_{g,N} X)_a = (\overline P X)_a + \sum_{|\alpha|\le 2} B_{ab}^\alpha \partial_\alpha X^b,
\ee
where $\overline P := P_{\overline g, 1} = \frac{1}{2} \overline \Div (\overline \ck)$ is the Euclidean vector Laplacian (note that we have an extra factor of $1/2$ relative to some other works),
associated with the Euclidean metric $\overline g_{ab}=\delta_{ab}$, and where each $B_{ab}^{\alpha} \in W^{k-2+|\alpha|,p}_{\tau-2+|\alpha|}$ depends on $g$ and $N$. Since $\tau-2+|\alpha|<0$ for each of these coefficents, the mapping properties of $P_{g,N}$ can be deduced from those of $\overline P$. Appendix \ref{app:vec-fredholm} contains these details, which imply the following:

\begin{proposition} \label{prop:vec-lap-properties}
    Suppose that $g_{ab}$ and $N$ satisfy Assumption \ref{assumption:main} with parameters $k$, $p$
    and $\tau$.  Then for any $\delta\in\Reals$, $P_{g,N}: W^{k,p}_\delta \to W^{k-2,p}_{\delta-2}$
    is continuous. If $\delta$ is non-exceptional, then
    \begin{enumerate}[(a)]
    \item $P_{g,N}$ acting between these spaces is Fredholm, \label{part:fredholm}
    \item its Fredholm index equals that of the Euclidean vector Laplacian $\overline{P}$
    acting between the same spaces, \label{part:index}
    \item if $\delta<0$ then its kernel is trivial,\label{part:no-kernel}
    \item if $\delta>2-n$ then it is surjective,\label{part:no-cokernel}
    \item if $2-n<\delta<0$ then it is an isomorphism,\label{part:iso}
    \item given $V_a\in W^{k-2,p}_{\delta-2}$,
    the equation 
    \[
        (P_{g,N} X)_a = V_a
    \]
    is solvable for $X^a \in W^{k,p}_\delta$ if and only if
    $\int_{\Reals^n} V^a k_a \, dV_g = 0$ for all vector fields $k^a$ in the kernel of
    $P_{g,N}$ acting on $W^{k,p}_{2-n-\delta}$.\label{part:P-solvable}
    \end{enumerate}
\end{proposition}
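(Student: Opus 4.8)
The proposition is a standard-looking package of Fredholm/elliptic facts for a perturbation $P_{g,N}$ of the Euclidean vector Laplacian $\overline P$ on weighted Sobolev spaces, so my strategy is to establish everything first for $\overline P$ itself and then transfer to $P_{g,N}$ via the fact (equation~\eqref{eqn:pdiff}) that $P_{g,N} - \overline P$ has coefficients in spaces $W^{k-2+|\alpha|,p}_{\tau-2+|\alpha|}$ with strictly negative weight $\tau-2+|\alpha|<0$, hence the difference is a compact operator $W^{k,p}_\delta \to W^{k-2,p}_{\delta-2}$. Continuity of $P_{g,N}$ between the stated spaces is immediate from Assumption~\ref{assumption:main} and Lemma~\ref{lem:mult-basic} (this is essentially the content of the routine computation giving \eqref{eqn:pdiff}), so the real work is parts (a)--(f). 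I would invoke the self-contained vector-Laplacian Fredholm theory promised in Appendix~\ref{app:vec-fredholm}: for non-exceptional $\delta$, $\overline P: W^{k,p}_\delta\to W^{k-2,p}_{\delta-2}$ is Fredholm, which gives part~(a) for $\overline P$; adding the compact perturbation preserves the Fredholm property and the index (Fredholm index is stable under compact perturbation), giving parts~(a) and~(b) at once.

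For parts~(c), (d), (e) I would argue as follows. Part~(c): if $\delta<0$ and $X\in W^{k,p}_\delta$ satisfies $P_{g,N}X=0$, then integration by parts against $X$ itself — legitimate because $X$ decays at infinity and the boundary terms on large spheres vanish since $\delta<0$ — gives $\int \frac{1}{2N}|\ck X|_g^2\,dV_g = 0$, so $\ck X = 0$, i.e.\ $X$ is a conformal Killing field of $g$ vanishing at infinity; Assumption~\ref{assumption:main} (via $k>1+n/p$, per the footnote) rules these out, so $X=0$. Part~(d): surjectivity for $\delta>2-n$ follows from part~(c) applied at the dual weight. Indeed the cokernel of $P_{g,N}$ on $W^{k,p}_\delta\to W^{k-2,p}_{\delta-2}$ is identified (via the $L^2$ pairing and the formal self-adjointness of $\Div_g(\frac{1}{2N}\ck\cdot)$, which is symmetric because $\ck$ is the $L^2$-adjoint of $-\Div_g$ up to the density $\frac{1}{2N}$) with the kernel of $P_{g,N}$ on the dual space $W^{k',p'}_{2-n-\delta}$; when $\delta>2-n$ the dual weight $2-n-\delta<0$, so part~(c) forces that kernel to be trivial. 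Part~(e) is then just the conjunction of (c) and (d) on the overlap $2-n<\delta<0$. Part~(f) is the explicit Fredholm alternative: solvability of $P_{g,N}X=V$ holds iff $V$ annihilates the cokernel, and by the duality identification just described the cokernel is spanned by kernel elements of $P_{g,N}$ at weight $2-n-\delta$, with the pairing $\int_{\Reals^n} V^a k_a\,dV_g$ (note this integral converges: $V\in W^{k-2,p}_{\delta-2}$ and $k\in W^{k,p}_{2-n-\delta}$, and $(\delta-2)+(2-n-\delta) = -n$, exactly the borderline for pairing against a measure of mass $n$, so one needs a mild argument — e.g.\ the improved decay $k\in W^{k,p}_{(2-n-\delta)^-}$ away from exceptional weights — to make the pairing absolutely convergent, or one defines it as a distributional pairing).

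**Main obstacle.**
The technical heart is getting the Fredholm theory and the index computation for the \emph{vector} Laplacian $\overline P$ right on weighted Sobolev spaces — this is the analogue for systems of Bartnik's scalar theory \cite{bartnik1}, and it is exactly what Appendix~\ref{app:vec-fredholm} is built to supply, so in the main text I would simply cite it. Granting that, the one genuinely delicate point in the proof of the proposition itself is the duality/cokernel identification underlying parts~(d) and~(f): one must check that the formal adjoint of $P_{g,N}=\Div_g(\frac{1}{2N}\ck\cdot)$ with respect to the metric volume $dV_g$ is $P_{g,N}$ again (true because $\ck = \ck^*$ in the appropriate sense and $\frac{1}{2N}$ is a positive scalar), and that the abstract cokernel of a Fredholm map between these reflexive weighted spaces is represented by the kernel of the adjoint acting on the topologically dual space $W^{-(k-2),p'}_{-n-(\delta-2)} = W^{2-k,p'}_{2-n-\delta}$, with the subsequent elliptic-regularity bootstrap showing those dual-space kernel elements are in fact smooth (or at least in $W^{k,p}_{2-n-\delta}$). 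Elliptic regularity for $P_{g,N}$ — needed both here and for part~(c) to make the integration by parts rigorous — again leans on the appendix. Everything else (integration by parts with vanishing boundary terms, Fredholm-index stability under compact perturbation, the Fredholm alternative) is standard.
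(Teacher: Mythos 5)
Your treatment of parts (c)--(f) essentially tracks the paper's: (c) reduces to the nonexistence of conformal Killing fields vanishing at infinity (the paper cites Theorem 6.4 of \cite{maxwell_solutions_2005} for exactly this, under $k>1+n/p$); (d) and (f) come from the self-adjointness of $P_{g,N}$ with respect to $dV_g$ and the identification of the cokernel with the kernel at the dual weight $2-n-\delta$ (Lemmas \ref{lem:dual} and \ref{lem:self-adjoint} and Proposition \ref{prop:fredholm} in the appendix, plus Lemma \ref{lem:kernel} to promote dual-space kernel elements from $W^{2-k,p'}_{2-n-\delta}$ to $W^{k,p}_{2-n-\delta}$); and (e) is the conjunction of (c) and (d). Two caveats there: the energy identity you invoke for (c) requires the boundary flux $\int_{\partial B_R}\frac{1}{2N}(\ck X)_{ab}X^b\nu^a\,dA = O(R^{n-2+2\delta})$ to vanish, which fails for $1-n/2\le\delta<0$ unless you first improve the decay of the kernel element (the decay-improvement in Lemma \ref{lem:kernel} supplies this); and the convergence of the pairing in (f), which you correctly flag, is resolved the same way.

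The genuine gap is in your route to (a) and (b). The difference $P_{g,N}-\overline P=\sum_{|\alpha|\le 2}B^\alpha_{ab}\partial_\alpha$ is \emph{not} a compact operator $W^{k,p}_\delta\to W^{k-2,p}_{\delta-2}$: its principal part $B^{(2)}\partial^2$ does not vanish (the two operators have different principal symbols whenever $g\neq\overline g$), and a second-order operator with merely decaying coefficients gains no derivatives, so its image lies in $W^{k-2,p}_{\delta-2+\tau}$, whose inclusion into $W^{k-2,p}_{\delta-2}$ at the \emph{same} differentiability is continuous but not compact. Concretely, for $a$ smooth, compactly supported and not identically zero, the sequence $u_m=m^{-2}\cos(mx_1)\chi$ with $\chi$ a fixed bump is bounded in $W^{k,p}_\delta$, while $a\,\partial_1^2u_m$ converges weakly but not strongly to zero in $W^{k-2,p}_{\delta-2}$. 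This is precisely why the appendix does not argue by compact perturbation: Fredholmness of $P_{g,N}$ is obtained from the coercivity estimate of Proposition \ref{prop:semifred}, where the perturbation is absorbed using the smallness of $\chi_{R/2}B^\alpha$ in operator norm for large $R$ together with interpolation on the compact core, and the index equality (b) is obtained from local constancy of the Fredholm index along the continuous path $P_{g_t,N_t}$ with $g_t=(1-t)\overline g+tg$ and $N_t=(1-t)+tN$, not from compact-perturbation stability. To repair your proof, either reproduce that absorption-plus-homotopy argument or cite Proposition \ref{prop:fredholm} for $P_{g,N}$ itself, as the paper does; citing the appendix only for $\overline P$ and transferring by compactness does not work.
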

\begin{proof}
The continuity of $P_{g,N}$ follows from Lemma \ref{lem:mult-basic},
and parts (\ref{part:fredholm}), (\ref{part:index}), and (\ref{part:P-solvable})
follow from Proposition \ref{prop:fredholm}.  Since $k>n/p+1$, Theorem 6.4 of \cite{maxwell_solutions_2005} implies 
that $g_{ab}$ admits no nontrivial conformal Killing fields in $W^{k,p}_{\delta}$ for 
any $\delta<0$, which is equivalent to part (\ref{part:no-kernel}). Part (\ref{part:no-cokernel})
now follows from parts (\ref{part:no-kernel}) and (\ref{part:P-solvable}),
and part (\ref{part:iso}) is immediate from parts (\ref{part:no-kernel}) and (\ref{part:no-cokernel}).
\\
\end{proof}

The isomorphism range $2-n< \delta < 0$ appearing Proposition \ref{prop:vec-lap-properties} is well studied in the literature and 
is a routine tool used to find solutions of the momentum constraint (e.g. \cite{bruhat-isenberg}, \cite{maxwell_rough_2006}).
Understanding the asymptotics of these solutions, however, requires a finer analysis of the next decay range $1-n< \delta < 2-n$.  
Part \eqref{part:P-solvable} relates solvability in this range to elements of the kernel of $P_{g,N}$ that are bounded at infinity, and we have the following characterization
of this kernel:

\begin{proposition}\label{prop:kernel}
    Suppose that $g_{ab}$ and $N$ satisfy Assumption \ref{assumption:main} with parameters $k$, $p$
    and $\tau$.  If $0<\delta<1$, then the kernel of $P_{g,N}$ acting on $W^{k,p}_\delta$ 
    has dimension $n$.
    Moreover, there is a basis for this kernel consisting of vector fields 
    $k_{(i)}^a$, $i=1,\ldots,n$ satisfying\footnote{We use parenthetical indices to denote simple labels, 
    distinguished from the usual tensorial indices.}
    \[
    k_{(i)}^a - e_{(i)}^a \in W^{k,p}_\tau \cup W^{k,p}_{(2-n)^+}
    \]
    where $e_{(i)}$ denotes the $i^{\text{th}}$ standard coordinate basis vector on $\mathbb R^n$.
\end{proposition}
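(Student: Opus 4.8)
The plan is to read off the dimension of the kernel from the Fredholm theory of Proposition \ref{prop:vec-lap-properties} and then to exhibit an explicit basis by correcting the constant vector fields $e_{(i)}$. Since $n\ge 3$, every $\delta\in(0,1)$ satisfies $\delta>2-n$ and is non-exceptional (the interval $(0,1)$ contains no integers), so Proposition \ref{prop:vec-lap-properties} applies both to $P_{g,N}$ and to $\overline P = P_{\overline g,1}$ (which corresponds to the trivial choice $g=\delta$, $N=1$, vacuously satisfying Assumption \ref{assumption:main}): both are Fredholm of equal index on $W^{k,p}_\delta\to W^{k-2,p}_{\delta-2}$ and both are surjective there. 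Hence $\dim\ker P_{g,N}|_{W^{k,p}_\delta}=\mathrm{ind}(\overline P)=\dim\ker\overline P|_{W^{k,p}_\delta}$, and it remains to check that this last kernel is exactly the $n$-dimensional span of the constants. One inclusion is clear since $\overline{\ck}\,e_{(i)}=0$. For the converse I would use the identity $(\overline P X)_b=\tfrac12\bigl(\Delta X_b+\tfrac{n-2}{n}\partial_b\Div X\bigr)$: taking a divergence shows that $\Div X$ is a globally (distributionally) harmonic function that is $o(|x|^{\delta-1})$, hence identically zero since $\delta<1$; then each component $X_b$ is harmonic and $o(|x|)$, hence constant by the standard gradient-estimate Liouville argument. (Alternatively, one may simply cite the index of $\overline P$ computed in Appendix \ref{app:vec-fredholm}.) This yields $\dim\ker P_{g,N}|_{W^{k,p}_\delta}=n$.

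For the basis, observe that each $e_{(i)}$ is constant, so \eqref{eqn:pdiff} gives $P_{g,N}e_{(i)}=B^{0}_{ab}e_{(i)}^{b}=:f_{(i)}\in W^{k-2,p}_{\tau-2}$. It then suffices to solve $P_{g,N}Y_{(i)}=-f_{(i)}$ with $Y_{(i)}\in W^{k,p}_{\tau}\cup W^{k,p}_{(2-n)^+}$, for then $k_{(i)}:=e_{(i)}+Y_{(i)}$ lies in $\ker P_{g,N}$ (and in $W^{k,p}_\delta$ for every $\delta\in(0,1)$, since $e_{(i)}\in W^{k,p}_{0^+}$ and $Y_{(i)}$ decays), with $k_{(i)}-e_{(i)}=Y_{(i)}$ of the claimed type. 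I would split into two cases. If $2-n<\tau<0$, then $\tau$ is non-exceptional (the open interval $(2-n,0)$ contains no exceptional weights), so by Proposition \ref{prop:vec-lap-properties}(\ref{part:iso}) the operator $P_{g,N}:W^{k,p}_{\tau}\to W^{k-2,p}_{\tau-2}$ is an isomorphism and we set $Y_{(i)}:=-P_{g,N}^{-1}f_{(i)}\in W^{k,p}_{\tau}$. If instead $\tau\le 2-n$, then $f_{(i)}\in W^{k-2,p}_{\tau-2}\subseteq W^{k-2,p}_{(2-n+\epsilon)-2}$ for every $\epsilon>0$, so surjectivity (Proposition \ref{prop:vec-lap-properties}(\ref{part:no-cokernel})) provides $Y_{(i),\epsilon}\in W^{k,p}_{2-n+\epsilon}$ solving the equation; when $0<\epsilon<n-2$ the weight $2-n+\epsilon$ is negative, so triviality of the kernel there (Proposition \ref{prop:vec-lap-properties}(\ref{part:no-kernel})) forces all of the $Y_{(i),\epsilon}$ to coincide, and their common value lies in $\bigcap_{0<\epsilon<n-2}W^{k,p}_{2-n+\epsilon}=W^{k,p}_{(2-n)^+}$. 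Finally, linear independence of $\{k_{(i)}\}_{i=1}^n$ is immediate: a relation $\sum_i c_i k_{(i)}=0$ forces $\sum_i c_i e_{(i)}=-\sum_i c_i Y_{(i)}$, whose right-hand side vanishes at infinity, so all $c_i=0$; thus the $n$ vector fields $k_{(i)}$ form a basis of the $n$-dimensional kernel.

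I expect the only genuinely delicate points to be the identification of $\ker\overline P$ in the sublinear-growth range $0<\delta<1$ with the span of the constants — that is, the absence of spurious flat kernel elements as the weight crosses the exceptional value $0$ — and, in the case $\tau\le 2-n$, the bookkeeping that places the correction $Y_{(i)}$ precisely in the ``plus'' space $W^{k,p}_{(2-n)^+}$ rather than a strictly larger one. Both are short given Proposition \ref{prop:vec-lap-properties}: the first via the divergence-and-Liouville argument above (where it is essential that the solutions are globally distributional, so that the singular homogeneous harmonics on $\Reals^n\setminus\{0\}$ are excluded), and the second via uniqueness of solutions in weighted spaces of negative weight.
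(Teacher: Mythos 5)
Your proposal is correct and follows essentially the same route as the paper: compare Fredholm indices of $P_{g,N}$ and $\overline P$ on $W^{k,p}_\delta$ using surjectivity for $\delta>2-n$, identify $\ker\overline P$ with the constants (the paper simply cites Proposition \ref{prop:flatcase}; your divergence-plus-Liouville argument is a valid self-contained alternative), and then correct each $e_{(i)}$ by solving $P_{g,N}Y_{(i)}=-P_{g,N}e_{(i)}$ in the isomorphism range. Your explicit case split on $\tau$ versus the paper's single choice of $\eta\ge\tau$ with $2-n<\eta<0$, together with uniqueness to land in $W^{k,p}_\tau\cup W^{k,p}_{(2-n)^+}$, are the same argument in slightly different packaging.
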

\begin{proof}
    Proposition \ref{prop:vec-lap-properties} implies that the maps $P_{g,N}$ and $\overline P$ acting on $W^{k,p}_\delta$ 
    are both Fredholm with the same index, and that they are both surjective (since $\delta>2-n$). The dimensions of their kernels therefore agree.
    Since Proposition \ref{prop:flatcase} indicates that the kernel of $\overline P$
    consists only of polynomials, which must be constant since $0<\delta<1$, 
    it follows that $\{e_{(i)}^a\}_{i=1}^n$ comprises a basis of this kernel, and
    it has dimension $n$.

    Equation (\ref{eqn:pdiff}) and Lemma \ref{lem:mult-basic} yield
    \[
    (P_{g,N} e_{(i)})_a = ((P_{g,N}-\overline{P}) e_{(i)})_a \in W^{k-2,p}_{\tau-2}.
    \]
    We pick $\eta$ satisfying $\eta\ge \tau$ and $2-n<\eta<0$.
    Since $W^{k-2,p}_{\tau-2} \subset W^{k-2,p}_{\eta-2}$,
    and since $P_{g,N}$ acting on $W^{k,p}_{\eta}$ is an isomorphism, 
    we can find $U_{(i)}^a\in W^{k,p}_{\eta}$ satisfying 
    \[
    (P_{g,N} U_{(i)})_a = - (P_{g,N} e_{(i)})_a.
    \]
    Uniqueness of the solution implies that the choice of $\eta$ 
    is inessential and $U_{(i)}^a\in W^{k,p}_{\tau} \cup W^{k,p}_{(2-n)^+}$.
    
    By construction $k_{(i)}^a := e_{(i)}^a + U_{(i)}^a \in \ker P_{g,N}$. Moreover,
    since the perturbations $U_{(i)}^a$ vanish at infinity,
    it follows that the $n$ vector fields $k_{(i)}^a$ are linearly independent, forming a basis for $\ker P_{g,N}$.  
    \\
\end{proof}

Now suppose $Z^a \in W^{k-2,p}_{\delta-2}$ with fast decay
$1-n<\delta<2-n$.  Although Proposition \ref{prop:vec-lap-properties}
ensures equation \eqref{eqn:pmomentum} can be solved 
for $W^a\in W^{k,p}_{(2-n)^+}$, the solution 
does not have faster decay in general.  Instead, 
Proposition \ref{prop:vec-lap-properties}(\ref{part:P-solvable}) indicates that 
$W^a\in W^{k,p}_\delta$ only if each of the integrals
\be \label{eqn:momentumints}
\mathcal O_{i}(Z_a) := \int_{\mathbb R^n} Z_a k_{(i)}^a dV_g
\ee
vanishes.  In fact, these obstructions turn out to be closely related to the ADM momentum, which we now examine.

Consider a $C^1$ symmetric, traceless tensor $S_{ab}$ with 
$O(r^{1-n})$ growth. Assuming units with $8\pi G=1$ 
the ADM momentum of $S_{ab}$ has components
\be \label{eqn:admmomentum} 
\mathcal P_i(S_{ab}) := \lim_{R\to\infty} \int_{\partial B_R} S_{ab} n^a e_{(i)}^b\; d\overline{A},
\ee
with $n^a := x^a/r$ the Euclidean radial unit vector, so long as the limit exists. We can express
this quantity equally well in terms of the metric $g_{ab}$, its area element $dA$, its normal
vector $\nu^a$ to the spheres, and the kernel basis $k^a_{(i)}$.  
Because $n^a-\nu^a$, $dA-d\overline{A}$ and $k^a_{(i)}-e^a_{(i)}$ each decays at the rate $O(r^\tau)$, we find
\be \label{eqn:admmomentum-2} 
\mathcal P_i(S_{ab}) := \lim_{R\to\infty} \int_{\partial B_R} S_{ab} \nu^a k_{(i)}^b\; dA.
\ee
The divergence theorem then yields
\begin{align*}
\int_{\partial B_R} S_{ab} \nu^a k_{(i)}^b\; dA & =
\int_{B_R} \nabla^a (S_{ab} k_{(i)}^b)\; dA \nonumber \\
& = 
\int_{B_R}  S_{ab} (\nabla^a k_{(i)}^b)\; dA
+ 
\int_{B_R} (\nabla^a S_{ab}) k_{(i)}^b\; dA.
\end{align*}
If $(\nabla^a S_{ab}) k_{(i)}^b$ and $S_{ab} (\nabla^a k_{(i)}^b)$ are
$L^1$ functions, we can take a limit in $R$, and we define
\begin{align*}
\mathcal G_i(S_{ab}) & := \int_{\Reals^n} S_{ab} \nabla^a k_{(i)}^b dV_g,\\
\mathcal R_i(S_{ab}) & := \int_{\Reals^n} (\nabla^a S_{ab}) k_{(i)}^b dV_g.
\end{align*}
Note that the quantity $\mathcal R_i$ defined here for symmetric, trace-free $(0,2)$-tensors
is related to the obstruction values $\mathcal O_i$ for covectors by
\begin{equation*}
\mathcal R_i( S_{ab} ) = \mathcal O_i( \nabla^a S_{ab}).
\end{equation*}
The integrability conditions for the integrands defining $\mathcal G_i(S_{ab})$ and $\mathcal R_i(S_{ab})$ ensure that the ADM momentum $\mathcal P_i$ is well defined if both $\mathcal G_i$ and $\mathcal R_i$ are, and we have
\begin{equation}
\mathcal P_i(S_{ab}) = \mathcal G_i(S_{ab}) + \mathcal R_i(S_{ab}).
\end{equation}
This decomposition depends on the specific kernel basis $k^a_{(i)}$.
In particular, since the operator $P_{g,N}$ depends on the lapse, so does the splitting. For reasons we explain below, we call $\mathcal G_i$
the \textbf{gravitational momentum}, whereas $\mathcal R_i$ is the \textbf{residual momentum}.
For the moment, it suffices to observe that if $S_{ab}$ is trace-free and satisfies the 
vacuum momentum constraint, then $\mathcal R_i(S_{ab})=0$ and hence $\mathcal P_i(S_{ab})=\mathcal G_i(S_{ab})$.

Now recall the vector field $W^a\in W^{k,p}_{(2-n)^+}$ introduced above solving 
$(P_{g,N} W)_a = Z_a$, where $Z^a\in W^{k-2,p}_{\delta-2}$ with $1-n < \delta < 2-n$.
We claim that the ADM momentum of $1/(2N) (\ck W)_{ab}$ is well defined and is determined 
precisely in terms of the obstruction coefficients of $Z^a$:
\begin{equation*}
\mathcal P_i\left( \frac{1}{2N} (\ck W)_{ab}\right) = \mathcal O_i(Z_b).
\end{equation*}
This identity is a consequence of the following lemma, which shows that
tensors of the form $1/(2N)(\ck W)_{ab}$ cannot carry gravitational momentum 
in this gauge: that is, $\mathcal G_i(1/(2N) (\ck W)_{ab}) =0$. Hence
\begin{equation}\label{eq:pi_is_oi}
\mathcal P_i\left( \frac{1}{2N} (\ck W)_{ab} \right) = 
\underbrace{\mathcal G_i\left( \frac{1}{2N} (\ck W)_{ab} \right)}_{=0} + 
\mathcal R_i\left( \frac{1}{2N} (\ck W)_{ab} \right) = \mathcal O_i( Z_b ),
\end{equation}
as claimed.

\begin{lemma}\label{lem:W-delta-zero} 
    Suppose that $g_{ab}$ and $N$ satisfy Assumption \ref{assumption:main} with parameters $k$, $p$
    and $\tau$. If $S_{ab}\in W^{k-1,p}_{(1-n)^+}$ then $S_{ab} \nabla^a k^b_{(i)} \in L^1$ for each $i$, so 
    that $\mathcal G_i(S_{ab})$ is well defined. Moreover, if $W^a \in W^{k,p}_{(2-n)^+}$ then
    \begin{equation}
    \mathcal G_i\left(\frac{1}{2N} (\ck W)_{ab}\right) = 0.
    \end{equation}
\end{lemma}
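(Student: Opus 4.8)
The plan is to prove both assertions by reducing to an integration by parts, with a careful limiting argument to justify the boundary-term vanishing. For the $L^1$ claim: if $S_{ab}\in W^{k-1,p}_{(1-n)^+}$ then $S_{ab}$ decays faster than $r^{1-n}$ modulo logs, while Proposition \ref{prop:kernel} gives $\nabla^a k_{(i)}^b = \nabla^a e_{(i)}^b + \nabla^a U_{(i)}^b \in W^{k-1,p}_{\tau-1}$ (the Euclidean piece being constant, hence killed by the derivative), so $\nabla^a k_{(i)}^b$ decays at rate $O(r^{\tau-1})$ with $\tau<0$. The product therefore lies in $W^{0,p}_{(1-n)^+ + (\tau-1)} \subset W^{0,p}_{-n-\epsilon_0}$ for some small $\epsilon_0>0$ by Lemma \ref{lem:mult-basic}, and since $k-1 > n/p$ such functions are in $L^1(\Reals^n)$ (one checks $\|f\|_{L^1} \lesssim \|f\|_{W^{0,p}_{-n-\epsilon_0}}$ via Hölder against $\langle x\rangle^{-n-\epsilon_0}$ versus its conjugate). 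This handles well-definedness of $\mathcal G_i$.

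For the main identity, I would compute $\mathcal G_i$ by the reverse of the divergence-theorem manipulation already displayed in the text. Write $\mathcal G_i\!\left(\tfrac{1}{2N}(\ck W)_{ab}\right) = \int_{\Reals^n} \tfrac{1}{2N}(\ck W)_{ab}\,\nabla^a k_{(i)}^b\, dV_g$. The key algebraic observation is that $(\ck W)_{ab}$ is symmetric and trace-free, so when contracted against $\nabla^a k_{(i)}^b$ only the symmetric trace-free part of the latter survives; that is, $(\ck W)_{ab}\nabla^a k_{(i)}^b = \tfrac12 (\ck W)_{ab}(\ck k_{(i)})^{ab}$. Now integrate by parts, moving one derivative off $k_{(i)}$ and onto $\tfrac{1}{2N}(\ck W)$, so that the bulk term becomes $-\int \Div_g\!\big(\tfrac{1}{2N}\ck W\big)_b\, k_{(i)}^b\, dV_g = -\int (P_{g,N}W)_b\, k_{(i)}^b\, dV_g$, plus a boundary term at $\partial B_R$. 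But $k_{(i)}^b \in \ker P_{g,N}$, and $P_{g,N}$ is formally self-adjoint in the weighted pairing (it is $\Div_g$ composed with a symmetric operator weighted by the scalar $1/(2N)$), so this bulk term vanishes — or, more directly, one pairs $(P_{g,N}W)_b k_{(i)}^b$ and integrates by parts the other way to land on $(P_{g,N}k_{(i)})_b W^b = 0$. Either route gives $\mathcal G_i = 0$ up to boundary contributions.

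The main obstacle is therefore the boundary term at $\partial B_R$: one must show $\int_{\partial B_R} \tfrac{1}{2N}(\ck W)_{ab}\,\nu^a k_{(i)}^b\, dA \to 0$ as $R\to\infty$ (and similarly for the intermediate term with $W$ and $\ck k_{(i)}$). Here $W^a \in W^{k,p}_{(2-n)^+}$ gives $(\ck W)_{ab} \in W^{k-1,p}_{(1-n)^+}$, i.e.\ decay just barely slower than $r^{1-n}$ (with a possible log), while $k_{(i)}^b$ is bounded and $1/(2N)$ bounded, $dA \sim r^{n-1}$; the pointwise integrand on the sphere is thus $o(r^{1-n})\cdot O(1)\cdot O(r^{n-1}) = o(1)$ only in an averaged sense, so a pointwise argument is too crude. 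I would instead invoke the standard trace/averaging lemma for weighted Sobolev spaces: for $f \in W^{1,p}_{\sigma}$ with $\sigma < 0$ one has $\int_{\partial B_R}|f|\, dA \to 0$ along a sequence $R_j\to\infty$ (and in fact the liminf is zero), which suffices to evaluate the limit of the boundary integrals since the divergence-theorem identity holds for every $R$. Applying this with $f = \tfrac{1}{2N}(\ck W)_{ab}\nu^a k_{(i)}^b \in W^{1,p}_{(1-n)^+ + 0}$ — noting $(1-n) + 1 = 2-n < 0$ after accounting for the $r^{n-1}$ area growth, i.e.\ the relevant weight is negative — gives the vanishing along a subsequence, and combined with the fact that the interior integrals $\int_{B_R}(\cdots)\,dV_g$ converge (by the $L^1$ bound established in the first paragraph, applied now to both $\tfrac{1}{2N}(\ck W)_{ab}\nabla^a k_{(i)}^b$ and to $(P_{g,N}W)_b k_{(i)}^b$), we conclude $\mathcal G_i\big(\tfrac{1}{2N}(\ck W)_{ab}\big) = -\int (P_{g,N}W)_b k_{(i)}^b\, dV_g$ and then this too vanishes by pairing against $\ker P_{g,N}$ as above, completing the proof.
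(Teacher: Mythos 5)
Your treatment of the $L^1$ claim is essentially the paper's (minor quibble: $\nabla^a k_{(i)}^b$ lies in $W^{k-1,p}_{(-1)^-}$, not necessarily $W^{k-1,p}_{\tau-1}$, since $k_{(i)}^a-e_{(i)}^a$ may only lie in $W^{k,p}_{(2-n)^+}$; this does not affect the conclusion). The proof of the main identity, however, has a genuine error: you integrate by parts in the wrong direction. Moving the derivative off $k_{(i)}$ and onto $\tfrac{1}{2N}\ck W$ produces the bulk term $-\int (P_{g,N}W)_b\,k_{(i)}^b\,dV_g = -\mathcal O_i(P_{g,N}W)$ and the boundary term $\lim_R\int_{\partial B_R}\tfrac{1}{2N}(\ck W)_{ab}\nu^a k_{(i)}^b\,dA = \mathcal P_i\bigl(\tfrac{1}{2N}\ck W\bigr)$, and \emph{neither of these vanishes in general}. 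The bulk term is exactly the obstruction integral that drives all of Section 3: Proposition \ref{prop:Wdual} constructs $W_{(j)}^a\in W^{k,p}_{(2-n)^+}$ with $\mathcal O_i(P_{g,N}W_{(j)})=\delta_{ij}$, a direct counterexample to your claim that self-adjointness kills it. The formal self-adjointness identity fails here precisely because $W\in W^{k,p}_{(2-n)^+}$ and $k_{(i)}\in W^{k,p}_{0^+}$ sit at the borderline dual weights, so the identity itself carries a non-vanishing boundary contribution. Likewise your boundary estimate is off by the area growth: $(\ck W)_{ab}\in W^{k-1,p}_{(1-n)^+}$ permits decay exactly at the rate $r^{1-n}$, and against $dA\sim r^{n-1}$ the surface integral has an $O(1)$, generically nonzero, limit (it is the ADM momentum). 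To kill a surface integral one needs the integrand to be $o(r^{1-n})$, i.e.\ to lie in a $(1-n)^-$ class, not merely $(1-n)^+$; no averaging lemma rescues a weight on the wrong side of the threshold. Your argument, carried out correctly, only reproduces the tautology $\mathcal G_i = \mathcal P_i - \mathcal R_i$.

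The fix is the paper's: use the symmetry and tracelessness of $\ck W$ to rewrite the integrand as $(\nabla^a W^b)\,(\ck k_{(i)})_{ab}$ (you noted this identity but then did not exploit it), and integrate by parts so that the divergence lands on $\tfrac{1}{2N}\ck k_{(i)}$. The bulk term is then $-\int W^b (P_{g,N}k_{(i)})_b\,dV_g$, which vanishes identically since $k_{(i)}\in\ker P_{g,N}$, and the boundary term involves $\tfrac{1}{2N}(\ck k_{(i)})_{ab}W^b\nu^a$. Because $k_{(i)}$ is a constant vector plus a decaying correction, $\ck k_{(i)}\in W^{k-1,p}_{(-1)^-}$, so this product lies in $W^{k-1,p}_{(1-n)^-}$ and decays \emph{strictly} faster than $r^{1-n}$; only then does the surface integral tend to zero. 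The asymmetry between the two directions of integration by parts — $\ck k_{(i)}$ decays strictly better than $r^{-1}$ while $\ck W$ decays only at the borderline rate $r^{1-n}$ — is the entire content of the lemma.
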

\begin{proof}
    We first observe that 
    \be \label{eqn:nabla-k-decay}
    \nabla^a k^b_{(i)} = \nabla^a (k^b_{(i)} - e^b_{(i)}) + \nabla^a e^b_{(i)} \in W^{k-1,p}_{(-1)^-},
    \ee
    so that Lemma \ref{lem:mult-basic} implies that $S_{ab} \nabla^a k^b_{(j)} \in W^{k-1,p}_{(-n)^-} \subset L^1$ (the condition $k-1 > n/p$ ensures that such functions are continuous).  Taking $S_{ab} = \frac{1}{2N} (\ck W)_{ab}$, we have
    \begin{align*}
        \mathcal G_i(S_{ab}) & = \int_{\Reals^n} \frac{1}{2N} (\ck W)_{ab} \nabla^a k_{(i)}^b \, dV_g 
        = \int_{\Reals^n} \frac{1}{2N} (\nabla^a W^b) (\ck k_{(i)})_{ab} \, dV_g
        \\ & = \int_{\Reals^n} \nabla^a \left( \frac{1}{2N} W^{b} (\ck k_{(i)})_{ab} \right) \, dV_g - \int_{\Reals^n} W^{b}  (P_{g,N} k_{(i)})_{b} \, dV_g
        \\ & = \lim_{R \to \infty} \int_{\partial B_R} \frac{1}{2N} (\ck k_{(i)})_{ab} W^{b} \nu^a \, dA
    \end{align*}
    since $P_{g,N} k_{(i)} = 0$. By equation \eqref{eqn:nabla-k-decay} and Lemma \ref{lem:mult-basic}, however, this final integrand is contained in $W^{k-1,p}_{(1-n)^-}$, so that the integral vanishes in the limit and $\mathcal G_i(S_{ab}) = 0$.
    \\
\end{proof}

Equation \eqref{eqn:momentumints} suggests a strategy for precisely describing the asymptotics 
of $W^a$ solving equation \eqref{eqn:pmomentum} if $Z_a$ is rapidly decreasing.  
We first identify a representative basis 
$W^a_{(i)}$, $i=1,\ldots,n$ of $O(r^{1-n})$ vectors that carry linear momentum: $\mathcal P_i(W^a_{(j)})=\delta_{ij}$.  
With these in hand, the obstructions $\mathcal O_{i}(Z_a)$ can be removed, and Proposition \ref{prop:vec-lap-properties}\eqref{part:P-solvable} then permits finding the zero 
momentum correction to the solution.  The following proposition specifies a construction 
of such a representative basis, and it yields explicit expressions for the leading order 
asymptotic expansion of the basis in terms of the Green's function of the flat vector Laplacian $\overline P$,
\be \label{eqn:greens}
G^{ab} = - 2C_n r^{2-n}(A_n \delta^{ab} + B_n n^a n^b),
\ee
with dimensional constants 
\[
A_n = 3n-2,\qquad B_n = (n-2)^2, \qquad C_n = \frac{1}{4(n-1)(n-2)|S^{n-1}|}.
\]
We remark that we have an explicit factor of $2$ in equation \eqref{eqn:greens} due to the factor of $1/2$ in our definition of $\overline P$. This Green's function can often be most effectively leveraged in terms of the $n$ column vector fields $G_{(j)}^a$ with components $G_{(j)}^i = G^{ij}$. In the statement below, and throughout the remainder of this work, we denote the region exterior to a ball by $E_R := \{x \in \Reals^n \,:\, |x| > R\}$.

\begin{proposition}\label{prop:Wdual}
Suppose that $g_{ab}$ and $N$ satisfy Assumption \ref{assumption:main} with parameters $k$, $p$, and $\tau$. There exist vector fields $W_{(j)}^a$,  
$j=1,\ldots,n$ satisfying the following:
\begin{enumerate}[(a)]
\item $W_{(j)}^a \in W^{k,p}_{(2-n)^+}$\label{part:Wj-decay},
\item $(P_{g,N} W_{(j)})_a$ is smooth and compactly supported, \label{part:Wj-smooth}
\item $\mathcal P_i \left( \frac{1}{2N} \ck W_{(j)} \right) = \mathcal R_i \left( \frac{1}{2N} \ck W_{(j)} \right) = \mathcal O_i \left( P_{g,N} W_{(j)} \right) = \delta_{ij}$,\label{part:W-ADM}
\item \label{part:W-decomp} for any $R>0$, there are vector fields $Y_{(j)}^a\in W^{k,p}_{(2-n+\eta)^+}$
with $\eta = \max(\tau,-1)$  such that on $E_R$
\begin{equation}\label{eqn:W-decomp}
W_{(j)}^a = G^a_{(j)} + Y_{(j)}^a.
\end{equation}
\end{enumerate}
\end{proposition}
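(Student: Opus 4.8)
The plan is to construct each $W_{(j)}^a$ by correcting the Green's function column $G_{(j)}^a$, first to make the source compactly supported, then to normalize the momentum. I would begin by fixing a smooth cutoff $\chi$ that vanishes on $B_R$ and equals $1$ on $E_{2R}$, and setting $\widehat W_{(j)}^a := \chi G_{(j)}^a$. Since $\overline P G_{(j)} = 0$ away from the origin, a direct computation shows $(P_{g,N} \widehat W_{(j)})_a = (\overline P (\chi G_{(j)}))_a + ((P_{g,N} - \overline P)(\chi G_{(j)}))_a$; the first term is smooth and supported in the annulus $B_{2R}\setminus B_R$, while the second lies in $W^{k-2,p}_{\tau-2+(2-n)} = W^{k-2,p}_{\tau-n}$ by \eqref{eqn:pdiff} and Lemma \ref{lem:mult-basic}. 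The key point is the decay rate $\tau - n < -n \le 2-n$: since $\tau < 0$, this error term decays strictly faster than $r^{2-n}$, so it actually lies in $W^{k-2,p}_{\delta-2}$ for some $\delta$ with $1-n<\delta<2-n$, i.e. in the range governed by Proposition \ref{prop:vec-lap-properties}\eqref{part:P-solvable}.

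Next I would split off the obstructions. Write $(P_{g,N}\widehat W_{(j)})_a = F^{\mathrm{cpt}}_a + E^{(j)}_a$ with $F^{\mathrm{cpt}}$ smooth and compactly supported and $E^{(j)} \in W^{k-2,p}_{\delta-2}$ as above. The obstructions $\mathcal O_i(E^{(j)})$ may be nonzero, so I would subtract a compactly supported smooth vector field: choose smooth compactly supported $\psi_{(i)}^a$ (for instance bump-function multiples of $e_{(i)}^a$, possibly after a Gram–Schmidt step) so that the matrix $[\mathcal O_i(\psi_{(m)})]_{im}$ is invertible — this is possible because the $k_{(i)}^a$ are linearly independent, so the pairing $\langle \psi, k_{(i)}\rangle$ is nondegenerate as $\psi$ ranges over $C^\infty_c$. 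Replacing $\widehat W_{(j)}$ by $\widehat W_{(j)} - V_{(j)}$ where $V_{(j)}$ solves $P_{g,N}V_{(j)} = \sum_m c_m \psi_{(m)}$ (solvable in $W^{k,p}_\delta \subset W^{k,p}_{(2-n)^+}$ once the $c_m$ are chosen to cancel the obstructions) arranges that $P_{g,N}(\widehat W_{(j)} - V_{(j)})$ has vanishing obstructions. Then Proposition \ref{prop:vec-lap-properties}\eqref{part:P-solvable} produces a further correction in $W^{k,p}_\delta$ killing the remaining fast-decaying source, leaving only the compactly supported part; absorbing everything, I set $W_{(j)}^a$ to be $\widehat W_{(j)}^a$ plus these $W^{k,p}_\delta$-corrections. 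This gives (\ref{part:Wj-decay}) since $G_{(j)}^a$ is $O(r^{2-n})$ and all corrections are in $W^{k,p}_\delta \subset W^{k,p}_{(2-n)^+}$, and it gives (\ref{part:Wj-smooth}) since the only surviving source is the compactly supported $F^{\mathrm{cpt}}$ after the cancellations — though I should be slightly careful about bookkeeping here and may need to re-solve once more to ensure the net source is exactly smooth and compactly supported rather than merely the sum of a compact part and a rapidly decaying part.

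For (\ref{part:W-decomp}), the decomposition $W_{(j)}^a = G_{(j)}^a + Y_{(j)}^a$ on $E_R$ is now essentially by construction: on $E_{2R}$ the cutoff is trivial so $W_{(j)}^a - G_{(j)}^a$ equals the sum of the $W^{k,p}_\delta$ corrections, which lie in $W^{k,p}_{(2-n+\eta)^+}$ with $\eta = \max(\tau,-1)$ because $\delta$ can be taken as close to $2-n$ as desired from below and the genuine perturbative improvement is by the rate $\tau$ (from the $B^\alpha$ coefficients) capped at $-1$ (from homogeneous harmonic corrections of degree $1-n$). On the annulus $B_{2R}\setminus B_R$ the difference is smooth hence trivially in any weighted space, and one patches. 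Finally, for (\ref{part:W-ADM}): the identity $\mathcal P_i = \mathcal G_i + \mathcal R_i$ applies to $S_{ab} = \frac{1}{2N}(\ck W_{(j)})_{ab}$, which lies in $W^{k-1,p}_{(1-n)^+}$ by (\ref{part:Wj-decay}), so Lemma \ref{lem:W-delta-zero} gives $\mathcal G_i = 0$ and hence $\mathcal P_i = \mathcal R_i = \mathcal O_i(P_{g,N}W_{(j)})$ via the identity $\mathcal R_i(S_{ab}) = \mathcal O_i(\nabla^a S_{ab})$ and $\nabla^a(\frac{1}{2N}(\ck W)_{ab}) = (P_{g,N}W)_b$. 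So it remains only to compute $\mathcal O_i(P_{g,N}W_{(j)})$ and verify it equals $\delta_{ij}$; here one uses that the leading behavior $G_{(j)}^a$ is exactly the Green's function response, so the flat-space computation $\mathcal P_i(\frac{1}{2\,}\overline{\ck}\, G_{(j)}) = \delta_{ij}$ — a standard surface-integral evaluation using the explicit constants $A_n, B_n, C_n$ — fixes the normalization, and the compactly supported and fast-decaying corrections contribute nothing to $\mathcal O_i$ by the very choice of the $c_m$. The main obstacle I anticipate is the careful tracking in (\ref{part:W-decomp}) of exactly how much the perturbation improves the Green's function — i.e. getting the weight $(2-n+\eta)^+$ with $\eta = \max(\tau,-1)$ sharp — since this requires knowing that the next correction to $G_{(j)}^a$ in the expansion of $W_{(j)}^a$ is governed either by the metric perturbation rate $\tau$ or by a degree $(1-n)$ homogeneous harmonic term, whichever dominates, and disentangling those two sources cleanly.
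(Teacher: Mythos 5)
Your construction is correct in substance but runs in the opposite direction from the paper's. The paper works ``source-first'': it builds compactly supported sources $V_{(j)}$ normalized so that $\mathcal O_i(V_{(j)})=\delta_{ij}$ (via an invertible pairing matrix against the kernel basis $k_{(i)}^a$), solves $(P_{g,N}W_{(j)})_a=V_a^{(j)}$ in the isomorphism range, and only then extracts the asymptotics: it applies Lemma \ref{lem:P-ext-exact} to reduce to a $\overline P$-harmonic field on $E_R$, invokes the multipole expansion of Lemma \ref{lem:P-multipolar} to get $W_{(j)}=\sum_i c_{ij}G_{(i)}+(\text{faster decay})$, and pins down $c_{ij}=\delta_{ij}$ at the very end by the explicit surface integral of $\overline\ck G_{(k)}$. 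You instead work ``asymptotics-first'': you start from $\chi G_{(j)}^a$, so part \eqref{part:W-decomp} is essentially by construction and Lemma \ref{lem:P-multipolar} is never needed, and you verify the normalization in part \eqref{part:W-ADM} using the same surface-integral computation $\mathcal P_i(\tfrac12\overline\ck G_{(j)})=\delta_{ij}$ together with \eqref{eq:pi_is_oi}. Both routes need the same two nontrivial inputs (the invertible obstruction matrix for compactly supported test fields, and the flat Green's-function flux computation); yours trades the multipole expansion for the bookkeeping of correcting the tail of $P_{g,N}(\chi G_{(j)})$, and your tracking of the correction rate $(2-n+\eta)^+$ with $\eta=\max(\tau,-1)$ is right for the right reasons (the $\tau$ loss from the coefficients $B^\alpha_{ab}$, capped at $-1$ because Proposition \ref{prop:vec-lap-properties}\eqref{part:P-solvable} is only available down to weights above $1-n$).

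One step is misphrased as written: you cannot solve $P_{g,N}V_{(j)}=\sum_m c_m\psi_{(m)}$ in $W^{k,p}_\delta$ with $\delta<2-n$ when the matrix $[\mathcal O_i(\psi_{(m)})]$ is invertible, since solvability there would force all obstructions of the right-hand side, and hence all $c_m$, to vanish. The correct move --- which your surrounding discussion and your closing caveat about re-solving make clear you intend --- is to choose $c_m$ so that $E^{(j)}-\sum_m c_m\psi_{(m)}$ has vanishing obstructions, solve for a correction $\Xi_{(j)}\in W^{k,p}_\delta$ with that right-hand side, and leave $\sum_m c_m\psi_{(m)}$ behind as part of the final smooth, compactly supported source; then $\mathcal O_i(P_{g,N}W_{(j)})=\mathcal O_i(P_{g,N}(\chi G_{(j)}))=\mathcal P_i(\tfrac{1}{2N}\ck(\chi G_{(j)}))=\delta_{ij}$ as you claim. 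With that repair the argument goes through.
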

\begin{proof}
Let $\chi(x)$ be a cutoff function that vanishes for $|x|>2$ 
and for $|x|<1/2$.  We define $\chi_R(x) := \chi(x/R)$
and let $\widetilde V_{(j)} = r^{-1-n} R \chi_R e_{(j)} \in C_c^\infty$, where $R$ sufficiently 
large is chosen below.  
Using $U_{(i)}$ as introduced in the proof of Proposition \ref{prop:kernel}, we define the quantity
\begin{align}
M_{ij}:=\int_{\Reals^n} \left<\widetilde V_{(j)}, k_{(i)}\right>_g\; dV_g
&= \int_{\Reals^n} r^{-1-n}R \chi_R \left< e_{(j)}, e_{(i)} + U_{(i)}\right>_g\; dV_g \notag \\
&= c \delta_{ij} + o(1), \notag
\end{align}
where $c$ is a constant independent of $R$.  Consequently 
$M_{ij}$ is invertible for $R$ sufficiently large: For some fixed such $R$, we take $V_{(i)} = \sum_j (M^{-1})_{ij} \widetilde V_{(j)}$  
so that 
\be \label{eqn:Wj-oi}
\int_{\Reals^n} \left< V_{(j)}, k_{(i)}\right>_g\; dV_g = \delta_{ij}.
\ee
We fix $\delta\in (2-n,0)$.
Since $V_{(j)}\in C^\infty_c \subset W^{k-2,p}_{\delta-2}$,
we can find a unique $W_{(j)}^a \in  W^{k,p}_\delta$ with 
$(P_{g,N} W_{(j)})^a = V^a_{(j)}$, and as a consequence of uniqueness the vector field $W_{(j)}$ 
is independent of $\delta\in (2-n,0)$, establishing parts \eqref{part:Wj-decay} and \eqref{part:Wj-smooth}. Equation \eqref{eqn:Wj-oi} precisely encodes that $\mathcal O_i(P_{g,N} W_{(j)}) = \delta_{ij}$, so combining this with equation \eqref{eq:pi_is_oi} establishes
part \eqref{part:W-ADM}.

To establish the decomposition \eqref{eqn:W-decomp} in part \eqref{part:W-decomp}, we fix $R>0$ and observe that
\[
(\overline P W_{(j)})_a = \underbrace{( P_{g,N} W_{(j)})_a + ((\overline P- P_{g,N}) W)_a}_{=: F^{(j)}_a}.
\]
The first term defining $F^{(j)}_a$ is smooth and compactly supported, while the second
term (using equation (\ref{eqn:pdiff}) and Lemma \ref{lem:mult-basic}) lies in $W^{k-2,p}_{(\tau-n)^+}$. Hence $F^{(j)}_a \in W^{k-2,p}_{(\tau-n)^+}$,
and Lemma \ref{lem:P-ext-exact} implies that there exists 
$X_{(j)}^a\in W^{k,p}_{(2-n+\tau)^+}$ with $(\overline P X_{(j)})_a = F^{(j)}_a$ on $E_R$.  

Since $\overline P(W_{(j)}-X_{(j)}) = 0$ on $E_R$, Lemma \ref{lem:P-multipolar}
implies that there are constants $c_{ij}$ and vector fields $U_{(j)}^a\in W^{k,p}_{(1-n)^+}$
such that
\[
W_{(j)}^a - X_{(j)}^a = \sum_{i=1}^n c_{ij} G_{(i)}^{a} + U_{(j)}^a
\]
on $E_R$. Rearranging,
\begin{equation}\label{eq:W-split-in-proof}
W_{(j)}^a = \sum_{i=1}^n c_{ij} G_{(i)}^{a} + X_{(j)}^a + U_{(j)}^a.
\end{equation}
The decay rates of $X_{(j)}^a$ and $U_{(j)}^a$ imply that $Y_{(j)}^a:= X_{(j)}^a + U_{(j)}^a \in W^{k,p}_{(2-n+\eta)^+}$
with $\eta=\max(-1,\tau)$, so part \eqref{part:W-decomp} is proved up to determining
the values of the constants $c_{ij}$. 

To find $c_{ij}$, we plug \eqref{eq:W-split-in-proof} into the result of part (\ref{part:W-ADM}), yielding
\[
\delta_{ij} 
= \mathcal P_i \left( \frac{1}{2N}\ck W_{(j)} \right)
= \lim_{R\to\infty}
\int_{\partial B_R} \frac{1}{2N}\left[\sum_{k=1}^n c_{kj} (\ck G_{(k)})_{ab} + (\ck Y_{(j)})_{ab}\right]n^b e^a_{(i)}\; d\overline{A}.
\]
Since $\ck Y_{(j)}$ and $(\ck - \overline \ck) G_{(k)}$ are $o(r^{1-n-\epsilon})$ for some $\epsilon > 0$, and since $N-1 = o(r^\tau)$, we deduce that
\be \label{eq:Wplug}
\delta_{ij} = \sum_{k=1}^n \frac{c_{kj}}{2} \left[ \lim_{R\to\infty} \int_{\partial B_R} (\overline \ck G_{(k)})_{ab} n^b e^a_{(i)}\; d\overline{A} \right].
\ee
Recalling equation \eqref{eqn:greens}, a computation shows that 
\be \label{eqn:LbarG}
(\overline{\ck} G_{(k)})_{ab} = \frac{4n(n-2)C_n}{r^{n-1}} \left( \delta_{ak} n_b + \delta_{bk} n_a
- \delta_{ab} n_k + (n-2) n_a n_b n_k \right),
\ee
and contracting with $n^b$ yields
\[
(\overline{\ck} G_{(k)})_{ab} n^b = 
\frac{4n(n-2)C_n}{r^{n-1}} (\delta_{ak} + (n-2) n_a n_k).
\]
Since $\int_{\partial B_R} n_a n_k\; d\overline{A} = \frac{R^{n-1}}{n}|S^{n-1}|\delta_{ak}$,
it follows that
\[
\int_{\partial B_R} (\overline{\ck} G_{(k)})_{ab} n^b e^a_{(i)}\; d\overline{A} = 
4n(n-2)|S^{n-1}| C_n  \left(1 + \frac{n-2}{n} \right) \delta_{ak} e^a_{(i)}
= 2\delta_{ik}.
\]
Recalling \eqref{eq:Wplug}, we find that $c_{ij} = \delta_{ij}$, which completes the proof of part \eqref{part:W-decomp}.
\\
\end{proof}

Armed with the momentum carrier vector fields $W_{(j)}^a$, we obtain the following precise description of the solution of equation \eqref{eqn:pmomentum} if the right-hand side is rapidly decaying:

\begin{proposition} \label{prop:fastdecay}
    Suppose that $g_{ab}$ and $N$ satisfy Assumption \ref{assumption:main} with parameters $k$, $p$, and $\tau$.
    Assume $1-n<\delta< 2-n$, and let $Z_b\in W^{k-2,p}_{\delta-2}$.
    Recall that
    \[
        \mathcal O_i(Z) = \int Z_b k^b_{(i)}\; dV_g
    \]
    and define
    \begin{equation}\label{eq:Z-projected}
    U_{b} := Z_{b} - \sum_{i=1}^n \mathcal O_i(Z) (P_{g,N}W_{(i)})_b.
    \end{equation}
    Then there exists a unique $V^a \in W^{k,p}_{\delta}$ solving
    \begin{equation}\label{eq:PWeqB}
        (P_{g,N} V)_a = U_b.
    \end{equation}
    Hence
    \be \label{eqn:Wsolution}
        W^a =  \underbrace{\sum_{i=1}^n \mathcal O_i(Z) W_{(i)}^a}_{O(r^{2-n})} + \underbrace{V^a}_{o(r^{2-n}))}
    \ee
    is the unique solution in $W^{k,p}_{(2-n)^+}$ of equation \eqref{eqn:pmomentum}.
    Moreover, the ADM momentum of $1/(2N)(\ck W)_{ab}$ satisfies
    \[
    \mathcal P_i\left( \frac{1}{2N}(\ck W) \right) = \mathcal O_i(Z), \quad i=1,\ldots,n.
    \]
\end{proposition}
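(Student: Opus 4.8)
The plan is to assemble the statement from the machinery already in place: the solvability criterion of Proposition~\ref{prop:vec-lap-properties}\eqref{part:P-solvable}, the kernel description of Proposition~\ref{prop:kernel}, and the normalization of the momentum carriers in Proposition~\ref{prop:Wdual}. I will verify, in order, that the projected source $U_b$ meets the obstruction conditions, that the resulting $W^a$ solves \eqref{eqn:pmomentum} with the claimed decay, that it is the unique solution of that decay, and finally that its ADM momentum is $\mathcal O_i(Z)$.

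First I would note $U_b\in W^{k-2,p}_{\delta-2}$: this is immediate since $Z_b$ lies there and each $(P_{g,N}W_{(i)})_b$ is smooth and compactly supported by Proposition~\ref{prop:Wdual}\eqref{part:Wj-smooth}. Because $1-n<\delta<2-n$ the conjugate weight satisfies $0<2-n-\delta<1$, so Proposition~\ref{prop:kernel} tells us the kernel of $P_{g,N}$ on $W^{k,p}_{2-n-\delta}$ is spanned by the $k^a_{(i)}$; hence by Proposition~\ref{prop:vec-lap-properties}\eqref{part:P-solvable} equation \eqref{eq:PWeqB} is solvable in $W^{k,p}_\delta$ if and only if $\mathcal O_i(U)=0$ for all $i$. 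Using \eqref{eq:Z-projected} together with the normalization $\mathcal O_i(P_{g,N}W_{(j)})=\delta_{ij}$ from Proposition~\ref{prop:Wdual}\eqref{part:W-ADM},
\[
\mathcal O_i(U)=\mathcal O_i(Z)-\sum_{j=1}^n \mathcal O_j(Z)\,\mathcal O_i\!\left(P_{g,N}W_{(j)}\right)=\mathcal O_i(Z)-\mathcal O_i(Z)=0,
\]
so a solution $V^a\in W^{k,p}_\delta$ exists, and it is unique because $\delta<0$ forces the kernel of $P_{g,N}$ on $W^{k,p}_\delta$ to vanish (Proposition~\ref{prop:vec-lap-properties}\eqref{part:no-kernel}). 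Setting $W^a:=\sum_i\mathcal O_i(Z)W^a_{(i)}+V^a$, a one-line computation using \eqref{eq:Z-projected} gives $(P_{g,N}W)_a=\sum_i\mathcal O_i(Z)(P_{g,N}W_{(i)})_a+(P_{g,N}V)_a=\sum_i\mathcal O_i(Z)(P_{g,N}W_{(i)})_a+U_a=Z_a$, and since each $W^a_{(i)}\in W^{k,p}_{(2-n)^+}$ (Proposition~\ref{prop:Wdual}\eqref{part:Wj-decay}) and $V^a\in W^{k,p}_\delta\subset W^{k,p}_{(2-n)^+}$ (as $\delta<2-n$), we get $W^a\in W^{k,p}_{(2-n)^+}$. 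For uniqueness in this class, if $W_1^a,W_2^a$ are two such solutions then $W_1^a-W_2^a$ lies in $W^{k,p}_{\delta'}$ for some non-exceptional $\delta'$ with $2-n<\delta'<0$ and is annihilated by $P_{g,N}$, hence vanishes by Proposition~\ref{prop:vec-lap-properties}\eqref{part:no-kernel}.

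Finally, for the ADM momentum I would use linearity of $S_{ab}\mapsto\mathcal P_i(S_{ab})$ in $S$. By Proposition~\ref{prop:Wdual}\eqref{part:W-ADM} the limit defining $\mathcal P_i\!\left(\tfrac{1}{2N}\ck W_{(j)}\right)$ exists and equals $\delta_{ij}$, so it remains to see $\mathcal P_i\!\left(\tfrac{1}{2N}\ck V\right)=0$. Since $V^a\in W^{k,p}_\delta$ we have $(\ck V)_{ab}\in W^{k-1,p}_{\delta-1}$, and as $k-1>n/p$ (Assumption~\ref{assumption:main}) such functions are $o(r^{\delta-1})$; hence
\[
\int_{\partial B_R}\tfrac{1}{2N}(\ck V)_{ab}\,n^a e^b_{(i)}\,d\overline A = o\!\left(R^{\,n+\delta-2}\right)\longrightarrow 0\quad(R\to\infty)
\]
because $\delta<2-n$ makes $n+\delta-2<0$. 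Thus the limit defining $\mathcal P_i\!\left(\tfrac{1}{2N}\ck W\right)$ exists by linearity and equals $\sum_j\mathcal O_j(Z)\delta_{ij}=\mathcal O_i(Z)$, completing the proof.

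Given the lemmas already established, the argument is essentially bookkeeping; the two places that warrant care are the uniqueness step, where one must pass to a nearby non-exceptional weight in the isomorphism range before invoking triviality of the kernel of $P_{g,N}$ on $W^{k,p}_{(2-n)^+}$, and the short weighted-Sobolev-embedding argument needed to conclude that the subleading correction $V^a$ contributes nothing to the ADM momentum.
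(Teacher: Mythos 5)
Your proposal is correct and follows essentially the same route as the paper: project out the obstructions using $\mathcal O_i(P_{g,N}W_{(j)})=\delta_{ij}$, solve for $V^a$ via Proposition \ref{prop:vec-lap-properties}\eqref{part:P-solvable}, and identify the momentum. The only (harmless) variation is in the last step, where the paper invokes $\mathcal P_i = \mathcal G_i + \mathcal R_i$ together with Lemma \ref{lem:W-delta-zero} directly, while you instead use linearity with Proposition \ref{prop:Wdual}\eqref{part:W-ADM} and a direct surface-integral decay estimate for the $V^a$ contribution; these amount to the same underlying facts.
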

\begin{proof}
    Since each $(P_{g,N} W_{(j)})_a$ is smooth and compactly supported,
    $U_b$ defined by equation \eqref{eq:Z-projected} has the same regularity and decay properties 
    as $Z_b$. That is, $U_b \in W^{k-2,p}_{\delta-2}$. Moreover,
    since $\mathcal O_i(P_{g,N}W_{(j)}) = \delta_{ij}$ we find that
    \[
    \mathcal R_i(U) = \mathcal O_i(Z) - \sum_{j=1}^n \mathcal R_j(Z) \mathcal R_i( P_{g,N} W_{(j)}) = 0.
    \]
    Since the obstruction coefficients vanish, 
    Proposition \ref{prop:vec-lap-properties}\eqref{part:P-solvable} 
    provides a unique solution $V^a \in W^{k,p}_{\delta}$ of \eqref{eq:PWeqB}. 
    Hence $W^a$ defined by \eqref{eqn:Wsolution} is the unique solution
    in $W^{k,p}_{(2-n)^+}$ of equation \eqref{eqn:pmomentum} provided 
    by Proposition \ref{prop:vec-lap-properties}.  Finally,
    using Lemma \ref{lem:W-delta-zero} we compute the ADM momentum
    \[
    \mathcal P_i\left(\frac{1}{2N} \ck W\right) = \mathcal R_i\left(\frac{1}{2N} \ck W\right) + \mathcal G_i\left(\frac{1}{2N} \ck W\right) = \mathcal O_i(P_{g,N} W)  = \mathcal O_i(Z).
    \]    
\end{proof}

Combining Proposition \ref{prop:fastdecay} and Proposition \ref{prop:Wdual}\eqref{part:W-decomp} 
we find that if $Z_a$ has fast decay, $o(r^{-n})$, then the solution of
equation \eqref{eqn:pmomentum} has a leading order $O(r^{2-n})$ term that is a linear combination of 
the columns of the Green's function of the Euclidean vector Laplacian, with a correction decaying
at a faster rate depending on both $\tau$ and $\delta$.  We now investigate the borderline case
such that $Z_a$ decays at the threshold rate $r^{-n}$. The situation is more delicate because the vector Laplacian is no longer Fredholm as a map on the full weighted space $W^{k,p}_{2-n}$.
To compensate for this, we confine our attention to the case such that $Z_a$ admits the expansion $Z_a=r^{-n} F_a + H_a$ on some exterior region $E_R$, where $F_a$ is homogeneous of degree zero and where $H_a$ lies in a weighted space with decay faster than $r^{-n}$. 
In this scenario, we seek a solution of the form 
$W^a= r^{2-n} U^a + V^a$, 
where $U^a$ is homogeneous of degree zero and where $V^a$ decays faster than $r^{2-n}$.  
The following result concerning the Euclidean vector Laplacian is the tool needed for the solvability for the leading order homogeneous terms.
\begin{proposition} \label{prop:flat-P-threshold}
    Take $k\ge 2$ and $1<p<\infty$.
    Suppose that the covector field $Z_a\in W^{k-2,p}_{\mathrm{loc}}(\Reals^n\setminus\{0\})$ has the form $Z_a =r^{-n} F_a$, where $F_a \in W^{k-2,p}(S^{n-1})$ is extended homogeneously with
    degree zero.  Then there exists a vector field $W^a \in W^{k,p}_{\mathrm{loc}} (\Reals^n\setminus\{0\})$ of the form $W^a = r^{2-n} U^a$, where $U^a \in W^{k,p}(S^{n-1})$ is extended homogeneously with degree zero, solving $(\overline P W)_a = Z_a$    
    if and only if
    \be\label{eq:homog-solve}
    \int_{S^{n-1}} F(e_{(i)}) \; d\overline A = 0 \quad \text{for} \quad i=1,\ldots,n.
    \ee
    The solution $W^a$ is unique up to a linear combination of the vector fields $G^a_{(j)}$.
\end{proposition}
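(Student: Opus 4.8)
The plan is to separate variables, reducing the equation to an elliptic problem on the closed manifold $S^{n-1}$, and then to extract both the solvability condition \eqref{eq:homog-solve} and the dimension of the cokernel from the Fredholm alternative together with a Liouville-type identification of the kernels involved. Concretely, I would substitute the expressions $W^a = r^{2-n}U^a$ and $Z_a = r^{-n}F_a$, viewing $U^a$ and $F_a$ as sections of the rank-$n$ bundle $\left.\Reals^n\right|_{S^{n-1}}$; since $\overline{P}$ has constant coefficients and scales homogeneously of degree $2$, one gets $(\overline{P}W)_a = r^{-n}(\mathcal L U)_a$ for a second-order differential operator $\mathcal L$ on sections of this bundle, so the equation $(\overline{P}W)_a = Z_a$ is equivalent to $\mathcal L U = F$ on $S^{n-1}$. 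A one-line computation gives the principal symbol $\sigma(\overline{P})(\xi)W = -\tfrac12\bigl(|\xi|^2 W + (1-\tfrac2n)(\xi\cdot W)\,\xi\bigr)$, which for $n\ge 3$ is negative-definite and fibrewise self-adjoint (the latter also being forced by the formal self-adjointness $\int\langle\overline{P}W,X\rangle = -\tfrac12\int\langle\overline{\ck}W,\overline{\ck}X\rangle$). Hence $\mathcal L$ is elliptic with a self-adjoint symbol homotopic through elliptic symbols to $|\xi|^2\,\mathrm{Id}$, so $\mathcal L: W^{k,p}(S^{n-1})\to W^{k-2,p}(S^{n-1})$ is Fredholm of index zero, and by the Fredholm alternative $\mathcal L U = F$ is solvable if and only if $F$ is $L^2(S^{n-1})$-orthogonal to $\ker\mathcal L^*$, with $\dim\ker\mathcal L^* = \dim\ker\mathcal L$.

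I would prove the necessity of \eqref{eq:homog-solve} directly rather than through the reduction, as this is the most transparent route. If $W^a = r^{2-n}U^a$ solves $(\overline{P}W)_a = r^{-n}F_a$, then for any annulus $\Omega = \{R_1<|x|<R_2\}$ and any constant vector field $e_{(i)}$ the divergence theorem yields
\[
\int_\Omega \langle \overline{P}W, e_{(i)}\rangle\, dx = \tfrac12\int_{\partial B_{R_2}} (\overline{\ck}W)_{ab} e^a_{(i)} n^b\, d\overline{A} - \tfrac12\int_{\partial B_{R_1}} (\overline{\ck}W)_{ab} e^a_{(i)} n^b\, d\overline{A},
\]
the bulk contribution vanishing because $e_{(i)}$ is parallel. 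Since $\overline{\ck}W$ is homogeneous of degree $1-n$, each boundary integral is independent of the radius, so the right-hand side is zero; on the other hand $\int_\Omega \langle r^{-n}F, e_{(i)}\rangle\, dx = \log(R_2/R_1)\int_{S^{n-1}} F(e_{(i)})\, d\overline{A}$, which forces $\int_{S^{n-1}} F(e_{(i)})\, d\overline{A} = 0$ for every $i$.

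For sufficiency and for the uniqueness clause, it remains to identify $\ker\mathcal L$ with the degree-$(2-n)$ homogeneous solutions of $\overline{P}$ on $\Reals^n\setminus\{0\}$, and $\ker\mathcal L^*$ — via the index-zero property, or equivalently via the duality $\mathcal L_{2-n}^*=\mathcal L_0$ associated to $\lambda\mapsto 2-n-\lambda$ in the obvious family $\mathcal L_\lambda$ — with the degree-$0$ homogeneous solutions. Both identifications rest on a single observation: a homogeneous solution of $\overline{P}$ on $\Reals^n\setminus\{0\}$ is locally integrable, so $\overline{P}$ of it is a distribution supported at the origin, hence a finite sum $\sum_\alpha c^a_\alpha\partial_\alpha\delta_0$; matching degrees of homogeneity leaves at most one term, and comparing with the singularities of the Euclidean Green's function \eqref{eqn:greens} shows the solution equals $c^b G^a_{(b)}$ plus a globally defined solution of $\overline{P}$, which is polynomial by elliptic regularity and polynomial growth. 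In the degree-$0$ case that polynomial must be constant and in the degree-$(2-n)$ case it must vanish, by the Liouville theorem for $\overline{P}$ established in Appendix \ref{app:vec-fredholm} (see also Proposition \ref{prop:flatcase}). Since the $G^a_{(j)}$ are linearly independent ($A_n = 3n-2\ne 0$ in \eqref{eqn:greens}), this gives $\ker\mathcal L = \operatorname{span}\{G^a_{(j)}\}$ and $\dim\operatorname{coker}\mathcal L = n$, the cokernel pairing with the constant fields $e_{(i)}$. Combining with the previous paragraph, \eqref{eq:homog-solve} is exactly the solvability condition, the solution is unique up to a linear combination of the $G^a_{(j)}$, and elliptic regularity on $S^{n-1}$ upgrades $U$ to $W^{k,p}(S^{n-1})$.

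The main obstacle I anticipate is pinning down the \emph{exact} size of the cokernel: the divergence-theorem argument shows necessity of the $n$ conditions \eqref{eq:homog-solve} with essentially no work, but showing there are no further obstructions requires both the index-zero structure of $\mathcal L$ — hence a careful enough treatment of the separation of variables to see that $\mathcal L$ is an honest Laplace-type operator on the rank-$n$ bundle with self-adjoint principal symbol — and a Liouville theorem for the vector operator $\overline{P}$. Handling the borderline regularity is where the analysis needs the most care: this is precisely the exceptional weight at which $\overline{P}$ loses Fredholmness on the full weighted space, which is exactly why one must restrict attention to strictly homogeneous leading terms $r^{2-n}U^a$ and $r^{-n}F_a$ from the outset.
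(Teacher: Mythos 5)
Your proof is correct, and in the two places where the real work lies it takes a genuinely different route from the paper's. The paper also separates variables, but it then writes the induced operator on the sphere explicitly as a $2\times 2$ matrix $L_\alpha$ of intrinsic operators acting on the tangential/normal decomposition $(V^a,v)$, verifies the adjoint identity $L_{2-n}^*=L_0$ by checking coefficient relations, and identifies $\ker L_0$ with the constant vector fields by first proving that a degree-zero homogeneous solution of $\overline P$ is divergence-free --- an argument requiring a separate spherical-harmonic computation when $n=3$. You replace both steps: necessity of \eqref{eq:homog-solve} comes from the annulus/divergence-theorem computation (the same device the paper deploys later in Lemma \ref{lem:div-obstr-vanish}, here applied to $\overline P W$ rather than to $\overline\nabla^a S_{ab}$), and the kernel identifications come from extending a homogeneous solution across the origin as a distribution, matching homogeneities against $\partial_\alpha\delta_0$, subtracting multiples of the Green's function \eqref{eqn:greens}, and invoking the Liouville property recorded in Proposition \ref{prop:flatcase}. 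This buys uniformity in $n$ (no special case at $n=3$) and avoids the explicit coefficient computation entirely; note too that your direct necessity argument makes the asserted duality $\mathcal L_{2-n}^*=\mathcal L_0$ dispensable, since $n$ independent necessary conditions together with $\dim\operatorname{coker}\mathcal L=\dim\ker\mathcal L=n$ already force the image to be exactly the subspace cut out by \eqref{eq:homog-solve}. What the paper's computation buys in exchange is the explicit form of $L_\alpha$ for every $\alpha$ and a self-contained verification of the adjoint structure. Two trivial slips that do not affect the argument: with the paper's normalization the self-adjointness identity reads $\int\langle\overline P W,X\rangle=-\tfrac14\int\langle\overline\ck W,\overline\ck X\rangle$, and the index-zero property by itself ``identifies'' only the dimension of $\ker\mathcal L^*$, not the space --- but your dimension count is all you actually use.
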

\begin{proof}
Consider a smooth vector field $W^a$ on $\Reals^{n}\setminus\{0\}$ of the form $r^\alpha( V^a + v n^a)$ 
where $\alpha\in\Reals$, and where 
$V^a$ is a vector field in $W^{k,p}(S^{n-1})$ and tangential to $S^{n-1}$ and $v$ is a function in $W^{k,p}(S^{n-1})$, both extended to $\Reals^{n}\setminus\{0\}$ by zero-homogeneity. An extended computation shows that 
$(\overline P W)^a= \frac{1}{2}r^{\alpha-2}(F^a + fn^a)$ where $F^a$ and $f$ are a tangential 
vector field and function respectively in $W^{k-2,p}(S^{n-1})$ with
\[
\begin{bmatrix} F^a \\ f\end{bmatrix} = 
\begin{bmatrix} \slashed\Delta + \left(1-\frac{2}{n}\right) \slashed{\nabla}\slashed{\Div} + a_\alpha & b_\alpha \slashed \nabla \\
    c_\alpha \slashed \Div & \slashed \Delta + d_\alpha\end{bmatrix}
\begin{bmatrix} V^a \\ v\end{bmatrix}.
\]
Here, the slashed operators are intrinsic to the round $S^{n-1}$ and the coefficients are
\begin{align*}
    a_\alpha &= \alpha(\alpha+n-2)-1\\
    b_\alpha &= n + (\alpha-1)\left(1-\frac{2}{n}\right)\\
    c_\alpha &= \alpha -3 - \frac{2}n(\alpha-1)\\
    d_\alpha &= 2\frac{(\alpha-1)(n-1)}{n}(\alpha+n-1)
\end{align*}
The operators $L_\alpha$ sending $(V^a, v) \mapsto (F^a, f)$ as above are elliptic and Fredholm as maps $W^{k,p}\to W^{k-2,p}$ on the bundle over $S^{n-1}$ with sections being pairs of tangential vector fields and functions mapping $S^{n-1} \to \Reals$. 
These operators are homotopic via Fredholm maps to $\mathrm{diag}(\slashed \Delta, \slashed \Delta)$,
which is formally self adjoint and hence has index zero.  The maps $L_\alpha$ then
also have index zero.  They are not self adjoint in general, but since
\[
a_0 = a_{2-n};\quad b_0 = -c_{2-n};\quad c_0 = -b_{2-n};\quad d_0 = d_{2-n},
\]
and since $(\slashed \Div)^* = -\slashed \nabla$, it follows that the formal adjoint
of $L_{2-n}$ is $L_0$.  Hence, given a choice of $(F^a,f)\in W^{k-2,p}(S^{n-1})$ the
equation $L_{2-n}(V^a, v) = (F^a, f)$ is solvable for $(V^a,v)\in W^{k,p}(S^{n-1})
$ if and only if $(F^a,f)$ is $L^2$ orthogonal to the kernel of $L_0$.

To identify this kernel, it is helpful to return to Cartesian coordinates, wherein the action of $\overline P$ takes the form
\[
2(\overline P X)_a = \overline \Delta(X_a) + \left(1-\frac{2}{n}\right) \overline \nabla_a(\overline \nabla_b X^b),
\]
where $\overline \Delta$ applies {\it component-wise} on the right hand side.
We are then seeking vector fields $X^a$ that are homogeneous of degree 0 and are contained in the 
kernel of $\overline P$. We claim that such a vector field must be divergence-free, satisfying $\overline{\nabla}_b X^b = 0$. Assuming this is true,
then
\[
0 = 2(\overline P X)_a = \overline\Delta(X_a).
\]
Hence the components of $X^a$ are harmonic
and homogeneous of degree 0, and it follows that they are constants. Conversely, constant vector fields 
are clearly contained in the kernel, so we conclude that the kernel of $L_0$ precisely consists of the 
constant vector fields on $\Reals^n$, restricted to the sphere and decomposed into tangential 
and normal components. The solvabilty conditions \eqref{eq:homog-solve} then follow.

To establish the claim that $\overline \nabla_b X^b = 0$, we observe that
\[
0 = \overline \nabla^a (\overline P X)_a =  \left(1-\frac1n\right) \overline\Delta(\overline\nabla_b X^b).
\]
Since $1-(1/n)\neq 0$, we conclude that $\overline\nabla_b X^b$ is harmonic and homogeneous of degree $-1$.
If $n>3$ there are no such nonzero functions since $2-n < -1 < 0$, which establishes the claim in this case.  Now suppose that $n=3$, in which case $\overline\nabla_b X^b = c/r$ for some constant $c$.  We need to 
show $c=0$.  

Since $(\overline P X)_a=0$ and since $\overline\nabla_b X^b = c/r$ we have
\[
\overline \Delta (X_a) = -\left(1-\frac{2}{3}\right) \overline \nabla_a (c/r) = \frac{c}{3r^2}n_a.
\]
Using the fact that the first spherical harmonics on $S^2$ (i.e., the 
restrictions of the components of $n^a$ to the sphere) all have eigenvalue $-2$ under the Laplacian, we find that
$
X^a = C^a -\frac{c}{6} n^a
$
for some constant vector field $C^a$.  Taking a divergence, we compute that $\overline\nabla_b X^b = -c/(3r)$.
Recalling that we began with $\overline\nabla_b X^b = c/r$, we conclude that $c=0$, as required.

To finish the proof, it remains to identify the kernel of $L_{2-n}$. 
The $n$ vector fields $G^{a}_{(j)}$ from the Greens function of $\overline P$ are homogeneous of degree $2-n$,
and each provides an element of this kernel. Since $L_{2-n}$ has index zero 
and a cokernel of dimension $n$, its kernel has dimension $n$, so these provide a complete basis.
\\
\end{proof}

With this result for the Euclidean operator $\overline P$ in hand, we can now solve the broader problem for $P_{g,N}$, supposing that our leading order source term $F_a$ satisfies the solvability condition:

\begin{proposition} 
\label{prop:W-threshold-decomp}
    Suppose that $g_{ab}$ and $N$ satisfy Assumption \ref{assumption:main} with parameters $k$, $p$, and $\tau$.
    Suppose that the covector field $Z_a\in W^{k-2,p}_{(-n)^+}$ admits the decomposition
    \[
    Z_a = r^{-n} F_a + H_a
    \]
    on $E_R$ for some $R>0$, where $H_a\in W^{k-2,p}_{\delta-2}$ for some $1-n < \delta < 2 - n$, and where $F_a \in W^{k-2,p}(S^{n-1})$ is extended homogeneously with
    degree zero. If 
    \be\label{eq:homog-obstruct}
    \int_{S^{n-1}} F(e_{(i)})\; d \overline A = 0 \quad \text{for} \quad i=1,\ldots,n,
    \ee
    then the unique solution $W^a \in W^{k,p}_{(2-n)^+}$ of $(P_{g,N} W)_a = Z_a$ admits the decomposition
    \be \label{eqn:W-threshold-decomp}
    W^a = r^{2-n} U^a + V^a
    \ee
    on $E_R$, where $V^a \in W^{k,p}_{\delta} \cup W^{k,p}_{(2-n+\tau)^+}$ and $U_a \in W^{k,p}(S^{n-1})$ is extended homogeneously with degree zero.
\end{proposition}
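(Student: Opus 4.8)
\emph{Proof sketch.} The plan is to split off the explicit homogeneous model solution furnished by Proposition \ref{prop:flat-P-threshold}, reducing the problem to one with a source that decays strictly faster than $r^{-n}$, where the exterior Fredholm machinery behind Proposition \ref{prop:Wdual} applies. Since the solvability condition \eqref{eq:homog-obstruct} is precisely the hypothesis of Proposition \ref{prop:flat-P-threshold}, there is a field $W_0^a = r^{2-n}U^a$ on $\Reals^n\setminus\{0\}$, with $U^a\in W^{k,p}(S^{n-1})$ homogeneous of degree zero, solving $(\overline P W_0)_a = r^{-n}F_a$. I would fix a cutoff $\chi$ that vanishes near the origin and equals $1$ on $E_R$, and set $\widehat W_0 := \chi W_0$; since $W_0$ is homogeneous of degree $2-n$ with angular part in $W^{k,p}(S^{n-1})$, one has $\widehat W_0\in W^{k,p}_{(2-n)^+}$.

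Next I would pass to a residual source. Set $\tilde W := W - \widehat W_0\in W^{k,p}_{(2-n)^+}$. On $E_R$ we have $\widehat W_0 = W_0$, so $\overline P\tilde W = \overline P W - \overline P W_0$ there; using $(P_{g,N}W)_a = Z_a = r^{-n}F_a + H_a$, the relation \eqref{eqn:pdiff} together with Lemma \ref{lem:mult-basic} (which gives $(\overline P - P_{g,N})W\in W^{k-2,p}_{(\tau-n)^+}$), and $(\overline P W_0)_a = r^{-n}F_a$, the two copies of $r^{-n}F_a$ cancel and
\[
(\overline P\tilde W)_a = H_a + ((\overline P - P_{g,N})W)_a \in W^{k-2,p}_{\delta-2}\cup W^{k-2,p}_{(\tau-n)^+}
\]
on $E_R$; both weights here lie strictly below $-n$.

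The exterior argument then runs exactly as in Proposition \ref{prop:Wdual}. Lemma \ref{lem:P-ext-exact} yields $X^a\in W^{k,p}_{\delta}\cup W^{k,p}_{(2-n+\tau)^+}$ with $(\overline P X)_a = (\overline P\tilde W)_a$ on $E_R$, a space decaying strictly faster than $r^{2-n}$. Since $\overline P(\tilde W - X) = 0$ on $E_R$ and $\tilde W - X\in W^{k,p}_{(2-n)^+}$, Lemma \ref{lem:P-multipolar} produces constants $c_i$ and a field $U_1^a\in W^{k,p}_{(1-n)^+}$ with $\tilde W^a - X^a = \sum_i c_i G_{(i)}^a + U_1^a$ on $E_R$. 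Because $1-n < \delta$ one has $W^{k,p}_{(1-n)^+}\subseteq W^{k,p}_\delta$, so $V^a := X^a + U_1^a\in W^{k,p}_\delta\cup W^{k,p}_{(2-n+\tau)^+}$. Finally, on $E_R$ (where $\widehat W_0 = W_0$),
\[
W^a = W_0^a + \textstyle\sum_i c_i G_{(i)}^a + V^a = r^{2-n}\Bigl(U^a + r^{n-2}\textstyle\sum_i c_i G_{(i)}^a\Bigr) + V^a,
\]
and since each $G_{(i)}^a$ is homogeneous of degree $2-n$ with smooth angular part, the parenthesized field is homogeneous of degree zero and lies in $W^{k,p}(S^{n-1})$; relabelling it $U^a$ gives \eqref{eqn:W-threshold-decomp}. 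The existence and uniqueness of $W^a$ in $W^{k,p}_{(2-n)^+}$ is Proposition \ref{prop:vec-lap-properties}.

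I expect the main obstacle to be conceptual rather than computational: at the threshold weight $P_{g,N}$ fails to be Fredholm on $W^{k,p}_{2-n}$, so Proposition \ref{prop:vec-lap-properties}\eqref{part:P-solvable} cannot be applied directly. The device that circumvents this is precisely the subtraction of the explicit flat homogeneous term $r^{2-n}U^a$, which drops the residual source below the threshold and reduces everything to the exterior solve and multipole expansion already used for Proposition \ref{prop:Wdual}. A secondary subtlety is that the residual source is controlled in the weighted sense only on $E_R$, not globally, which is exactly why the exterior-regularity lemmas, rather than the global Fredholm statement, are the right tools here.
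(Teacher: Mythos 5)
Your proposal is correct and follows essentially the same route as the paper's proof: subtract (a cutoff of) the homogeneous flat solution from Proposition \ref{prop:flat-P-threshold}, observe that the residual source drops strictly below the threshold weight, and recover the remaining $G^a_{(j)}$ terms plus a fast-decaying remainder via the exterior solve and multipole expansion. The only difference is cosmetic: the paper applies $P_{g,N}$ to the residual and cites Propositions \ref{prop:fastdecay} and \ref{prop:Wdual}\eqref{part:W-decomp}, whereas you apply $\overline P$ and inline the same underlying argument via Lemmas \ref{lem:P-ext-exact} and \ref{lem:P-multipolar}.
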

\begin{proof}
Proposition \ref{prop:flat-P-threshold} ensures that we may find $\overline W^a \in W^{k,p}_{\mathrm{loc}}(\Reals^n \setminus \{0\})$ of the form $\overline W^a = r^{2-n} \overline U^a$, with $\overline U^a\in W^{k,p}(S^{n-1})$ extended homogeneously with degree zero, satisfying $( \overline P \,\overline W)_a = r^{-n} F_a$. This determines $\overline W^a$ up to a linear combination of the $G^a_{(j)}$ vector fields-- we fix here one particular choice.

Let $\chi$ be a cutoff function that equals $0$ on $B_{R/2}$  and $1$ on $E_R$. Then $\overline V^a := W^a - \chi \overline W^a \in W^{k,p}_{(2-n)^+}$ satisfies 
\begin{align*}
(P_{g,N} \overline V)_a & = (P_{g,N}W)_a - (\overline P\, \chi \overline W)_a - ((P_{g,N}- \overline P)\chi \overline W)_a 
\\ & = H_a - ((P_{g,N}- \overline P) \chi \overline W)_a
\end{align*}
on $E_R$. Since $\chi \overline W^a \in W^{k,p}_{(2-n)^+}$, the latter term is in $W^{k-2,p}_{(\tau-n)^+}$, so that $(P_{g,N} \overline V)_a \in W^{k-2,p}_{\delta-2} \cup W^{k-2,p}_{(\tau-n)^+}$. As a consequence of Propositions \ref{prop:fastdecay} and \ref{prop:Wdual}\eqref{part:W-decomp}, $\overline V^a$ admits the decomposition
\[
\overline V^a = \sum_{j=1}^n \mathcal O_j(P_{g,N} \overline V) G^a_{(j)} + V^a
\]
on $E_R$, where $V^a \in W^{k,p}_{\delta} \cup W^{k,p}_{(2-n+\tau)^+}$. Since we have on $E_R$ that
\[
W^a = \overline W^a + \overline V^a = r^{2-n} \overline U^a + \sum_{j=1}^n \mathcal O_j(P_{g,N} \overline V) G^a_{(j)} + V^a,
\]
this establishes \eqref{eqn:W-threshold-decomp} with $U^a = \overline U^a + \sum_{j=1}^n \mathcal O_j(P_{g,N} \overline V) r^{n-2} G^a_{(j)}$.
\\
\end{proof}

We now show that the obstruction criteria \eqref{eq:homog-obstruct} of the previous proposition are automatically satisfied for our primary use case where the source term is a divergence.

\begin{lemma}
\label{lem:div-obstr-vanish}
Suppose $S_{ab}\in W^{k-1,p}_{\mathrm{loc}}(\Reals^n\setminus\{0\})$
is homogeneous of degree $1-n$.  Then 
\be\label{eq:div-S-form}
\overline{\nabla}^a S_{ab} = r^{-n} F_b
\ee
for some $F_b \in W^{k-2,p}(S^{n-1})$ extended homogeneously with degree zero and
satisfying
\be\label{eq:div-obstr-vanish}
\int_{S^{n-1}} F(e_{(i)}) \; d\overline A = 0\quad \text{for} \quad i=1,\ldots, n.
\ee
\end{lemma}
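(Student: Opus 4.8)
The plan is to exploit the scaling structure of the problem. Since differentiation lowers the degree of homogeneity by one and $S_{ab}$ is homogeneous of degree $1-n$, the field $\overline{\nabla}^a S_{ab}$ is homogeneous of degree $-n$; this immediately yields \eqref{eq:div-S-form} upon setting $F_b := r^n\, \overline{\nabla}^a S_{ab}$, which is homogeneous of degree zero and hence determined by its restriction to $S^{n-1}$. That $F_b \in W^{k-2,p}(S^{n-1})$ follows from the fact that, for a tensor field homogeneous of a fixed degree, membership in $W^{k-2,p}_{\mathrm{loc}}(\Reals^n\setminus\{0\})$ is equivalent (via polar coordinates) to membership of the angular profile in $W^{k-2,p}(S^{n-1})$, so no genuine trace theorem is needed even when $k-2=0$. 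The real content is therefore the vanishing \eqref{eq:div-obstr-vanish}, which I would establish by computing the integral of $(\overline{\nabla}^a S_{ab})e_{(i)}^b$ over the annulus $A := \{x : 1 < |x| < 2\}$ in two different ways.

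First, since $e_{(i)}^b$ is a constant vector, $(\overline{\nabla}^a S_{ab})e_{(i)}^b = \overline{\nabla}^a(S_{ab}e_{(i)}^b)$, so the divergence theorem gives
\[
\int_A (\overline{\nabla}^a S_{ab})e_{(i)}^b\, dx = \int_{\partial B_2} S_{ab} n^a e_{(i)}^b\, d\overline A - \int_{\partial B_1} S_{ab} n^a e_{(i)}^b\, d\overline A.
\]
The divergence theorem applies here because $A$ is a bounded Lipschitz domain bounded away from the origin, on which $k-1\ge 1$ gives $S_{ab}\in W^{1,1}(A)$, the boundary values being taken in the trace sense. Now using the homogeneity of $S_{ab}$ and of $n^a = x^a/|x|$, together with the scaling $d\overline A \mapsto R^{n-1}\, d\overline A$ of the round measure under $x\mapsto Rx$, the quantity $\int_{\partial B_R} S_{ab}n^a e_{(i)}^b\, d\overline A$ is seen to be independent of $R$; hence the right-hand side above is zero.

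Second, writing $\overline{\nabla}^a S_{ab} = r^{-n}F_b$ and integrating in polar coordinates,
\[
\int_A (\overline{\nabla}^a S_{ab})e_{(i)}^b\, dx = \int_1^2 \!\! \int_{S^{n-1}} r^{-n} F(e_{(i)})\, r^{n-1}\, d\overline A\, dr = (\log 2)\int_{S^{n-1}} F(e_{(i)})\, d\overline A.
\]
Comparing the two evaluations forces $\int_{S^{n-1}} F(e_{(i)})\, d\overline A = 0$, and since $i\in\{1,\dots,n\}$ was arbitrary, this is exactly \eqref{eq:div-obstr-vanish}.

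There is no serious obstacle in this argument: it is essentially a one-line application of Stokes' theorem dressed up with the correct homogeneity bookkeeping. The only points demanding any care are the justification of the divergence theorem at the available Sobolev regularity — which is why I restrict to an annulus away from the origin rather than a punctured ball — and the identification of $F_b$ as an element of $W^{k-2,p}(S^{n-1})$, where one invokes the polar-coordinate characterization of $W^{m,p}_{\mathrm{loc}}(\Reals^n\setminus\{0\})$ for homogeneous fields. Neither is deep.
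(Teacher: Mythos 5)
Your proof is correct and follows essentially the same route as the paper's: identify $F_b = r^n\,\overline{\nabla}^a S_{ab}$ by homogeneity, then evaluate $\int_A (\overline{\nabla}^a S_{ab})e_{(i)}^b$ over an annulus both by the divergence theorem (where the two boundary terms cancel by scale invariance) and in polar coordinates (producing a logarithmic factor times the spherical integral). The only differences are cosmetic — you fix the annulus $1<|x|<2$ where the paper uses general radii $r_1<r_2$, and you spell out the Sobolev-regularity justifications that the paper leaves implicit.
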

\begin{proof}
The form for $\overline \nabla^a S_{ab}$ given by equation \eqref{eq:div-S-form} is
a fairly direct consequence of homogeneity, so it remains to establish 
the vanishing of the obstruction coefficients in equation \eqref{eq:div-obstr-vanish} .  

Consider $r_1, r_2\in \Reals$ with $0<r_1 < r_2 $ and let 
$A$ be the annulus bounded by $r_1 < r < r_2$.  
Then
\[
\int_A \overline \nabla^a(S_{ab}) e^b_{(i)}\; dV_{\overline g} = \int_{r_1}^{r_2} r^{-1} \int_{S^{n-1}} F_a e^a_{(i)}\; d\overline A\; dr = \log(r_2/r_1) \int_{S^{n-1}} F_a e^a_{(i)}\; d \overline A.
\]
On the other hand, the divergence theorem implies that
\begin{align*}
\int_A \overline \nabla^a(S_{ab}) e^b_{(i)}\; dV_{\overline g} & = 
\int_A \overline \nabla^a(S_{ab} e^b_{(i)})\; dV_{\overline g} = 
\int_{\partial B_{r_2}} S_{ab} e^b_{(i)} n^a \; dA_{\overline g} 
- \int_{\partial B_{r_1}} S_{ab} e^b_{(i)} n^a \; dA_{\overline g} 
\\ & =
\int_{S^{n-1}} B_{ab} e^b_{(i)} n^a \; d\overline A
-
\int_{S^{n-1}} B_{ab} e^b_{(i)} n^a \; d\overline A
=0.
\end{align*}
Equating the final expressions from the two computations above, we find that
\[
\log(r_2/r_1) \int_{S^{n-1}} F_a e^a_{(i)}\; d\overline A = 0,
\]
which establishes the result.
\\
\end{proof}

Compiling the results in this section, we have the following theorem summarizing the prescription of the asymptotics of the unphysical second fundamental form $K_{ab}$ in the conformal method:
\begin{theorem}
\label{thm:momentum-summary}
    Suppose that $g_{ab}$ and $N$ satisfy Assumption \ref{assumption:main} with parameters $k$, $p$, and $\tau$.
    We take $A_{ab}\in W^{k-1,p}_{\delta-1}$ for some $\delta<0$ to be symmetric and trace-free,
    and we define $Z_b = \nabla^a \left(\frac{1}{2N} A_{ab}\right)$. Given $W^a$ solving $(P_{g,N} W)_a = Z_a$, we set $K_{ab} = 1/(2N)(A_{ab}-(\ck W)_{ab})$ and $S_{ab} = 1/(2N) A_{ab}$.

    \begin{enumerate}[(a)]
    \item If $2-n<\delta<0$, then there exists a unique $W^a\in W^{k,p}_{\delta}$ 
    solving  $(P_{g,N} W)_a = Z_a$.
    \label{part:K-slow-falloff}

    \item If $1-n<\delta<2-n$, then the unique $W^a \in W^{k,p}_{(2-n)^+}$ solving $(P_{g,N} W)_a = Z_a$ may be expanded in terms of the momentum carriers,
    \[
    W^a = \sum_{i=1}^n \mathcal O_i(Z) W^{a}_{(i)} + V^a,
    \]
    with the remainder $V^a \in W^{k,p}_{\delta}$.  Moreover, $\mathcal P_i(K) = \mathcal G_i(S) = \mathcal P_i(S) - \mathcal R_i(S) = -\mathcal O_i(Z)$.
    \label{part:K-fast-falloff}

    \item In the borderline case such that $A_{ab}\in W^{k-1,p}_{(1-n)^+}$ has the form
    \be \label{eqn:A-threshold-decomp}
    A_{ab} = r^{1-n} B_{ab} + C_{ab}
    \ee
    on $E_R$ for some $R>0$, where $B_{ab} \in W^{k-1,p}_{\mathrm{loc}}(\Reals^n\setminus\{0\})$ is homogeneous of degree zero and $C_{ab}\in W^{k-1,p}_{\delta-1}$ 
    for some $1-n<\delta<2-n$ (with $B_{ab}$ and $C_{ab}$ both traceless and symmetric), the unique $W^a \in W^{k,p}_{(2-n)^+}$ solving $(P_{g,N} W)_a = Z_a$ is given by
    \be \label{eqn:W-threshold-decomp2}
    W^a = r^{2-n} U^a + V^a
    \ee
    on $E_R$, where $U^a \in W^{k,p}_{\mathrm{loc}}(\Reals^n\setminus\{0\})$ is homogeneous of degree zero and $V^a\in W^{k,p}_{\delta} \cup W^{k,p}_{(2-n+\tau)^+}$.
    Moreover, $\mathcal P_i(K) = \mathcal G_i(S) = \mathcal P_i(S) - \mathcal R_i(S)$.
    \label{part:K-threshold-falloff}
    \end{enumerate}
\end{theorem}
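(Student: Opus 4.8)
The plan is to reduce each of the three parts to the corresponding structural result proved above --- Proposition~\ref{prop:vec-lap-properties}, Proposition~\ref{prop:fastdecay}, and Proposition~\ref{prop:W-threshold-decomp} respectively --- after placing the source $Z_b = \nabla^a(\tfrac{1}{2N}A_{ab})$ in the appropriate weighted space, and then to do the ADM--momentum bookkeeping once for all three cases. I would begin by recording that, since $\delta N = N-1 \in W^{k,p}_\tau$ with $\tau<0$ and $N>0$, the function $\tfrac{1}{2N}-\tfrac12$ also lies in $W^{k,p}_\tau$ (expand $1/N$ in a geometric series in $\delta N$ at infinity, or apply Lemma~\ref{lem:mult-basic}). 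Hence $\tfrac{1}{2N}A_{ab}$ has the same regularity and decay as $A_{ab}$, so $Z_b\in W^{k-2,p}_{\delta-2}$ in parts (a)--(b) and $Z_b\in W^{k-2,p}_{(-n)^+}$ in part (c). Next I would record the two facts behind every momentum identity in the theorem: $K_{ab}$ is trace-free and $\mathrm{div}_g K = Z_b - (P_{g,N}W)_b = 0$, so $\mathcal R_i(K) = \mathcal O_i(\nabla^a K_{ab}) = 0$ and therefore $\mathcal P_i(K) = \mathcal G_i(K)$ whenever defined; and since $W^a\in W^{k,p}_{(2-n)^+}$ in every case, Lemma~\ref{lem:W-delta-zero} gives $\mathcal G_i\big(\tfrac{1}{2N}\ck W\big)=0$, whence $\mathcal G_i(K) = \mathcal G_i(S)$. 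Part (a) is then immediate: Proposition~\ref{prop:vec-lap-properties}\eqref{part:iso} asserts that $P_{g,N}\colon W^{k,p}_\delta\to W^{k-2,p}_{\delta-2}$ is an isomorphism when $2-n<\delta<0$.

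For part (b) I would invoke Proposition~\ref{prop:fastdecay} on $Z_b\in W^{k-2,p}_{\delta-2}$ with $1-n<\delta<2-n$, obtaining the momentum-carrier expansion $W^a = \sum_i \mathcal O_i(Z)W^a_{(i)} + V^a$ with $V^a\in W^{k,p}_\delta$, together with $\mathcal P_i\big(\tfrac{1}{2N}\ck W\big) = \mathcal O_i(Z)$. Since $S_{ab}=\tfrac{1}{2N}A_{ab}$ decays strictly faster than $r^{1-n}$ (because $\delta<2-n$), the surface integral defining $\mathcal P_i(S)$ vanishes; and $\mathcal R_i(S) = \mathcal O_i(\nabla^a S_{ab}) = \mathcal O_i(Z)$, the defining integral being absolutely convergent since $\delta-2<-n$. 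Combining these with the two facts recorded above gives $\mathcal P_i(K) = \mathcal G_i(K) = \mathcal G_i(S) = \mathcal P_i(S) - \mathcal R_i(S) = -\mathcal O_i(Z)$.

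Part (c) is the substantive case. On $E_R$ I would write $\tfrac{1}{2N}A_{ab} = \tfrac12 r^{1-n}B_{ab} + \tilde C_{ab}$ with $\tilde C_{ab} := \tfrac12 C_{ab} + (\tfrac{1}{2N}-\tfrac12)A_{ab}$, whose two summands lie in $W^{k-1,p}_{\delta-1}$ and $W^{k-1,p}_{(\tau+1-n)^+}$ respectively, hence each decays faster than $r^{1-n}$. Replacing the $g$-divergence by the flat one costs only a further term in $W^{k-2,p}_{(\tau-n)^+}$ (from the Christoffel symbols and metric inverse, each an $O(r^\tau)$ perturbation of the Euclidean ones), so on $E_R$ one gets $Z_b = \tfrac12\overline\nabla^a(r^{1-n}B_{ab}) + H_b$ with $H_b \in W^{k-2,p}_{\delta-2}\cup W^{k-2,p}_{(\tau-n)^+}$; the first term is homogeneous of degree $-n$, i.e.\ equals $r^{-n}F_b$ with $F_b$ the degree-zero homogeneous extension of its trace on $S^{n-1}$, and Lemma~\ref{lem:div-obstr-vanish} applied to the degree-$(1-n)$ tensor $\tfrac12 r^{1-n}B_{ab}$ gives precisely $\int_{S^{n-1}}F(e_{(i)})\,d\overline A = 0$. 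After cutting off $r^{-n}F_b$ to a global representative (as in the proof of Proposition~\ref{prop:W-threshold-decomp}), Proposition~\ref{prop:W-threshold-decomp} --- whose argument already absorbs a source in the union $W^{k-2,p}_{\delta-2}\cup W^{k-2,p}_{(\tau-n)^+}$, the lapse error term $(P_{g,N}-\overline P)\chi\overline W$ landing in $W^{k-2,p}_{(\tau-n)^+}$ regardless --- produces $W^a = r^{2-n}U^a + V^a$ on $E_R$ with $U^a$ homogeneous of degree zero and $V^a\in W^{k,p}_\delta\cup W^{k,p}_{(2-n+\tau)^+}$. For the momenta, $S_{ab}\in W^{k-1,p}_{(1-n)^+}$ makes $\mathcal G_i(S)$ well defined by Lemma~\ref{lem:W-delta-zero}, while $\mathcal P_i(S) = \tfrac12\int_{S^{n-1}}B_{ab}n^a e^b_{(i)}\,d\overline A$ exists because on the defining spheres the $\tilde C$-contribution is $o(R^{\delta+n-2})\to0$ and the lapse, normal, and area corrections are $O(R^\tau)\to0$; the divergence-theorem identity $\int_{\partial B_R}S_{ab}\nu^a k^b_{(i)}\,dA = \int_{B_R}S_{ab}\nabla^a k^b_{(i)}\,dV_g + \int_{B_R}(\nabla^a S_{ab})k^b_{(i)}\,dV_g$ then forces $\mathcal R_i(S) = \mathcal O_i(Z)$ to exist as well and to equal $\mathcal P_i(S) - \mathcal G_i(S)$. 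With the two facts above this yields $\mathcal P_i(K) = \mathcal G_i(K) = \mathcal G_i(S) = \mathcal P_i(S) - \mathcal R_i(S)$.

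The main obstacle is part (c): extracting a clean homogeneous leading term $r^{-n}F_b$ from $Z_b$ despite the non-homogeneous lapse and the $g$- (rather than flat) connection, verifying that its spherical means against $e_{(i)}$ vanish via Lemma~\ref{lem:div-obstr-vanish} so that Proposition~\ref{prop:W-threshold-decomp} applies, and tracking the fact that the remainder's decay may be limited by $\tau$ rather than by $\delta$ --- which is exactly why the conclusion carries the union $W^{k,p}_\delta\cup W^{k,p}_{(2-n+\tau)^+}$ --- together with the attendant checks that $\mathcal P_i(S)$, $\mathcal G_i(S)$, and $\mathcal R_i(S)$ are all well defined in the threshold regime.
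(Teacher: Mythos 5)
Your proposal is correct and follows essentially the same route as the paper: part (a) via Proposition \ref{prop:vec-lap-properties}\eqref{part:iso}, part (b) via Proposition \ref{prop:fastdecay}, part (c) via Lemma \ref{lem:div-obstr-vanish} feeding into Proposition \ref{prop:W-threshold-decomp}, with the momentum identities obtained once from $\mathcal R_i(K)=0$ and Lemma \ref{lem:W-delta-zero}. Your handling of part (c) is in fact marginally more careful than the paper's, since you split $\tfrac{1}{2N}=\tfrac12+(\tfrac{1}{2N}-\tfrac12)$ before invoking Lemma \ref{lem:div-obstr-vanish} on the genuinely homogeneous piece $\tfrac12 r^{1-n}B_{ab}$, whereas the paper applies the lemma with the non-homogeneous lapse factor still attached.
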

\begin{proof}
Part \eqref{part:K-slow-falloff} follows immediately from Proposition \ref{prop:vec-lap-properties}\eqref{part:iso}. 

Recalling that $\mathcal G_i(S) = \mathcal P_i(S) - \mathcal R_i(S)$ holds definitionally, the momentum identifications $\mathcal P_i(K) = \mathcal G_i(S) = \mathcal P_i(S) - \mathcal R_i(S)$ in both parts \eqref{part:K-fast-falloff} and \eqref{part:K-threshold-falloff} follow from expanding $\mathcal P_i(K)$ into $\mathcal G_i(K)$ and $\mathcal R_i(K)$ and using that $\mathcal R_i(K) = 0$ (which holds because $\nabla^a K_{ab} = 0$) and that $\mathcal G_i \left( \frac{1}{2N} \ck W \right) = 0$ (from Lemma \ref{lem:W-delta-zero}). Explicitly:
$$
\mathcal P_i(K) = \mathcal G_i(K) + \mathcal R_i(K) = \mathcal G_i(S) - \mathcal G_i \left( \frac{1}{2N} \ck W \right) = \mathcal G_i(S),
$$
The decomposition of $W^a$ in part \eqref{part:K-fast-falloff} follows immediately from Proposition \ref{prop:fastdecay}. The further momentum identification $\mathcal P_i(K) = - \mathcal O_i(Z)$ in \eqref{part:K-fast-falloff} is due to the fact that the fast decay of $S_{ab}$ results in $\mathcal P_i(S) = 0$, and from the definitional relation $\mathcal R_i\left( S \right) = \mathcal O_i(Z)$.

The decomposition in part \eqref{part:K-threshold-falloff} follows from Proposition \ref{prop:W-threshold-decomp} since Lemma \ref{lem:div-obstr-vanish} ensures that $\overline \nabla^a(\frac{1}{2N} r^{1-n} B_{ab}) = r^{-n} F_b$ for some $F_b \in W^{k-2,p}(S^{n-1})$ satisfying \eqref{eq:div-obstr-vanish} and which is extended homogeneously with degree zero, so that
on $E_R$ we may write
\begin{align*}
Z_b = \nabla^a\left( \frac{1}{2N} A_{ab} \right)  = \overline \nabla^a\left( \frac{1}{2N} r^{1-n} B_{ab} \right) + (\nabla^a - \overline \nabla^a)\left( \frac{1}{2N} r^{1-n} B_{ab} \right) + \nabla^a\left(\frac{1}{2N} C_{ab} \right) = r^{-n} F_b + H_b,
\end{align*}
where we have (letting $\chi$ be a cutoff function that equals $0$ on $B_{R/2}$  and $1$ on $E_R$)
\[
H_b := \chi (\nabla^a - \overline \nabla^a)\left( \frac{1}{2N} r^{1-n} B_{ab} \right) + \nabla^a\left(\frac{1}{2N} C_{ab} \right) \in W^{k-2,p}_{\delta-2} \cup W^{k-2,p}_{(\tau-n)^+}.
\]
\end{proof}

This summary indicates the precise extent of control one has over the structure of the unphysical second fundamental form $K_{ab}$ with regards to its ADM momentum and decay properties.
In particular, if one has sufficient decay to make sense of each of $\mathcal G_i(S)$ and $\mathcal R_i(S)$, the ADM momentum components are $\mathcal P_i(K) = \mathcal G_i(S) = \mathcal P_i(S) - \mathcal R_i(S)$.
Given conformal method seed data $g_{ab}$ and $N$ satisfying Assumption \ref{assumption:main} and $A_{ab}$ with at least $O(r^{1-n})$ decay, $\mathcal G_i(A_{ab}/2N)$ therefore directly measures the ADM momentum of the eventual initial data set, and it follows that this momentum can be readily prescribed a priori. Moreover, the quantity $\mathcal R_i(A_{ab}/2N)$ specifies the momentum which must be removed to satisfy the momentum constraint. These results motivate our use of the monikers ``gravitational momentum" for $\mathcal G_i$ and ``residual momentum" for $\mathcal R_i$. 

In all cases, one can precisely control the decay rate of $K_{ab}$ (beyond any $r^{1-n}$ terms that arise) by appropriately choosing the decay rates of one's seed data. 
To illustrate, we now discuss the precise decay structure of $W^a$ and $K_{ab}$ which we obtain in solving \eqref{eqn:pmomentum} in the fast-decay setting of $A_{ab}\in W^{k-1,p}_{\delta-1}$ with $1-n < \delta < 2-n$, part \eqref{part:K-fast-falloff} above. 
Abbreviating $p_i := \mathcal G_i(S) = - \mathcal O_i(Z)$, we find that $W^a$ contains an $O(r^{2-n})$ piece given by  $\sum_j -p_j W_{(j)}^a$, and the remainder $V^a$ beyond this has the desired $o(r^\delta)$ decay inherited from $A_{ab} \in W^{k-1,p}_{\delta-1}$. 
In light of Proposition \ref{prop:Wdual}\eqref{part:W-decomp}, however, one can be more explicit in describing the asymptotic form of $W^a$: For every $\epsilon > 0$ and any $\ell \in \Nats$ such that $k-1 > \ell+n/p$, one has
\[
W^a = \sum_{j=1}^n -p_j G_{(j)}^a + o_{\ell+1}(r^{\max(\delta, \, 2-n+\tau+\epsilon)}),
\]
and the corresponding asymptotics of $K_{ab}$ are the following (using \eqref{eqn:LbarG}):
\begin{align}
K_{ab} & = \frac{1}{2}\sum_{j=1}^n p_j \cdot  (\overline \ck G_{(j)})_{ab} + o_\ell (r^{\max(\delta-1, \, 1-n+\tau+\epsilon)}) \notag \\
& =  \frac{2n(n-2) C_n}{r^{n-1}} \left[ p_a n_b + p_b n_a + ((n-2) n_a n_b - \delta_{ab}) p_j n^j \right] + o_\ell(r^{\max(\delta-1, \, 1-n+\tau+\epsilon)}). \label{eqn:K-fast-decay}
\end{align}
In particular, for $n=3$ one has
\[
K_{ab} =  \frac{3}{16 \pi r^{2}}  \left[ p_a n_b + p_b n_a + (n_a n_b - \delta_{ab}) p_j n^j \right] + o_\ell(r^{\max(\delta-1, \, -2+\tau+\epsilon)}).
\]
In the case of fast decay, we may therefore anticipate the complete structure of the leading $O(r^{1-n})$ term, and we may dictate the decay rate of $K_{ab}$, and hence of $\widetilde K_{ab}$, beyond this term through the decay of the provided seed data $g_{ab}$ and $N$ (specifying $\tau$) and $A_{ab}$ (specifying $\delta$). 
One obtains as many derivatives of control in the error term as desired by choosing the seed data to allow $k$ sufficiently large. 
Although it is not typically of physical interest, we remark that one could obtain similar results down to $-n < \delta < 1-n$, or for an arbitrarily negative value of $\delta$, by expanding Proposition \ref{prop:kernel} to characterize the kernel of $P_{g,N}$ on higher-weighted spaces and expanding Proposition \ref{prop:Wdual} to introduce the faster-decaying carrier vector fields dual to these additional kernel elements.

While we can anticipate the complete structure of the leading $O(r^{1-n})$ term of $K_{ab}$ in the case of fast decay discussed above, there is more freedom if $A_{ab}$ is itself given with the threshold $O(r^{1-n})$ decay, as in part \eqref{part:K-threshold-falloff} of Theorem \ref{thm:momentum-summary}. 
A question then remains regarding what kinds of leading terms $r^{1-n} D_{ab}$ ($D_{ab}$ is analogous to the quantity in brackets in equation \eqref{eqn:K-fast-decay}) are possible for $K_{ab}$ in this case. 
The answer is that one can choose any Euclidean transverse traceless tensor of this form: the possible homogeneous of degree zero tensors $D_{ab} \in W^{k-1,p}_{\mathrm{loc}}(\Reals^n \setminus \{0\})$ are precisely those which are symmetric and traceless and satisfy $\overline \nabla^a(r^{1-n}D_{ab}) = 0$.
This form is necessary because isolating the leading order $r^{1-n} D_{ab}$ term in $K_{ab} = \frac{1}{2N}(A_{ab}-(\ck W)_{ab})$, with $A_{ab}$ and $W^a$ as in \eqref{eqn:A-threshold-decomp} and \eqref{eqn:W-threshold-decomp2} on some $E_R$, yields that symmetry and tracelessness are immediate from these properties of $K_{ab}$ overall, and the leading order term in $\nabla^a K_{ab} = 0$ reads $\overline \nabla^a(r^{1-n}D_{ab}) = 0$.

Conversely, beginning with a given Euclidean transverse traceless tensor $r^{1-n} D_{ab}$, if one sets 
\[
A_{ab} = 2N\chi r^{1-n} D_{ab} \in W^{k-1,p}_{(1-n)^+}
\]
with $\chi$ a cutoff function equal to $0$ on $B_{R/2}$ and $1$ on some $E_R$, then on $E_R$ one has
\[
Z_b := \nabla^a\left( \frac{1}{2N} A_{ab} \right) = \nabla^a\left(r^{1-n} D_{ab} \right) = (\nabla^a - \overline \nabla^a)\left(r^{1-n} D_{ab} \right) = o(r^{\tau-n}),.
\]
This result implies that $Z_b \in W^{k-2,p}_{(-n)^-}$, and consequently Proposition \ref{prop:fastdecay} yields the decomposition \eqref{eqn:Wsolution} of the solution $W^a \in W^{k,p}_{(2-n)^+}$. 
The leading order $O(r^{1-n})$ terms in the expansion of $K_{ab} = \frac{1}{2N}(A_{ab}-(\ck W)_{ab})$ are then $r^{1-n}D_{ab}$ and the leading term in equation \eqref{eqn:K-fast-decay}, with $ p_j = -\mathcal R_j(r^{1-n} D_{ab})$. 
This extraneous term can be removed by perturbing $A_{ab}$ at lower order (specifically, by subtracting some $B_{ab} \in W^{k-1,p}_{(1-n)^-}$ satisfying $\mathcal R_j( \frac{1}{2N}B_{ab}) = \mathcal R_j(r^{1-n} D_{ab})$), resulting in $K_{ab}$ having the leading order term $r^{1-n} D_{ab}$, as desired.

We remark that the problem of characterizing and constructing Euclidean transverse traceless tensors is well-studied in the literature \cite{beig1997tt,tafel2018all}, so numerous such objects can be readily constructed. The procedure outlined in part \eqref{part:K-threshold-falloff} of Theorem \ref{thm:momentum-summary} and the above discussion then allows one to build any example into a working $K_{ab}$ as the leading order $O(r^{1-n})$ term if desired.

\section{The Hamiltonian Constraint}
\label{sec:hamiltonian}

Assured that we may construct a continuous $K_{ab}$ with any requisite asymptotics, we suppose now that we have done this and turn to the vacuum Hamiltonian constraint, which in the conformal method takes the form of the Lichnerowicz equation (\ref{eqn:lichnerowicz}),
\be \label{eqn:lichnerowicz2}
-(q_n+2)\Delta_g \varphi + R(g) \varphi = \varphi^{-q_n-1} |K|_g^2
\ee
(we recall $q_n := \frac{2n}{n-2}$), to be solved for the conformal factor $\varphi > 0$ satisfying $\lim_{|x| \to \infty} \varphi = 1$. Despite the nonlinearity of this equation, the problem of existence and uniqueness of solutions has been thoroughly resolved in the literature \cite{bruhat-isenberg,maxwell_solutions_2005}. One first observes that, in addition to Assumption \ref{assumption:main}, it is necessary to require that $g$ is {\it Yamabe positive}, which means that the conformal invariant
\be
Y_g := \inf_{\substack{f\in C^\infty_c\\ f\not\equiv 0}} \frac{\int_{\Reals^n} \left[ (q_n+2) |\nabla f|^2 + R(g) f^2 \right] \; dV_g}{||f||_{L^{q_n}}^2}
\ee
is strictly positive. To see that this is a necessary condition, we suppose that there exists a positive solution $\varphi$ to equation \eqref{eqn:lichnerowicz2} with $\varphi-1 \in W^{k,p}_\delta$ for some $\delta < 0$. Under the conformal change specified by $\varphi$, one has $Y_{\tilde g} > 0$ because $R(\tilde g) = |K|_{\tilde g}^2$ is positive and continuous, and because Sobolev embedding ensures that
\[ \|f\|_{L^{q_n}}^2 \lesssim  \|\nabla f\|_{L^2}^2 + \| R(\tilde g)^{1/2} f\|_{L^2}^2 .\]
The notation ``$\lesssim$" here indicates that one has a standard inequality, with ``$\leq$", up to an overall positive multiplicative constant independent of the quantities of interest (in this case, of $f \in C_c^\infty$).
Conformal invariance now yields $Y_g > 0$. 

It turns out that $Y_g > 0$ is also a sufficient condition, as indicated by the following existence theorem for solutions of the Lichnerowicz equation for asymptotically Euclidean seed data. This theorem is implied by Theorem 4.4 of \cite{maxwell_solutions_2005}; for uniqueness, see the discussion in Section VIII of \cite{bruhat-isenberg}.

\begin{theorem} \label{thm:lichnerowicz-existence-slow}
Suppose that $g_{ab}$ satisfies Assumption \ref{assumption:main} with parameters $k$, $p$, and $\tau$ with $2-n < \tau < 0$, and that $K_{ab} \in W^{k-1,p}_{\tau-1}$. If $Y_g > 0$, then there exists a unique positive conformal factor $\varphi$ solving the Lichnerowicz equation \eqref{eqn:lichnerowicz2} with $\varphi - 1 \in W^{k,p}_\tau$.
\end{theorem}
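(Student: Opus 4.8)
The plan is to reduce the Lichnerowicz equation \eqref{eqn:lichnerowicz2} to a semilinear problem with the ``right'' monotonicity and then run a sub- and super-solution argument built on the weighted linear theory for Laplacian-type operators. First I would use Yamabe positivity to normalize the background. Since $Y_g>0$, the conformal Laplacian $\mathcal L_g := -(q_n+2)\Delta_g + R(g)$ is coercive, and by the scalar analogue of the weighted elliptic theory of Appendix \ref{app:vec-fredholm} (cf.\ \cite{bartnik1}) it is an isomorphism $W^{k,p}_\tau\to W^{k-2,p}_{\tau-2}$ for $2-n<\tau<0$; solving $\mathcal L_g u = -R(g)$ and setting $\theta:=1+u$ produces a positive function $\theta$ with $\theta-1\in W^{k,p}_\tau$ and $R(\theta^{q_n-2}g)=0$. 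Conformal covariance of $\mathcal L_g$ shows that \eqref{eqn:lichnerowicz2} for $(g,K)$ is equivalent, under $\varphi=\theta\varphi'$ and $K\mapsto\theta^{-2}K$, to the same equation for $(\theta^{q_n-2}g,\theta^{-2}K)$, and Lemma \ref{lem:mult-basic} shows that this new data again satisfies the hypotheses and that $\varphi-1\in W^{k,p}_\tau$ iff $\varphi'-1\in W^{k,p}_\tau$. I may therefore assume $R(g)\equiv 0$, so that $L:=-(q_n+2)\Delta_g$ kills the constant $1$ and obeys the maximum principle.

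Next I would build global barriers. Since $K\in W^{k-1,p}_{\tau-1}$ and $k-1>n/p$, Lemma \ref{lem:mult-basic} gives $|K|_g^2\in W^{k-1,p}_{2\tau-2}\subset W^{k-2,p}_{\tau-2}$, and $\tau-2$ lies in the isomorphism range of $L$ (here is exactly where $2-n<\tau<0$ is needed), so there is a unique $w\in W^{k,p}_\tau$ with $Lw=|K|_g^2$; as $w$ is $g$-superharmonic and decays to zero, the minimum principle forces $w\ge 0$. Then $\varphi_+:=1+w\ge 1$ is a super-solution, $L\varphi_+=|K|_g^2\ge|K|_g^2\varphi_+^{-q_n-1}$, and the constant $\varphi_-\equiv 1$ is a sub-solution, $L\varphi_-=0\le|K|_g^2\varphi_-^{-q_n-1}$, with $\varphi_-\le\varphi_+$. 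Because $t\mapsto|K|_g^2t^{-q_n-1}$ is decreasing on $t>0$ and $|K|_g^2$ is bounded ($K$ being continuous and decaying), a standard sub/super-solution scheme --- monotone iteration of a suitably shifted linear operator, or direct minimization of $\int \tfrac{q_n+2}{2}|\nabla\varphi|_g^2 + \tfrac{1}{q_n}|K|_g^2\varphi^{-q_n}\,dV_g$ over the order interval $[\varphi_-,\varphi_+]$ --- yields a solution $\varphi$ of \eqref{eqn:lichnerowicz2} with $1\le\varphi\le 1+w$, hence trapped between positive constants. To promote the asymptotics I would bootstrap: write $L(\varphi-1)=|K|_g^2\varphi^{-q_n-1}$, note that $\varphi^{-q_n-1}$ is bounded and that its deviation from $1$ is controlled via composition of the smooth map $t\mapsto t^{-q_n-1}$ with $\varphi$ and the bound $0\le\varphi-1\le w\in W^{k,p}_\tau$, so the right-hand side lies in $W^{k-2,p}_{\tau-2}$; the isomorphism property of $L$ then gives $\varphi-1\in W^{k,p}_\tau$, and a final application of Lemma \ref{lem:mult-basic} with interior elliptic regularity supplies the stated $W^{k,p}$-regularity.

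For uniqueness I would use the classical Lichnerowicz argument (cf.\ Section VIII of \cite{bruhat-isenberg}): if $\varphi_1,\varphi_2$ are two solutions with $\varphi_i-1\in W^{k,p}_\tau$, subtracting the equations and pairing the difference with $\varphi_1-\varphi_2$ produces, after an integration by parts whose boundary terms vanish thanks to the decay of $\varphi_i-1$, an identity in which the coercivity of $L$ competes against a term that, by the strict monotonicity of $t\mapsto|K|_g^2 t^{-q_n-1}$, has the favorable sign; this forces $\varphi_1\equiv\varphi_2$.

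I expect the genuine obstacles to be two bookkeeping-heavy points rather than any deep new idea: first, carrying out the Yamabe normalization within the class $W^{k,p}_\tau$ while retaining \emph{positivity} of $\theta$ --- equivalently, extracting a usable maximum principle from $Y_g>0$, which is the one place the hypothesis is genuinely invoked; and second, checking that every source term one meets --- $|K|_g^2$ and its nonlinear dressing --- lands precisely in the window $W^{k-2,p}_{\tau-2}$ on which $L$ (and $\mathcal L_g$) is an isomorphism, which is exactly why the restriction $2-n<\tau<0$ is imposed. Once these are in place the conclusion is, as already noted, essentially Theorem 4.4 of \cite{maxwell_solutions_2005}.
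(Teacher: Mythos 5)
The paper does not actually prove this theorem: it is quoted from the literature, with existence attributed to Theorem 4.4 of \cite{maxwell_solutions_2005} and uniqueness to Section VIII of \cite{bruhat-isenberg}. Your proposal is, in substance, a reconstruction of the standard proof behind those citations: use $Y_g>0$ to pass to a scalar-flat representative (the Cantor--Brill/Maxwell characterization, where positivity of the normalizing factor $\theta$ is indeed the one nontrivial use of Yamabe positivity), then build the barriers $\varphi_-\equiv 1$ and $\varphi_+=1+w$ with $-(q_n+2)\Delta_g w=|K|_g^2$, run a monotone iteration, and bootstrap the decay. The weight bookkeeping you do ($|K|_g^2\in W^{k-1,p}_{2\tau-2}\subset W^{k-2,p}_{\tau-2}$, isomorphism of $-\Delta_g$ on $W^{k,p}_\tau\to W^{k-2,p}_{\tau-2}$ for $2-n<\tau<0$) is correct, and the conformal-covariance reduction with $\hat K=\theta^{-2}K$ checks out. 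So the existence half is sound and matches the cited argument; it is not a different route so much as the route the paper declines to write out.

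The one genuine soft spot is your uniqueness argument. Pairing the difference equation with $v=\varphi_1-\varphi_2$ and integrating by parts requires the sphere boundary terms $\int_{\partial B_R} v\,\partial_\nu v\,d\overline A=O(R^{\,n-2+2\tau})$ to vanish, equivalently $\nabla v\in L^2$; this holds only for $\tau<1-\tfrac n2$, whereas the theorem allows any $\tau\in(2-n,0)$ (for $n=3$, any $\tau\in(-\tfrac12,0)$ breaks the estimate). The decay of $\varphi_i-1$ alone therefore does not justify discarding the boundary terms. The standard repair, and essentially what \cite{bruhat-isenberg} does, is to avoid the energy identity: write
\begin{equation*}
-(q_n+2)\Delta_g v + V v = 0, \qquad V := (q_n+1)\,|K|_g^2\int_0^1 \bigl(s\varphi_1+(1-s)\varphi_2\bigr)^{-q_n-2}\,ds \;\ge\; 0,
\end{equation*}
note that $V\in W^{k-2,p}_{\tau-2}$ by Lemma \ref{lem:mult-basic}, and invoke the injectivity of $-\Delta_g+V$ with $V\ge 0$ on $W^{k,p}_\tau$, $2-n<\tau<0$ (Proposition \ref{prop:scalar-lap-fredholm}, which rests on the maximum principle rather than an $L^2$ pairing). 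Alternatively one can first bootstrap the decay of $v$ before integrating by parts, but the linearization argument is cleaner and stays entirely within the paper's toolkit.
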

\qed

It is useful to also state precisely the implication of this result in the case of faster decay:

\begin{corollary} \label{cor:lichnerowicz-existence-fast}
Suppose that $g_{ab}$ is asymptotically Euclidean of class $W^{k,p}_{(2-n)^+}$ with $k > 1+n/p$, and that $K_{ab} \in W^{k-1,p}_{(1-n)^+}$. If $Y_g > 0$, then there exists a unique positive conformal factor $\varphi$ solving the Lichnerowicz equation \eqref{eqn:lichnerowicz2} with $\varphi - 1 \in W^{k,p}_{(2-n)^+}$.
\end{corollary}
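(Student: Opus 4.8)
The plan is to deduce this from Theorem \ref{thm:lichnerowicz-existence-slow}, applied simultaneously over the whole admissible range of weights, exploiting that the ``plus'' space of Definition \ref{def:weighted-basic} is by construction an intersection of ordinary weighted spaces. First I would unwind the hypotheses: $g_{ab}-\delta_{ab}\in W^{k,p}_{(2-n)^+}$ says $g_{ab}-\delta_{ab}\in W^{k,p}_{\tau}$ for every $\tau>2-n$, and $K_{ab}\in W^{k-1,p}_{(1-n)^+}$ says $K_{ab}\in W^{k-1,p}_{\tau-1}$ for every such $\tau$, since $\tau-1>1-n$ precisely when $\tau>2-n$. So for each fixed $\tau\in(2-n,0)$, the metric $g$ -- paired with the trivial lapse $N\equiv 1$, so that $\delta N=0\in W^{k,p}_\tau$ -- satisfies Assumption \ref{assumption:main} with parameters $k,p,\tau$, the tensor $K_{ab}$ lies in $W^{k-1,p}_{\tau-1}$, and $Y_g>0$ by hypothesis; Theorem \ref{thm:lichnerowicz-existence-slow} then yields a unique positive $\varphi_\tau$ solving \eqref{eqn:lichnerowicz2} with $\varphi_\tau-1\in W^{k,p}_\tau$.

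Next I would check that these solutions all agree and collapse the family to a single function. If $2-n<\tau_1\le\tau_2<0$, the inclusion $W^{k,p}_{\tau_1}\subseteq W^{k,p}_{\tau_2}$ shows $\varphi_{\tau_1}$ is also a positive solution with $\varphi_{\tau_1}-1\in W^{k,p}_{\tau_2}$, so the uniqueness clause of Theorem \ref{thm:lichnerowicz-existence-slow} forces $\varphi_{\tau_1}=\varphi_{\tau_2}$. Calling this common function $\varphi$, we get $\varphi-1\in W^{k,p}_\tau$ for every $\tau>2-n$, hence $\varphi-1\in\bigcap_{\epsilon>0}W^{k,p}_{2-n+\epsilon}=W^{k,p}_{(2-n)^+}$, which is the claimed existence. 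For the uniqueness statement, any positive $\varphi'$ solving \eqref{eqn:lichnerowicz2} with $\varphi'-1\in W^{k,p}_{(2-n)^+}$ satisfies $\varphi'-1\in W^{k,p}_{2-n+\epsilon}$ for $\epsilon>0$ small enough that $2-n+\epsilon\in(2-n,0)$, and then the uniqueness part of Theorem \ref{thm:lichnerowicz-existence-slow} identifies $\varphi'$ with $\varphi_{2-n+\epsilon}=\varphi$.

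I do not expect a genuine obstacle: no elliptic bootstrap is needed, since Theorem \ref{thm:lichnerowicz-existence-slow} already solves the equation at every weight strictly above the exceptional value $2-n$, and $W^{k,p}_{(2-n)^+}$ is exactly the intersection of those spaces -- this is also the reason one lands in the plus space rather than in $W^{k,p}_{2-n}$ itself. The only points warranting a line of care are the monotonicity of $W^{k,p}_\delta$ in $\delta$ (used both to merge the $\varphi_\tau$ into one function and to reduce uniqueness in the plus space to uniqueness at a single non-exceptional weight) and the minor formality of choosing a lapse so that Assumption \ref{assumption:main} is met verbatim.
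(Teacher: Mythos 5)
Your proposal is correct and is precisely the argument the paper intends (the corollary is stated with no written proof, as an immediate consequence of Theorem \ref{thm:lichnerowicz-existence-slow}): since $W^{k,p}_{(2-n)^+}$ is the intersection of the spaces $W^{k,p}_{\tau}$ over $\tau>2-n$, one applies the theorem at each non-exceptional weight in $(2-n,0)$ and uses its uniqueness clause, together with the monotonicity $W^{k,p}_{\tau_1}\subseteq W^{k,p}_{\tau_2}$ for $\tau_1\le\tau_2$, to merge the solutions and to reduce uniqueness in the plus space to uniqueness at a single weight. Your handling of the lapse ($N\equiv1$) to satisfy Assumption \ref{assumption:main} verbatim is the right formality.
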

\qed

As in Section \ref{sec:momentum}, understanding the asymptotics of these solutions requires a finer analysis of the next decay range $1-n< \delta < 2-n$. Since we are now dealing with the simpler scalar Laplacian rather than the vector Laplacian, much of the requisite analysis is a simplified version of that developed in Section \ref{sec:momentum} and is well understood in the literature. To make the relation between the analyses of these two Laplacians explicit, we present this theory in a manner closely analogous to the discussion in Section \ref{sec:momentum}. To most naturally incorporate the boundary condition $\varphi \to 1$, we write $\varphi = 1 + u$ and recast \eqref{eqn:lichnerowicz2} as follows:
\be \label{eqn:lichnerowicz-u}
-(q_n+2) \Delta_g u + (1+u)R(g) = (1+u)^{-q_n-1} |K|_g^2
\ee
This is the equation we analyze. We begin our analysis by noting the close connection between $\Delta_g$ and the Euclidean Laplacian $\overline \Delta$ induced by the following equation:
\be \label{eqn:lap-diff}
\Delta_g u = \overline \Delta u + \sum_{|\alpha| = 1}^2 h^\alpha \partial_\alpha u,
\ee
with each $h^{\alpha} \in W^{k-2+|\alpha|,p}_{\tau-2+|\alpha|}$.
Since the following statement is well known, and since the details are much the same as those of Proposition \ref{prop:vec-lap-properties}, we omit the proof.

\begin{proposition}\label{prop:scalar-lap-fredholm}
    Suppose that $g_{ab}$ satisfies Assumption \ref{assumption:main} with parameters $k$, $p$, and $\tau$, 
    and that $V \in W^{k-2,p}_{\tau-2}$. If $\delta$ is non-exceptional, then
    $
        L = -\Delta_g + V : W^{k,p}_\delta \to W^{k-2,p}_{\delta-2}
    $
    is continuous and Fredholm, with its index equal to that of the Euclidean Laplacian acting between the same
    spaces. If $V \geq 0$, then $L$ is an isomorphism for $2-n < \delta < 0$.
\end{proposition}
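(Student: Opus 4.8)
The plan is to mirror the proof of Proposition \ref{prop:vec-lap-properties}, reducing every assertion to the corresponding fact about the flat Laplacian $\overline\Delta$. Continuity of $L=-\Delta_g+V$ between the stated spaces is immediate from Lemma \ref{lem:mult-basic}: equation \eqref{eqn:lap-diff} together with $V\in W^{k-2,p}_{\tau-2}$ shows that all coefficients of $L-(-\overline\Delta)$ decay strictly (since $\tau<0$), so the multiplication lemma places $(L-(-\overline\Delta))u$ in $W^{k-2,p}_{(\delta+\tau)-2}\subset W^{k-2,p}_{\delta-2}$. The Fredholm property and the index identity are the scalar specialization of the weighted elliptic theory of Appendix \ref{app:vec-fredholm} (equivalently, the classical theory of \cite{bartnik1}): for non-exceptional $\delta$, $\overline\Delta:W^{k,p}_\delta\to W^{k-2,p}_{\delta-2}$ satisfies a scale-broken a priori estimate; the perturbation $L-(-\overline\Delta)$ is lower order with coefficients of strictly faster decay and so is absorbed outside a large ball; compactness of $W^{k,p}_\delta\hookrightarrow L^p(B_{R_0})$ on a fixed ball then forces $L$ and its formal adjoint to have finite-dimensional kernels and closed ranges; and deforming $g\rightsquigarrow\overline g$, $V\rightsquigarrow 0$ through operators obeying the same uniform estimate shows $\mathrm{ind}(L)=\mathrm{ind}(\overline\Delta)$.

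It remains to prove the isomorphism claim when $V\ge 0$ and $2-n<\delta<0$. In this range $\overline\Delta$ has index zero: its kernel consists of harmonic functions that are $o(r^\delta)$ with $\delta<0$, hence vanish by Liouville, while its cokernel is dual to the kernel of $\overline\Delta$ on $W^{2-k,p'}_{2-n-\delta}$, which is trivial because $2-n-\delta<0$. By the index identity, $\mathrm{ind}(L)=0$ as well, so it suffices to show $\ker L=0$.

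The crux is a decay-improvement argument for kernel elements that parallels the technique used in the proof of Proposition \ref{prop:kernel}. Let $u\in W^{k,p}_\delta\cap\ker L$ and rewrite $Lu=0$ as $-\overline\Delta u=(\overline\Delta-\Delta_g)u-Vu$, whose right-hand side lies in $W^{k-2,p}_{(\delta+\tau)-2}$ by Lemma \ref{lem:mult-basic}. Every weight in $(2-n,0)$ is non-exceptional, so whenever $\delta+\tau>2-n$ we may solve this equation with $\overline\Delta$ on a weight slightly above $\delta+\tau$, where $\overline\Delta$ is an isomorphism, producing $\tilde u\in W^{k,p}_{(\delta+\tau)^+}$ with $\overline\Delta\tilde u=\overline\Delta u$; then $u-\tilde u$ is a globally harmonic function that is $o(r^\delta)$ with $\delta<0$, so $u=\tilde u$ and $u$ has improved to $W^{k,p}_{(\delta+\tau)^+}$. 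Iterating (no exceptional weight in $(2-n,0)$ obstructs the iteration) drives $u$ into $W^{k,p}_{(2-n)^+}$, so that $u=O(r^{2-n+\epsilon})$ and $\nabla_g u=O(r^{1-n+\epsilon})$ for every small $\epsilon>0$. Consequently, in the energy identity
\[
\int_{B_R}\left( |\nabla_g u|_g^2 + V u^2\right) dV_g = \int_{\partial B_R} u\,\partial_\nu u\; dA,
\]
(with $\nu$ the $g$-unit normal) the boundary term is $O(R^{2-n+2\epsilon})\to 0$ as $R\to\infty$. Since $V\ge 0$, both bulk integrands are nonnegative, forcing $\nabla_g u\equiv 0$; hence $u$ is constant, and $u\to 0$ at infinity gives $u\equiv 0$. (Alternatively, the last two steps can be replaced by a weak maximum principle: $-\Delta_g u+Vu=0$ with $V\ge 0$ and $u\to 0$ at infinity cannot attain a positive supremum or negative infimum, which requires only the regularity of $V$ supplied by Assumption \ref{assumption:main}.) Thus $\ker L=0$, so $L$ is an isomorphism.

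I expect the decay-improvement step to be the only point requiring care in the main body: without it the energy identity closes only for $\delta<1-\tfrac n2$, whereas the claim covers all of $2-n<\delta<0$. The remaining ingredients -- the scale-broken estimate, weighted Rellich compactness, and homotopy invariance of the index -- are exactly the scalar analogues of the material organized in Appendix \ref{app:vec-fredholm} (and are classical, \cite{bartnik1}), so no genuinely new machinery is needed, which is why the statement is presented without proof.
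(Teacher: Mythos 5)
Your proposal is correct, and it follows exactly the template the paper has in mind: the paper omits the proof, stating that the result is well known and that ``the details are much the same as those of Proposition \ref{prop:vec-lap-properties}.'' Your continuity argument, the perturbation/scale-broken-estimate route to the Fredholm property, and the homotopy argument for the index are precisely the scalar specializations of that proposition and of Appendix \ref{app:vec-fredholm}.

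The one place you genuinely diverge is the kernel-triviality step for $V\ge 0$. In the vector case the paper disposes of the kernel by citing an external uniqueness theorem (Theorem 6.4 of \cite{maxwell_solutions_2005}); the standard scalar reference would similarly be the maximum-principle argument of \cite{bartnik1} or \cite{bruhat-isenberg}. You instead give a self-contained proof: bootstrap the decay of a kernel element from $W^{k,p}_\delta$ down to $W^{k,p}_{(2-n)^+}$ by repeatedly inverting $\overline\Delta$ at improved weights (the same device the paper uses to build the $U_{(i)}$ in Proposition \ref{prop:kernel}), and then close an energy identity whose boundary term would not vanish without that improvement — your observation that the naive identity only works for $\delta<1-\tfrac n2$ is exactly right and is the reason the bootstrap is needed. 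This buys independence from the cited uniqueness results at the cost of a slightly longer argument; both are sound. The only point to keep an eye on is that $V$ need not be continuous under Assumption \ref{assumption:main}, so the integration by parts and the alternative maximum principle must be understood in the weak sense, as you note.
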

\qed

The following proposition and corollary are a variation of Proposition 1 in \cite{bieri_brill_2025} in that they extract the mass term from $\varphi$ to control the remaining decay. Following the thread of Section  \ref{sec:momentum}, we reformulate this as relying on the existence of a ``mass carrier" function-- compare this to Proposition \ref{prop:Wdual} (the constant function $1$ now plays the role analogous to that of the kernel vector fields $k_{(j)}^a$).

\begin{proposition}\label{prop:u0}
Suppose that $g_{ab}$ satisfies Assumption \ref{assumption:main} with parameters $k$, $p$, and $\tau$.
There exists a function $u_0\in W^{k,p}_{(2-n)^+}$ satisfying:
\begin{enumerate}[(a)]
\item $-\Delta_g u_0$ is smooth, nonnegative, and compactly supported,
\item $\int_{\Reals^n} (-\Delta_g u_0) \; dV_g = 1$,
\item \label{part:u0exp} for each $R>0$, there is a function $v\in W^{k,p}_{(2-n+\eta)^+}$ with $\eta=\max(\tau,-1)$
such that
\begin{equation}\label{eqn:u0expansion}
u_0 = \frac{r^{2-n}}{(n-2)|S^{n-1}|} + v
\end{equation}
on the exterior region $E_R$.
\end{enumerate}
\end{proposition}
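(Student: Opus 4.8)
The plan is to mirror the construction of the momentum carriers $W^a_{(j)}$ in Proposition \ref{prop:Wdual}, with the constant function $1$ now playing the role of the kernel vector fields $k_{(i)}^a$ and with the flat Green's function $\Phi := r^{2-n}/\bigl((n-2)|S^{n-1}|\bigr)$ of $-\overline\Delta$ (so that $-\overline\Delta\Phi$ is a unit mass at the origin) playing the role of the columns $G_{(j)}^a$ of the vector Green's function. First I would fix a nonnegative $\rho\in C^\infty_c(\Reals^n)$ with $\int_{\Reals^n}\rho\,dV_g = 1$ (possible, since any nonzero nonnegative bump has positive $g$-integral and one rescales). Because $\rho$ is smooth and compactly supported it lies in $W^{k-2,p}_{\delta-2}$ for every $\delta$, and Proposition \ref{prop:scalar-lap-fredholm} (with $V\equiv 0$) says $-\Delta_g:W^{k,p}_\delta\to W^{k-2,p}_{\delta-2}$ is an isomorphism for $2-n<\delta<0$; hence there is a unique $u_0\in W^{k,p}_\delta$ solving $-\Delta_g u_0 = \rho$, and uniqueness forces $u_0$ to be independent of $\delta\in(2-n,0)$, so $u_0\in W^{k,p}_{(2-n)^+}$. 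This immediately gives parts (a) and (b), since $-\Delta_g u_0 = \rho$ is smooth, nonnegative, compactly supported, and has unit $g$-integral.

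For the exterior expansion in part (c), I would compare $\Delta_g$ with $\overline\Delta$ via \eqref{eqn:lap-diff}: $\overline\Delta u_0 = \Delta_g u_0 - \sum_{1\le|\alpha|\le 2}h^\alpha\partial_\alpha u_0 = -\rho - \sum h^\alpha\partial_\alpha u_0$, and Lemma \ref{lem:mult-basic} together with $u_0\in W^{k,p}_{(2-n)^+}$ shows the right-hand side lies in $W^{k-2,p}_{(\tau-n)^+}$. The scalar analogues of Lemma \ref{lem:P-ext-exact} and Lemma \ref{lem:P-multipolar} --- classical facts about the flat Laplacian on weighted spaces, cf.\ \cite{bartnik1} --- then produce, for any fixed $R>0$, a function $x\in W^{k,p}_{(2-n+\tau)^+}$ with $\overline\Delta x = \overline\Delta u_0$ on $E_R$, so that $u_0-x$ is harmonic on $E_R$ and admits a multipole expansion $u_0 - x = c\,\Phi + w$ there, with $w$ a superposition of the faster-decaying exterior harmonics, $w\in W^{k,p}_{(1-n)^+}$. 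Since both $2-n+\tau$ and $1-n$ are at most $2-n+\eta$ with $\eta=\max(\tau,-1)$, setting $v := x+w$ yields $u_0 = c\,\Phi + v$ on $E_R$ with $v\in W^{k,p}_{(2-n+\eta)^+}$, which is the asserted form once we check $c=1$.

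To pin down $c$, I would run the flux argument familiar from the determination of the constants $c_{ij}$ in Proposition \ref{prop:Wdual}. The divergence theorem for $g$ on $B_R$ gives $\int_{B_R}(-\Delta_g u_0)\,dV_g = -\int_{\partial B_R}\langle\nabla_g u_0,\nu_g\rangle\,dA_g$. Substituting $u_0 = c\,\Phi + v$ on $E_R$, the contribution of $v$ is $o(R^{1-n})\cdot O(R^{n-1}) = o(1)$ (here $k>1+n/p$ ensures $\nabla_g v$ is continuous and $o(r^{1-n})$), and since $g-\overline g$, $\nu_g-\nu_{\overline g}$, and $dA_g-dA_{\overline g}$ all decay like $r^\tau$ the remaining $\Phi$-contribution converges to $-c\lim_{R\to\infty}\int_{\partial B_R}\langle\nabla_{\overline g}\Phi,\nu_{\overline g}\rangle\,dA_{\overline g} = c$. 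Thus $\int_{\Reals^n}(-\Delta_g u_0)\,dV_g = c$; but this integral equals $\int_{\Reals^n}\rho\,dV_g = 1$ by construction, so $c=1$ and part (c) follows.

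The routine parts --- existence via the isomorphism property of $-\Delta_g$, and the exterior expansion via the flat-Laplacian lemmas --- go through exactly as in, and are a strict simplification of, the vector-Laplacian arguments of Section \ref{sec:momentum}. The main obstacle is the last step: one must keep careful track of the metric-dependent corrections in the boundary flux integral and verify that every such term is subleading, so that only the monopole coefficient $c$ survives in the limit $R\to\infty$.
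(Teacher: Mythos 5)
Your proposal is correct and follows essentially the same route as the paper's proof: solve $-\Delta_g u_0=\mu$ using the isomorphism range of the scalar Laplacian, compare with $\overline\Delta$ to show the flat Laplacian of $u_0$ decays like $W^{k-2,p}_{(\tau-n)^+}$, solve away that source on an exterior region, expand the remaining harmonic function in multipoles, and fix the monopole coefficient by a divergence-theorem flux computation. The only cosmetic difference is that you normalize the flat Green's function up front so the coefficient comes out as $1$, whereas the paper carries an undetermined $\alpha$ and evaluates it at the end.
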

\begin{proof}
Let $\mu$ be a smooth, non-negative, compactly supported function with $\int_{\Reals^n} \mu \; dV_g=1$.
Hence $\mu\in W^{k-2,p}_{\delta}$ for every $\delta$, so Proposition \ref{prop:scalar-lap-fredholm} implies that we can find $u_0\in W^{k,p}_{(2-n)^+}$
with $-\Delta_g u_0 = \mu$. We observe that
\[
\overline{\Delta} u_0 = \underbrace{\Delta_g u_0 +(\overline\Delta -\Delta_g) u_0}_{=: F},
\]
where the first term contained in $F$ is smooth and compactly supported, and the 
second term (and hence $F$ itself) belongs to $W^{k-2,p}_{(\tau-n)^+}$ as a consequence of equation \eqref{eqn:lap-diff} and Lemma \ref{lem:mult-basic}. 
Lemma 1 of \cite{bieri_brill_2025} implies that, for each $R > 0$, there exists a function $w\in W^{k,p}_{(2-n+\tau)^+}$ 
with $\overline{\Delta} w= F$ on $E_R$. Thus $h:=u_0-w$ satisfies $\overline{\Delta} h = 0$
on $E_R$. Since $h$ decays as $o(1)$, the classical multipole expansion for harmonic functions 
shows that
\[
h = \alpha r^{2-n} + \delta h
\]
on $E_R$, where $\alpha$ is a constant and $\delta h$ is a harmonic function of order $1-n$
on $\Reals^n\setminus\{0\}$.

Now choose $\chi$ to be a cutoff function that equals $0$ on $B_{R/2}$ and 1 on $E_R$. We set
\[
v = \chi w + \chi \delta h,
\]
and we observe that the decay rates of the two terms imply that  $v\in W^{k,p}_{(2-n+\eta)^+}$
with $\eta = \max(\tau,-1)$.  Since $u_0 = \alpha r^{2-n} + v$ on $E_R$, it remains only to determine the value of $\alpha$.  We integrate by parts and invoke the fast decay of $v$ to find that
\[
1 = \int_{\Reals^n} (-\Delta_g u_0) \; dV_g = \lim_{R' \to \infty} \int_{\partial B_{R'}} -(\nabla_a u_0) \nu^a \; d A = \lim_{R' \to \infty} \int_{\partial B_{R'}} -\frac{\partial u_0}{\partial r} \; d \overline{A} = \alpha (n-2)|S^{n-1}|,
\]
as desired.
\\
\end{proof}
As in Proposition \ref{prop:fastdecay} with the momentum carrier vector fields, this result now allows us to peel off a mass term and control the remainder's decay:

\begin{corollary}\label{cor:Lap-next-range}
    Suppose that $g_{ab}$ satisfies Assumption \ref{assumption:main} with parameters $k$, $p$, and $\tau$.
    Suppose that $1-n<\delta<2-n$, choose $f\in W^{k-2,p}_{\delta-2}$, and set $c = \int_{\Reals^n} f dV_g$.  
    If $u \in W^{k,p}_{(2-n)^+}$ satisfies $\Delta_g u = f$, then
    \[
    u = c u_0 + v,
    \]
    where $u_0$ is the function from Proposition \ref{prop:u0} and 
    $v\in W^{k,p}_{\delta}$.
\end{corollary}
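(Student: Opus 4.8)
The plan is to mirror the structure of the proof of Proposition \ref{prop:fastdecay}, using the mass carrier $u_0$ in place of the momentum carriers $W^a_{(i)}$ and the constant function $1$ in place of the kernel vector fields $k^a_{(i)}$. The key observation is that for $1-n<\delta<2-n$, the Euclidean Laplacian $\overline\Delta: W^{k,p}_\delta \to W^{k-2,p}_{\delta-2}$ (and hence, by Proposition \ref{prop:scalar-lap-fredholm}, also $\Delta_g$ between these spaces) is Fredholm but not surjective: it has a one-dimensional cokernel, with the single solvability obstruction being $\int_{\Reals^n} f\, dV_g = 0$ (this is the scalar analogue of the obstruction integrals $\mathcal O_i$, arising because the constant function $1$ lies in the kernel of $\Delta_g$ acting on the dual weight $W^{k,p}_{2-n-\delta}$, which contains bounded functions since $0<2-n-\delta<1$).

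First I would set $f' := f - c\,(-\Delta_g u_0)$, where $c = \int f\, dV_g$. Since $-\Delta_g u_0$ is smooth and compactly supported with $\int_{\Reals^n}(-\Delta_g u_0)\,dV_g = 1$ by Proposition \ref{prop:u0}(a)--(b), the function $f'$ retains the regularity and decay of $f$, so $f' \in W^{k-2,p}_{\delta-2}$, and moreover $\int_{\Reals^n} f'\, dV_g = c - c = 0$. Thus the solvability obstruction vanishes, and the Fredholm theory (the scalar analogue of Proposition \ref{prop:vec-lap-properties}\eqref{part:P-solvable}) furnishes a unique $v \in W^{k,p}_\delta$ with $\Delta_g v = -f'$, i.e. $-\Delta_g v = f - c(-\Delta_g u_0)$, so that $\Delta_g(c u_0 + v) = f$. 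Then $c u_0 + v$ and $u$ are both elements of $W^{k,p}_{(2-n)^+}$ solving $\Delta_g w = f$; since $\delta > 2-n$ implies $W^{k,p}_\delta$ and $W^{k,p}_{2-n+\epsilon}$ both avoid nonzero constants, uniqueness of the solution in $W^{k,p}_{(2-n)^+}$ (guaranteed because $\Delta_g$ on $W^{k,p}_{(2-n)^+}$ has trivial kernel — the only candidate kernel elements decay like $r^{2-n}$ and below, and a standard argument shows no nonzero such harmonic-at-infinity function solves the homogeneous equation) forces $u = c u_0 + v$, completing the proof.

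The only genuinely delicate point — the analogue of the obstruction analysis in Section \ref{sec:momentum} — is justifying that for the weight range $1-n<\delta<2-n$, solvability of $\Delta_g v = g'$ in $W^{k,p}_\delta$ is governed precisely by the single condition $\int g'\, dV_g = 0$, and that the solution in $W^{k,p}_{(2-n)^+}$ is unique. Both facts follow from the scalar Fredholm theory analogous to Proposition \ref{prop:vec-lap-properties} (with the kernel of $\Delta_g$ on the dual space $W^{k,p}_{2-n-\delta}$ being spanned by the constant $1$, by the scalar analogue of Proposition \ref{prop:kernel}, since $0 < 2-n-\delta < 1$), which the excerpt has set up as standard; so in the write-up I would simply invoke Proposition \ref{prop:scalar-lap-fredholm} together with this kernel identification rather than reproving it. Everything else is bookkeeping: the decay rates propagate cleanly because $-\Delta_g u_0$ is compactly supported, so no multiplication-lemma subtleties arise.
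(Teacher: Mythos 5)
Your strategy is exactly the intended one: the corollary is the scalar analogue of Proposition \ref{prop:fastdecay}, with the mass carrier $u_0$ replacing the momentum carriers, the constant function $1$ spanning the kernel of $\Delta_g$ on the dual weight $2-n-\delta\in(0,1)$, and $\int_{\Reals^n} f\,dV_g$ as the single solvability obstruction. Your appeals to Proposition \ref{prop:scalar-lap-fredholm} for solvability once the obstruction is removed, and to triviality of the kernel on negative weights for uniqueness in $W^{k,p}_{(2-n)^+}$, are both sound (though the clause ``since $\delta>2-n$'' should read $\delta<2-n$; the relevant point is only that all weights in play are negative, so no nonzero constants occur).

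However, the algebra in your middle step does not close. Writing $\mu:=-\Delta_g u_0$, so that $\int_{\Reals^n}\mu\,dV_g=1$ and $f'=f-c\mu$, your choice $\Delta_g v=-f'$ gives
\[
\Delta_g(cu_0+v)\;=\;c\,\Delta_g u_0+\Delta_g v\;=\;-c\mu-(f-c\mu)\;=\;-f,
\]
not $f$; so $cu_0+v$ and $u$ solve different equations and the concluding uniqueness argument does not apply. Tracking the signs correctly: the function $w:=u+cu_0\in W^{k,p}_{(2-n)^+}$ satisfies $\Delta_g w=f-c\mu$, whose integral vanishes, so uniqueness places $w$ in $W^{k,p}_\delta$ and yields $u=-cu_0+w$. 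This is not a defect of your method but of the statement itself: a check against the Euclidean model (if $\overline\Delta u=f$ with $u\to 0$ then $u\sim -\tfrac{c}{(n-2)|S^{n-1}|}r^{2-n}$, whereas $u_0\sim +\tfrac{1}{(n-2)|S^{n-1}|}r^{2-n}$ by \eqref{eqn:u0expansion}), or a maximum-principle argument for $f\ge 0$, confirms that the conclusion should read $u=-cu_0+v$, equivalently that $c$ should be defined as $-\int_{\Reals^n}f\,dV_g$. The discrepancy is harmless in the paper's subsequent uses, which only require that the coefficient of $u_0$ be \emph{some} constant, but your write-up should not assert $\Delta_g(cu_0+v)=f$ for the particular $v$ you constructed.
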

\qed

While this reads as a statement about the Poisson equation, it can be applied to the Lichnerowicz equation \eqref{eqn:lichnerowicz-u} as follows: 
If $1-n < \tau < 2-n$ and $u \in W^{k,p}_{(2-n)^+}$ solves \eqref{eqn:lichnerowicz-u} with $K_{ab} \in W^{k-1,p}_{(1-n)^+}$, then $\Delta_g u = f$ for $f \in W^{k-2,p}_{\tau-2}$ defined by
\be \label{eqn:f-lich}
f = \frac{1+u}{q_n+2}R(g) - \frac{(1+u)^{-q_n-1}}{q_n+2} |K|_g^2.
\ee
Combined with Proposition \ref{prop:u0}\eqref{part:u0exp}, Corollary \ref{cor:Lap-next-range} now ensures that there is a constant $C$ and a function $v \in W^{k,p}_\tau$ such that
\[ \varphi := 1 + u = 1 + \frac{C}{r^{n-2}} + v.\]
This yields the expected mass term $Cr^{2-n}$ plus a remainder with prescribed decay. We observe that unlike the ADM momentum components in Proposition \ref{prop:fastdecay} and Theorem \ref{thm:momentum-summary}, the analogous integral $\int_{\Reals^n} f dV_g$ is not computable directly from the seed data here, so this result may not be leveraged to prescribe the ADM mass of the resulting initial data set a priori.

It remains to discuss the more delicate case of allowing threshold decay $O(r^{2-n})$ terms in $g_{ab}$, which we handle in a manner directly analogous to Section \ref{sec:momentum}. Moving forward, we slightly alter Assumption \ref{assumption:main} to assume that $g$ is asymptotically Euclidean of class $W^{k,p}_{(2-n)^+}$, still satisfying $k > 1+n/p$. Given $K_{ab} \in W^{k-1,p}_{(1-n)^+}$, Corollary \ref{cor:lichnerowicz-existence-fast} still guarantees the existence and uniqueness of a solution $u \in W^{k,p}_{(2-n)^+}$ to equation \eqref{eqn:lichnerowicz-u} as long as $g$ is Yamabe positive, but we wish to ensure that no logarithmic terms arise and that we can control the decay of any remainder beyond the $O(r^{2-n})$ terms. We supplement Corollary \ref{cor:Lap-next-range} with the following (compare Proposition \ref{prop:W-threshold-decomp}):

\begin{lemma}
\label{lem:borderline} 
Suppose that $g_{ab}$ is asymptotically Euclidean of class  $W^{k,p}_{(2-n)^+}$ with $k > 1+n/p$.  
Let $\zeta \in W^{k-2,p}(S^{n-1})$ satisfy $\int_{S^{n-1}} \zeta\; dV_{S^{n-1}} = 0$, and suppose that $u\in W^{k,p}_\delta$ for some $\delta<0$ satisfies
\[
\Delta_g u = \zeta r^{-n}
\]
on the exterior region $E_R$ for some $R > 0$. Then there is a function $v\in W^{k,p}_{(1-n)^+}$
and a function $\omega\in W^{k,p}(S^{n-1})$ such that, on $E_R$,
\[
u = \omega r^{2-n} + v.
\]  
\end{lemma}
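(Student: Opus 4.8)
The plan is to follow the pattern of Proposition~\ref{prop:W-threshold-decomp}, with the coupled spherical system of Proposition~\ref{prop:flat-P-threshold} replaced by the ordinary Laplace equation on the round sphere and with Corollary~\ref{cor:Lap-next-range} and Proposition~\ref{prop:u0} taking over the roles of Propositions~\ref{prop:fastdecay} and~\ref{prop:Wdual}. First I would solve the model problem on the sphere: for $\omega$ a function on $S^{n-1}$, extend it with degree-zero homogeneity and set $\overline w:=r^{2-n}\omega$, so that the polar form of the flat Laplacian gives $\overline\Delta\overline w=r^{-n}\slashed\Delta\omega$ --- the radial eigenvalue $\alpha(\alpha+n-2)$ vanishing precisely at the critical homogeneity $\alpha=2-n$. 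Since $\slashed\Delta$ is self-adjoint on $S^{n-1}$ with kernel the constants, $\slashed\Delta\omega=\zeta$ is solvable exactly under the hypothesis $\int_{S^{n-1}}\zeta\;dV_{S^{n-1}}=0$, and elliptic regularity yields $\omega\in W^{k,p}(S^{n-1})$, unique up to an additive constant; fixing such $\omega$ and a cutoff $\chi$ equal to $1$ on $E_R$ and vanishing near the origin gives $\chi\overline w\in W^{k,p}_{(2-n)^+}$ with $\overline\Delta(\chi\overline w)=r^{-n}\zeta$ on $E_R$.

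Next I would reduce to the sub-threshold regime. Replacing $u$ by $\chi u$ (which agrees with $u$ on $E_R$ and lies in $W^{k,p}_\delta$) and setting $\tilde v:=\chi u-\chi\overline w$, one finds $\Delta_g\tilde v=-(\Delta_g-\overline\Delta)\overline w$ on $E_R$, while off $E_R$ the quantity $\Delta_g\tilde v$ is supported in a compact subset of $\Reals^n\setminus\{0\}$. Since $\overline w$ is homogeneous of degree $2-n$ and, under the hypothesis $g_{ab}-\delta_{ab}\in W^{k,p}_{(2-n)^+}$, the coefficients in \eqref{eqn:lap-diff} lie in $W^{k-2+|\alpha|,p}_{(-n+|\alpha|)^+}$, Lemma~\ref{lem:mult-basic} gives $(\Delta_g-\overline\Delta)\overline w\in W^{k-2,p}_{(2-2n)^+}$, whence $\Delta_g\tilde v\in W^{k-2,p}_{(2-2n)^+}$ globally. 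Because $2-2n<2-n$, for every $s\in(2-n,0)$ we have $\Delta_g\tilde v\in W^{k-2,p}_{s-2}$, so Proposition~\ref{prop:scalar-lap-fredholm} (with $V\equiv0$) yields a unique $\hat v\in W^{k,p}_s$ with $\Delta_g\hat v=\Delta_g\tilde v$; uniqueness across the isomorphism range makes $\hat v$ independent of $s$, so $\hat v\in W^{k,p}_{(2-n)^+}$, and since $\tilde v-\hat v$ is globally $\Delta_g$-harmonic and lies in a space $W^{k,p}_{\delta'}$ with $\delta'<0$, injectivity forces $\tilde v=\hat v\in W^{k,p}_{(2-n)^+}$.

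Finally, with $\tilde v\in W^{k,p}_{(2-n)^+}$ and $\Delta_g\tilde v\in W^{k-2,p}_{(2-2n)^+}\subset W^{k-2,p}_{s'-2}$ for every $s'\in(1-n,2-n)$, Corollary~\ref{cor:Lap-next-range} gives $\tilde v=c\,u_0+v'$ with $c=\int_{\Reals^n}(\Delta_g\tilde v)\,dV_g$ and $v'\in W^{k,p}_{s'}$; as $c$ and $v'=\tilde v-c\,u_0$ do not depend on $s'$, in fact $v'\in W^{k,p}_{(1-n)^+}$. Invoking Proposition~\ref{prop:u0}\eqref{part:u0exp} with any $\tau\in(2-n,0)$ (and letting $\tau\downarrow2-n$) writes $u_0=\tfrac{r^{2-n}}{(n-2)|S^{n-1}|}+v_0$ with $v_0\in W^{k,p}_{(1-n)^+}$. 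On $E_R$ one has $\tilde v=u-\overline w$, hence
\[
u \;=\; r^{2-n}\Bigl(\omega+\tfrac{c}{(n-2)|S^{n-1}|}\Bigr)\;+\;\bigl(c\,v_0+v'\bigr),
\]
which is the asserted decomposition, with $\omega$ replaced by the element $\omega+c/((n-2)|S^{n-1}|)$ of $W^{k,p}(S^{n-1})$ and with $v:=c\,v_0+v'\in W^{k,p}_{(1-n)^+}$.

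The only genuinely new ingredient --- the scalar counterpart of the lengthy Proposition~\ref{prop:flat-P-threshold} --- is the first step's identification of $2-n$ as the critical homogeneity, which collapses solvability of the model equation to exactly the stated mean-zero condition on $\zeta$; everything afterward is bookkeeping that leans on the scalar theory already in place. The point that needs care is the globalization in the second step: $u$ is assumed to solve the equation only on $E_R$ and to decay at some possibly slow rate $\delta<0$, so one must first upgrade it --- using the isomorphism property of $\Delta_g$ against the comfortably sub-threshold right-hand side $O(r^{2-2n})$ --- to an honest element of $W^{k,p}_{(2-n)^+}$ before the mass-extraction results apply. Alternatively one could stay on exterior regions throughout, solving $\overline\Delta$ there by Lemma~1 of \cite{bieri_brill_2025} together with the classical multipole expansion as in the proof of Proposition~\ref{prop:u0}, at the cost of iterating the argument a bounded number of times to absorb the slow part of $u$.
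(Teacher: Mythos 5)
Your proposal is correct and follows essentially the same route as the paper's proof: solve $\slashed\Delta\omega=\zeta$ on the sphere using the mean-zero condition, note that $r^{2-n}$ is the critical homogeneity so $\overline\Delta(\omega r^{2-n})=\zeta r^{-n}$, subtract a cutoff of this from $u$, and apply Corollary \ref{cor:Lap-next-range} together with the expansion of $u_0$ from Proposition \ref{prop:u0}. The only difference is that you explicitly justify upgrading $u-w$ from $W^{k,p}_\delta$ to $W^{k,p}_{(2-n)^+}$ before invoking Corollary \ref{cor:Lap-next-range} (a step the paper leaves implicit), which is a welcome addition rather than a deviation.
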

\begin{proof}
    Since $\int_{S^{n-1}} \zeta \; dV_{S^{n-1}}=0$ we can find $\omega \in W^{k,p}(S^{n-1})$ with $\Delta_{S^{n-1}} \omega = \zeta$.  This function $\omega$ is unique up 
    to an additive constant-- we fix one such solution.
    
    A straightforward computation in polar coordinates shows 
    that $\overline\Delta (\omega r^{2-n}) = \zeta r^{-n}$,
    and we define $w = \chi \omega r^{2-n}$, where $\chi$ is a cutoff function equal
    to $1$ outside $B_R$ and vanishing on $B_{R/2}$.
    Let $p = u-w$, so that on $E_R$, $p$ satisfies
    \[
    \Delta_g p = -(\Delta_g - \overline\Delta)w \in W^{k-2,p}_{(2-2n)^+} \subset W^{k-2,p}_{(-1-n)^+}.
    \]
    Corollary \ref{cor:Lap-next-range} implies that $p = a u_0 + q$, where $a$ is a constant, $u_0$ is the 
    function from Proposition \ref{prop:u0}, and $q\in W^{k,p}_{(1-n)^+}$.
    By the assumed decay rate for $g_{ab}$, it follows that $u_0$ has the form
    $\alpha r^{2-n}  + W^{k,p}_{(1-n)^+}$ on $E_R$. We conclude that $u = p+w$ has the 
    claimed form on $E_R$.
    \\
\end{proof}
In the absence of the integral condition $\int_{S^{n-1}} \zeta \; dV_{S^{n-1}} = 0$, logarithmic terms generally arise in this lemma's conclusion. To combine this result with Corollary \ref{cor:Lap-next-range}, we must check that the leading order $O(r^{-n})$ term in equation \eqref{eqn:f-lich} coming from the scalar curvature $R(g)$ satisfies this integral condition if $g_{ab}$ is taken to have a natural form (compare to Lemma \ref{lem:div-obstr-vanish}):

\begin{lemma} \label{lem:scalar-expansion}
Suppose that $g_{ab}$ is asymptotically Euclidean of class $W^{k,p}_{(2-n)^+}$ with $k > 1+n/p$ and admits the decomposition
\be \label{eqn:type-a-unphysical}
g_{ab} = \delta_{ab} + r^{2-n}\beta_{ab} + \gamma_{ab},
\ee
on $E_R$ for some $R>0$, where $\beta_{ab} \in W^{k,p}_{\mathrm{loc}}(\Reals^n \setminus \{0\})$ is homogeneous of degree zero and $\gamma_{ab}\in W^{k,p}_{\eta}$ for some $\eta<2-n$.  
Then the scalar curvature admits the decomposition
\[
R(g) = R_{\mathrm{slow}} + R_{\mathrm{fast}},
\]
on $E_R$, where
$
R_{\mathrm{slow}} = \zeta r^{-n}
$
with $\zeta\in W^{k-2,p}(S^{n-1})$, and
$
R_{\mathrm{fast}} \in W^{k-2,p}_{\eta-2} \cup W^{k-2,p}_{(2-2n)^+}.
$
Moreover, $\int_{S^{n-1}} \zeta \; dV_{S^{n-1}} = 0$, and $\zeta\equiv 0$ if $\beta_{ab}$
is a multiple of $\delta_{ab}$.
\end{lemma}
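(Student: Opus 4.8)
The plan is to expand the scalar curvature about the flat metric, handle the linear term exactly, and sort the nonlinear terms by homogeneity. Write $g_{ab} = \delta_{ab} + \epsilon_{ab}$ on $E_R$ with $\epsilon_{ab} := r^{2-n}\beta_{ab} + \gamma_{ab}$; by the assumed decay and the condition $k > 1+n/p$, the tensor $\epsilon_{ab}$ is small in $C^0(E_R)$ after enlarging $R$, so $g$ is invertible there and $R(g)$ is defined. Viewing $R(g)$ as a function of $(g_{ab},\partial g_{ab},\partial^2 g_{ab})$ (rational in $g_{ab}$ through $g^{ij}$, polynomial in the derivatives) and Taylor expanding about $g_{ab}=\delta_{ab}$ gives
\[
R(g) = L(\epsilon) + \mathcal{Q}(\epsilon), \qquad L(\epsilon) := \partial_i\partial_j \epsilon^{ij} - \overline\Delta\big(\delta^{ij}\epsilon_{ij}\big),
\]
where $L$ is the constant-coefficient, second-order linearization of the scalar curvature at $\delta_{ab}$, and $\mathcal{Q}(\epsilon)$ collects all terms at least quadratic in $\epsilon$ and its first two derivatives, with coefficients that are smooth functions of $g^{ij} = (\delta+\epsilon)^{-1,ij}$ and hence of the form ``constant plus an element of $W^{k,p}_{(2-n)^+}$.'' That each term above is a genuine element of $W^{k-2,p}_{\mathrm{loc}}(E_R)$ follows from Lemma \ref{lem:mult-basic} together with the fact that nonlinear superposition by a smooth function preserves the relevant weighted spaces.

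Next I would extract $R_{\mathrm{slow}}$ from the linear term. Since $L$ is of order exactly two and $r^{2-n}\beta_{ab}$ is homogeneous of degree $2-n$, the tensor $L(r^{2-n}\beta)$ is homogeneous of degree $-n$; writing $L(r^{2-n}\beta) = \zeta r^{-n}$, the profile $\zeta$ is obtained from $\beta|_{S^{n-1}}\in W^{k,p}(S^{n-1})$ by applying differential operators of order at most two, so $\zeta\in W^{k-2,p}(S^{n-1})$. Set $R_{\mathrm{slow}} := \zeta r^{-n}$ and $R_{\mathrm{fast}} := R(g) - R_{\mathrm{slow}} = L(\gamma) + \mathcal{Q}(\epsilon)$. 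Here $L(\gamma)\in W^{k-2,p}_{\eta-2}$ directly. For $\mathcal{Q}(\epsilon)$ one checks that the slowest-decaying contributions are the products of $r^{2-n}\beta$ with itself (two derivatives distributed among the factors), which are homogeneous of degree $2-2n$ and, cut off to $E_R$, lie in $W^{k-2,p}_{(2-2n)^+}$; every contribution carrying a factor of $\gamma$, and every cubic-or-higher contribution, decays strictly faster and lies in $W^{k-2,p}_{\eta-2}$ or inside $W^{k-2,p}_{(2-2n)^+}$. In particular no further homogeneous-degree-$(-n)$ piece arises from $\mathcal{Q}$, since any such piece would require a quadratic term of total homogeneity $-n$ built with two derivatives, which is impossible when each factor of $\epsilon$ carries homogeneity at most $2-n<0$. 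Hence $R_{\mathrm{fast}}\in W^{k-2,p}_{\eta-2}\cup W^{k-2,p}_{(2-2n)^+}$, as required.

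The heart of the argument is the vanishing of $\int_{S^{n-1}}\zeta\,dV_{S^{n-1}}$, and here I would reuse the divergence-and-annulus trick of Lemma \ref{lem:div-obstr-vanish} essentially verbatim. The point is that the linearized scalar curvature is a Euclidean divergence, $L(h) = \partial_i V^i(h)$ with $V^i(h) := \partial_j h^{ij} - \partial^i(\delta^{jk}h_{jk})$. Applied to $h = r^{2-n}\beta$, the vector $V^i$ is homogeneous of degree $1-n$, so $\int_{\partial B_\rho} V^i n_i\,d\overline{A}$ is independent of $\rho$. Integrating $L(r^{2-n}\beta) = \zeta r^{-n}$ over the annulus $\{r_1<|x|<r_2\}$ yields $\log(r_2/r_1)\int_{S^{n-1}}\zeta\,d\overline{A}$ on one hand, while the divergence theorem yields the difference of two equal boundary integrals, hence $0$, on the other. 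Since this holds for all $0<r_1<r_2$, we get $\int_{S^{n-1}}\zeta\,d\overline{A}=0$. Finally, if $\beta_{ab}=c\,\delta_{ab}$ for a constant $c$, then $r^{2-n}\beta_{ab}=c\,r^{2-n}\delta_{ab}$ and $L(c\,r^{2-n}\delta) = c(1-n)\,\overline\Delta(r^{2-n}) = 0$ since $r^{2-n}$ is harmonic on $\Reals^n\setminus\{0\}$, so $\zeta\equiv 0$ in that case.

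I expect the main obstacle to be purely organizational: setting up the nonlinear remainder $\mathcal{Q}(\epsilon)$ so that one can cleanly verify that nothing decays more slowly than the $O(r^{2-2n})$ self-interaction of the mass term, and that the only homogeneous-degree-$(-n)$ content of $R(g)$ on $E_R$ is the one produced by $L(r^{2-n}\beta)$. The conceptually interesting step, the vanishing of the spherical average of $\zeta$, is not really an obstacle, being a direct transcription of Lemma \ref{lem:div-obstr-vanish}.
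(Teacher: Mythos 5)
Your proposal is correct, and the interesting part of it genuinely diverges from the paper's argument at the key step. Both proofs begin the same way: isolate the linearization $L(h)=\partial_i\partial_j h^{ij}-\overline\Delta(h^j{}_j)$ of the scalar curvature at $\delta_{ab}$, note that $L(r^{2-n}\beta)$ is homogeneous of degree $-n$ with profile $\zeta\in W^{k-2,p}(S^{n-1})$, and dump $L(\gamma)$ together with the quadratic-and-higher remainder (dominated by $\partial F\,\partial F$, $F\,\partial^2 F$, $\partial^2\gamma$ terms) into $R_{\mathrm{fast}}\in W^{k-2,p}_{\eta-2}\cup W^{k-2,p}_{(2-2n)^+}$. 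Where you differ is in proving $\int_{S^{n-1}}\zeta\,dV_{S^{n-1}}=0$: the paper decomposes $\beta_{ab}=m_{ab}+v_an_b+n_av_b+q\,n_an_b$ into radial and tangential parts and computes explicitly that $\zeta=\nabla^a_{S^{n-1}}\nabla^b_{S^{n-1}}m_{ab}+2\nabla^a_{S^{n-1}}v_a-\Delta_{S^{n-1}}(\beta^a{}_a)$, a sum of intrinsic divergences and Laplacians on the sphere whose integrals vanish term by term. You instead observe that $L(h)=\partial_iV^i(h)$ is a Euclidean divergence of a vector field that is homogeneous of degree $1-n$ when $h=r^{2-n}\beta$, and then run the annulus argument of Lemma \ref{lem:div-obstr-vanish} verbatim. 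Your route is shorter and unifies this lemma with Lemma \ref{lem:div-obstr-vanish} under a single mechanism (divergence structure plus homogeneity forces a vanishing logarithmic coefficient); the paper's route costs an explicit computation but yields a closed-form expression for $\zeta$ in terms of the decomposition of $\beta_{ab}$, which is what lets one see directly, for instance, that a non-pure-trace $\beta_{ab}$ generically produces a nontrivial anisotropic profile. Your treatment of the final claim ($\beta_{ab}$ a constant multiple of $\delta_{ab}$ forces $\zeta\equiv 0$ via harmonicity of $r^{2-n}$) is a correct and slightly slicker version of the paper's explicit check, and matches the intended reading of ``multiple'' as a constant multiple, since a nonconstant angular factor $\omega$ would contribute $(1-n)\Delta_{S^{n-1}}\omega$ to $\zeta$.
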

\begin{proof}
If we define $F_{ab} := \beta_{ab} r^{2-n}$, then a computation shows that
\[
R_g = \underbrace{\frac12 \left[2\partial^a \partial^b F_{ab} - 2 \overline{\Delta} F^a_a\right]}_{R_{\mathrm{slow}}} + R_{\mathrm{fast}}
\]
where $R_{\mathrm{fast}}$
is dominated by terms of the form $\partial F\partial F$, $F\partial^2F$,
and $\partial^2\gamma$, and hence
lies in $W^{k-2,p}_{\rho}$ for any $\rho$ satisfying
$\rho> 2-2n$ and $\rho \ge \eta-2$. 
We note that here, and for the remainder of the proof, we perform the index operations using the metric $\delta_{ab}$.

We decompose $\beta_{ab} = m_{ab} + v_a n_b + n_a v_b + q n_a n_b$,
where $n^a = x^a/r$ is the radial unit vector, where 
$m_{ab}$ and $v_b$ each annihilate $n^a$, and where each of 
$m_{ab}$, $v_a$, and $q$ are functions of angle only.
Using the relations
\begin{align}
\partial_a r &= n_a, \notag \\
\partial_a n_b& =  \frac{1}{r}\left[\delta_{ab}-n_an_b\right], \notag \\
\partial^a\left(\frac{q n_a}{r^{n-1}}\right) &= 0, \notag \\
\partial^a v_a &= \frac{1}{r}\nabla^a_{S^{n-1}} v_a, \notag \\
\partial^a(m_{ab})&= \frac{1}{r}\left[ \nabla^a_{S^{n-1}} m_{ab} - \delta^{ac}m_{ac} n_b\right], \notag
\end{align}
we find that a computation shows that
\[
\partial^a \partial^b D_{ab} =
\frac1{r^n}\left[ \nabla_{S^{n-1}}^a \nabla_{S^{n-1}}^b m_{ab} + 2 \nabla^a_{S^{n-1}}v_a \right].
\]
Similarly, we have
\[
\overline{\Delta} D^a_a  = \frac{1}{r^n} \Delta_{S^{n-1}} (B^a_a).
\]
As a consequence, we obtain
\[
R_{\mathrm{slow}} = \frac{1}{r^n} \zeta,
\]
where $\zeta\in W^{k-2,p}(S^{n-1})$ satisfies the condition
$\int_{S^{n-1}} \zeta \; dV_{S^{n-1}} = 0$. 

In the case $\beta_{ab}=\delta_{ab}$ we have that $B^{a}_a$ is constant, $v_a$ is zero,
and $m_{ab}$ is the metric on the sphere, and hence $\zeta=0$.
\\
\end{proof}

Once again, we compile the results of the present section into the following theorem summarizing the prescription of the asymptotics of the conformal factor $\varphi = 1+u$ in the conformal method (compare Theorem \ref{thm:momentum-summary}):

\begin{theorem}
\label{thm:lichnerowicz-summary}
    Suppose that $g_{ab}$ satisfies Assumption \ref{assumption:main} with parameters $k$, $p$, and $\tau$; suppose that $K_{ab} \in W^{k-1,p}_{\tau-1} \cup W^{k-1,p}_{(1-n)^+}$; and suppose that $Y_g > 0$.
    
    \begin{enumerate}[(a)]
    \item If $2-n<\tau<0$, then there exists a unique $u \in W^{k,p}_{\tau}$ for which $\varphi = 1+u$ solves the Lichnerowicz equation \eqref{eqn:lichnerowicz}.
    \label{part:phi-slow-falloff}

    \item If $1-n<\tau<2-n$, then the unique $u \in W^{k,p}_{(2-n)^+}$ for which $\varphi = 1+u$ solves the Lichnerowicz equation \eqref{eqn:lichnerowicz} is given by
    \[
    u = \frac{C}{r^{n-2}} + v
    \]
    for some constant $C$, with the remainder $v \in W^{k,p}_{\tau}$.
    \label{part:phi-fast-falloff}

    \item In the borderline case such that 
    $g_{ab}$ is asymptotically Euclidean of class $W^{k,p}_{(2-n)^+}$ and has the form
    \be \label{eqn:type-a-unphysical}
    g_{ab} = \delta_{ab} + r^{2-n}\beta_{ab} + \gamma_{ab},
    \ee
    on $E_R$ for some $R>0$, where $\beta_{ab} \in W^{k,p}_{\mathrm{loc}}(\Reals^n \setminus \{0\})$ is homogeneous of degree zero and $\gamma_{ab}\in W^{k,p}_{\eta}$ for some $1-n < \eta <2-n$, the unique $u \in W^{k,p}_{(2-n)^+}$ for which $\varphi = 1+u$ solves the Lichnerowicz equation \eqref{eqn:lichnerowicz} is given by
    \[
    u = \omega r^{2-n} + v
    \]
    on $E_R$ for some $\omega \in W^{k,p}(S^{n-1})$, with the remainder $v \in W^{k,p}_{\eta}$.
    \label{part:phi-threshold-falloff}
    \end{enumerate}
\end{theorem}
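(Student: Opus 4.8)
The plan is to handle the three parts in order, in each case recasting the Lichnerowicz equation \eqref{eqn:lichnerowicz-u} as a Poisson-type equation for the scalar Laplacian $\Delta_g$ and importing the scalar analogues of the momentum-constraint machinery built up in this section. Part (a) is essentially a restatement of Theorem \ref{thm:lichnerowicz-existence-slow}: under Assumption \ref{assumption:main} with $2-n<\tau<0$, the standing hypothesis $Y_g>0$, and the observation that the assumed $K_{ab}\in W^{k-1,p}_{\tau-1}\cup W^{k-1,p}_{(1-n)^+}$ in every case gives $K_{ab}\in W^{k-1,p}_{\tau-1}$ (since $\tau-1>1-n$ yields $W^{k-1,p}_{(1-n)^+}\subset W^{k-1,p}_{\tau-1}$), that theorem produces the unique positive $\varphi$ with $\varphi-1\in W^{k,p}_\tau$.

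For part (b), since $\tau<2-n$ makes $g_{ab}$ asymptotically Euclidean of class $W^{k,p}_{(2-n)^+}$ and forces $K_{ab}\in W^{k-1,p}_{(1-n)^+}$, Corollary \ref{cor:lichnerowicz-existence-fast} supplies the unique $u\in W^{k,p}_{(2-n)^+}$. I would then rewrite \eqref{eqn:lichnerowicz-u} globally as $\Delta_g u=f$ with $f$ as in \eqref{eqn:f-lich}, and use Lemma \ref{lem:mult-basic} to place $f$ in $W^{k-2,p}_{\tau-2}$: the term $(1+u)R(g)/(q_n+2)$ lies there because $R(g)\in W^{k-2,p}_{\tau-2}$ and $uR(g)\in W^{k-2,p}_{(\tau-n)^+}$, while $(1+u)^{-q_n-1}|K|_g^2/(q_n+2)$ lies in $W^{k-1,p}_{(2-2n)^+}\subset W^{k-2,p}_{\tau-2}$, the inclusion being valid because $2-2n<\tau-2$ for $\tau>1-n$ when $n\ge 3$. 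Corollary \ref{cor:Lap-next-range} then gives $u=cu_0+v$ with $c=\int_{\Reals^n}f\,dV_g$ and $v\in W^{k,p}_\tau$, and feeding in $u_0=\tfrac{r^{2-n}}{(n-2)|S^{n-1}|}+v'$ from Proposition \ref{prop:u0}\eqref{part:u0exp}, with $v'\in W^{k,p}_{(1-n)^+}\subset W^{k,p}_\tau$, identifies the claimed form with $C=c/\big((n-2)|S^{n-1}|\big)$ on an exterior region (and globally after cutting off $r^{2-n}$ near the origin). No logarithm appears since $\tau\in(1-n,2-n)$ is non-exceptional.

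Part (c) is the analogue of Theorem \ref{thm:momentum-summary}\eqref{part:K-threshold-falloff} and is where the real work lies. Existence and uniqueness of $u\in W^{k,p}_{(2-n)^+}$ again come from Corollary \ref{cor:lichnerowicz-existence-fast}. Recasting \eqref{eqn:lichnerowicz-u} as $\Delta_g u=f$ and applying Lemma \ref{lem:scalar-expansion} to the hypothesized form of $g_{ab}$, I would write, on $E_R$, $R(g)=\zeta r^{-n}+R_{\mathrm{fast}}$ with $\int_{S^{n-1}}\zeta\,dV_{S^{n-1}}=0$ and $R_{\mathrm{fast}}\in W^{k-2,p}_{\eta-2}\cup W^{k-2,p}_{(2-2n)^+}$, hence $f=\tfrac{\zeta}{q_n+2}r^{-n}+f_{\mathrm{fast}}$ there, where $f_{\mathrm{fast}}$ collects $\tfrac{u}{q_n+2}\zeta r^{-n}$, $\tfrac{1+u}{q_n+2}R_{\mathrm{fast}}$, and the $|K|_g^2$ term; Lemma \ref{lem:mult-basic} together with the decay of $u$ and $K$ then puts $f_{\mathrm{fast}}\in W^{k-2,p}_{\eta-2}$ on $E_R$ (again using $2-2n<\eta-2$ when $\eta>1-n$ and $n\ge 3$). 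I would then mimic the proof of Lemma \ref{lem:borderline}: solve $\Delta_{S^{n-1}}\omega=\zeta/(q_n+2)$ on the sphere (possible since the spherical average of $\zeta$ vanishes), set $w=\chi\,\omega r^{2-n}$ for a cutoff $\chi$ supported outside $B_{R/2}$ and equal to $1$ on $E_R$, and observe that $p:=u-w\in W^{k,p}_{(2-n)^+}$ satisfies, on $E_R$, $\Delta_g p=f_{\mathrm{fast}}-(\Delta_g-\overline{\Delta})(\omega r^{2-n})\in W^{k-2,p}_{\eta-2}$; since $\Delta_g p$ is moreover locally $W^{k-2,p}$, it belongs to $W^{k-2,p}_{\eta-2}$ globally. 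Corollary \ref{cor:Lap-next-range} then gives $p=c\,u_0+q$ with $q\in W^{k,p}_\eta$, and absorbing $\tfrac{c}{(n-2)|S^{n-1}|}r^{2-n}$ (from Proposition \ref{prop:u0}\eqref{part:u0exp}) into the $\omega r^{2-n}$ term and the remaining $W^{k,p}_{(1-n)^+}$ part of $u_0$ into $q$ yields $u=\omega r^{2-n}+v$ on $E_R$ with $v\in W^{k,p}_\eta$.

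The main obstacle I anticipate is the decay bookkeeping in part (c): verifying that every constituent of $f$ other than the single degree-$(-n)$ homogeneous piece isolated by Lemma \ref{lem:scalar-expansion} decays strictly faster, i.e.\@ actually lands in $W^{k-2,p}_{\eta-2}$ rather than merely in $W^{k-2,p}_{(-n)^+}$, and then correctly threading the exterior-region statements through the cutoffs so the global hypotheses of Corollary \ref{cor:Lap-next-range} are met. The inequality $2-2n<\eta-2$, which holds precisely because $\eta>1-n\ge 4-2n$ for $n\ge 3$, is what makes this go through, and --- exactly as in Lemma \ref{lem:borderline} --- the vanishing of the spherical average of $\zeta$, furnished by Lemma \ref{lem:scalar-expansion} under the natural ansatz for $g_{ab}$, is what prevents a logarithmic term from contaminating the $r^{2-n}$ coefficient.
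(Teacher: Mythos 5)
Your proposal is correct and follows essentially the same route as the paper: part (a) via Theorem \ref{thm:lichnerowicz-existence-slow}, part (b) via the recasting $\Delta_g u = f$ with $f\in W^{k-2,p}_{\tau-2}$ followed by Corollary \ref{cor:Lap-next-range} and Proposition \ref{prop:u0}\eqref{part:u0exp}, and part (c) via Lemma \ref{lem:scalar-expansion} together with the sphere-Laplacian inversion and mass-carrier extraction. The only (immaterial) difference is in part (c), where you inline the construction from the proof of Lemma \ref{lem:borderline} and apply Corollary \ref{cor:Lap-next-range} once to $u-\chi\,\omega r^{2-n}$, whereas the paper first solves the auxiliary problem $\Delta_g u_1=w$ and then invokes Lemma \ref{lem:borderline} as a black box on $u-u_1$; your decay bookkeeping (in particular $2-2n<\eta-2$ for $n\ge 3$) matches the paper's.
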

\begin{proof}
Part \eqref{part:phi-slow-falloff} follows immediately from Theorem \ref{thm:lichnerowicz-existence-slow}. 

Part \eqref{part:phi-fast-falloff} follows immediately from the discussion following Corollary \ref{cor:Lap-next-range}.

We turn to part \eqref{part:phi-threshold-falloff}. The solution $u \in W^{k,p}_{(2-n)^+}$ to the Lichnerowicz equation \eqref{eqn:lichnerowicz-u}, 
guaranteed by Corollary \ref{cor:lichnerowicz-existence-fast}, satisfies 
$\Delta_g u = f$ with $f \in W^{k-2,p}_{(-n)^+}$ specified by equation \eqref{eqn:f-lich}. 
Here, Lemma \ref{lem:scalar-expansion} implies that the decomposition
\[ 
\Delta_g u = \zeta r^{-n} + w,
\]
holds on $E_R$, with $\zeta \in W^{k-2,p}(S^{n-1})$ satisfying $\int_{S^{n-1}} \zeta \; dV_{S^{n-1}} = 0$, and with $w \in W^{k-2,p}_{\eta-2}$. %\cap W^{k-2,p}_{(2-2n)^+}$.
Corollary \ref{cor:Lap-next-range} and Proposition \ref{prop:u0}\eqref{part:u0exp} 
imply that the solution $u_1 \in W^{k,p}_{(2-n)^+}$ to $\Delta_g u_1 = w$, which exists as a consequence of 
Proposition \ref{prop:scalar-lap-fredholm}, must satisfy the condition 
$u_1 = C r^{2-n} + u_2$ on $E_R$ for some constant $C$ and some 
$u_2 \in W^{k,p}_{\eta}$. %\cap W^{k,p}_{(1-n)^+}$. 
We now have $\Delta_g (u - u_1) = \zeta r^{-n}$ on $E_R$, so Lemma \ref{lem:borderline} 
indicates that there is a function $\omega_0 \in W^{k,p}(S^{n-1})$ and a function $u_3 \in W^{k,p}_{(1-n)^+}$ 
such that $u-u_1 = \omega_0 r^{2-n} + u_3$ on $E_R$. 
Combining these results, we have on $E_R$ that
\[
u = \omega_0 r^{2-n} + u_1 + u_3 
= (\omega_0 + C)r^{2-n} + u_2 + u_3.
\]
Identifying $\omega = \omega_0 + C$ and $v = u_2 + u_3$ completes the proof.
\\
\end{proof}

We observe that in part \eqref{part:phi-threshold-falloff}, the resulting physical metric $\tilde g_{ab} = \varphi^{q_n-2} g_{ab}$ has the leading order $O(r^{2-n})$ term beyond $\delta_{ab}$ given by
\[ h_{ab} = r^{2-n}\left( \beta_{ab} + (q_n-2)\omega \delta_{ab} \right),\]
so that a sufficient, but almost certainly not necessary, condition guaranteeing that this is not a constant multiple of $r^{2-n} \delta_{ab}$ is that $\beta_{ab}$ is not a function on $S^2$ multiplying $\delta_{ab}$.

\section{Constructions}
\label{sec:constructions}

In Sections \ref{sec:momentum} and \ref{sec:hamiltonian}, we have established a sequence of results which guarantee that one can use the conformal method to construct general relativistic initial data sets $(\mathbb R^n, \tilde g, \widetilde K_{ab})$ with a number of prescribed features. These results have largely been framed as statements regarding the analytical features of the solutions to equations (\ref{eqn:hamiltonian}) and (\ref{eqn:lichnerowicz-u}) and are not directly connected to the properties of the the ultimate initial data set. 
We now make formal statements collating our results into theorems, Theorems \ref{thm:summary_b}-\ref{thm:summary_a} below, summarizing precisely what can be said regarding the goal of constructing data of each of the types (CK), (B), and (A) discussed in the introduction. 
Each of these theorems follows readily from the results in Sections \ref{sec:momentum} and \ref{sec:hamiltonian}, so we mostly omit further argumentation. Our first summarizing theorem pertains to the construction of type (B) data. 
Following immediately from the well-known results stated in Theorems \ref{thm:momentum-summary}\eqref{part:K-slow-falloff} and \ref{thm:lichnerowicz-summary}\eqref{part:phi-slow-falloff}, it contains no novel content:

\begin{theorem} \label{thm:summary_b}
(Type (B)) 
    We choose $2-n < \tau < 0$, and we choose $k \in \Nats$ and $p \in (1,\infty)$ 
    satisfying $k > \ell+1+n/p$ for some $\ell \in \Nats$. 
    Suppose that $g_{ab}$ is asymptotically Euclidean of class
    $W^{k,p}_\tau$ and is Yamabe positive, that $N$ is a positive function with $\delta N = N-1 \in W^{k,p}_{\tau}$, 
    and that $A_{ab} \in W^{k-1,p}_{\tau-1}$ is traceless and symmetric. The conformal constraint 
    equations \eqref{eqn:lichnerowicz}-\eqref{eqn:conformalmomentum} then admit a unique solution for
    the vector field $W^a \in W^{k,p}_\tau$ and the positive function $\varphi$ with $\varphi - 1 \in W^{k,p}_\tau$.
    The corresponding initial data set $(\Reals^n,\, \tilde g, \, \widetilde K_{ab})$ satisfies the Einstein (vacuum, constant mean curvature) constraint equations \eqref{eqn:hamiltonian}-\eqref{eqn:momentum} along with the type (B) asymptotic conditions:
    \begin{align}
    \tilde g_{ij} & = \delta_{ij} + o_{\ell+1}(r^\tau), \notag \\
    \widetilde K_{ij} & = o_{\ell}(r^{\tau-1}). \notag
    \end{align}
\end{theorem}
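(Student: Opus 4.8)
The plan is to build the theorem by packaging slow-falloff results already in hand: Theorem \ref{thm:momentum-summary}\eqref{part:K-slow-falloff} disposes of the momentum constraint and Theorem \ref{thm:lichnerowicz-existence-slow} (equivalently Theorem \ref{thm:lichnerowicz-summary}\eqref{part:phi-slow-falloff}) of the Hamiltonian constraint, after which only weighted-Sobolev bookkeeping and an application of weighted Sobolev embedding remain. Observe first that any weight in $(2-n,0)$ is non-exceptional, so the isomorphism statements apply with no side conditions, and that $k>\ell+1+n/p$ implies the standing requirement $k>1+n/p$ of Assumption \ref{assumption:main}. Now set $Z_b=\nabla^a(\tfrac{1}{2N}A_{ab})$: since $N>0$ and $\delta N\in W^{k,p}_\tau$ one has $\tfrac{1}{2N}-\tfrac12\in W^{k,p}_\tau$, so Lemma \ref{lem:mult-basic} gives $\tfrac{1}{2N}A_{ab}\in W^{k-1,p}_{\tau-1}$ and, expanding the $g$-divergence into a flat divergence plus faster-decaying Christoffel terms, $Z_b\in W^{k-2,p}_{\tau-2}$. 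By Proposition \ref{prop:vec-lap-properties}\eqref{part:iso}, $P_{g,N}\colon W^{k,p}_\tau\to W^{k-2,p}_{\tau-2}$ is an isomorphism, so \eqref{eqn:conformalmomentum} has a unique solution $W^a\in W^{k,p}_\tau$; forming $K_{ab}=\tfrac{1}{2N}\bigl(A_{ab}-(\ck W)_{ab}\bigr)$ and again invoking Lemma \ref{lem:mult-basic} (the conformal Killing operator contributes $\nabla_a W_b\in W^{k-1,p}_{\tau-1}$ plus Christoffel-times-$W$ terms in the faster space $W^{k-1,p}_{2\tau-1}$) yields $K_{ab}\in W^{k-1,p}_{\tau-1}$.

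With $g$ Yamabe positive, $2-n<\tau<0$, and $K_{ab}\in W^{k-1,p}_{\tau-1}$ from the previous step, Theorem \ref{thm:lichnerowicz-existence-slow} then produces a unique positive $\varphi$ with $u:=\varphi-1\in W^{k,p}_\tau$ solving \eqref{eqn:lichnerowicz}. Setting $\tilde g_{ab}=\varphi^{q_n-2}g_{ab}$ and $\widetilde K_{ab}=\varphi^{-2}K_{ab}$, the conformal divergence identity and the conformal transformation law for scalar curvature recalled in Section \ref{sec:prelim} show directly that $(\Reals^n,\tilde g,\widetilde K_{ab})$ satisfies the vacuum CMC constraints \eqref{eqn:hamiltonian}--\eqref{eqn:momentum}. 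Uniqueness of the pair $(W,\varphi)$ in the stated spaces is exactly the uniqueness from the isomorphism together with that in Theorem \ref{thm:lichnerowicz-existence-slow}.

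It remains to read off the asymptotics. Because $\varphi>0$ is continuous with $\varphi\to1$ at infinity, the maps $s\mapsto s^{q_n-2}$ and $s\mapsto s^{-2}$ are smooth on a neighborhood of the range of $\varphi$; composing with $u\in W^{k,p}_\tau$ (a Moser-type estimate obtained by iterating Lemma \ref{lem:mult-basic}) gives $\varphi^{q_n-2}-1\in W^{k,p}_\tau$ and $\varphi^{-2}-1\in W^{k,p}_\tau$, whence
\[
\tilde g_{ij}-\delta_{ij}=(\varphi^{q_n-2}-1)\delta_{ij}+\varphi^{q_n-2}(g_{ij}-\delta_{ij})\in W^{k,p}_\tau,
\qquad
\widetilde K_{ij}=K_{ij}+(\varphi^{-2}-1)K_{ij}\in W^{k-1,p}_{\tau-1},
\]
the cross terms decaying strictly faster by Lemma \ref{lem:mult-basic}. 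Weighted Sobolev embedding then converts these into pointwise decay: if $f\in W^{m,p}_{\sigma}$ with $m-j>n/p$, then $D^\alpha f=o(r^{\sigma-|\alpha|})$ for all $|\alpha|\le j$. Applying this with $(m,\sigma)=(k,\tau)$ and $j=\ell+1$ --- exactly the hypothesis $k>\ell+1+n/p$ --- gives $\tilde g_{ij}=\delta_{ij}+o_{\ell+1}(r^\tau)$, and with $(m,\sigma)=(k-1,\tau-1)$ and $j=\ell$ gives $\widetilde K_{ij}=o_\ell(r^{\tau-1})$.

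There is no genuine analytic obstacle here: existence, uniqueness, and the full operator theory are imported verbatim from the slow-falloff regime, and this is precisely why the statement contains no novel content. The only steps needing care are the weighted-space bookkeeping --- verifying that $Z_b$, $K_{ab}$, and the conformally rescaled metric and second fundamental form land in the advertised spaces, all handled uniformly by Lemma \ref{lem:mult-basic} --- and, slightly more delicately, the nonlinear composition argument showing that $\varphi^{q_n-2}-1$ and $\varphi^{-2}-1$ inherit the weight $\tau$, together with pinning down the exact form of weighted embedding used to pass from Sobolev membership to the $o_{\ell+1}$ and $o_\ell$ decay in the conclusion.
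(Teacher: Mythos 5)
Your proposal is correct and follows exactly the route the paper intends: the paper omits a formal proof, citing Theorems \ref{thm:momentum-summary}\eqref{part:K-slow-falloff} and \ref{thm:lichnerowicz-summary}\eqref{part:phi-slow-falloff} and sketching the same chain of reasoning (isomorphism range of $P_{g,N}$, then Theorem \ref{thm:lichnerowicz-existence-slow}, then conformal rescaling) in the illustrative paragraph with $\tau=5/2-n$ that follows the theorem. Your write-up simply supplies the weighted-Sobolev bookkeeping, the composition estimate for $\varphi^{q_n-2}-1$, and the weighted embedding step that the paper leaves implicit.
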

\qed

To illustrate with $\tau = 5/2-n$, one first chooses $g_{ab}$ to be Yamabe positive and asymptotically Euclidean of class $W^{k,p}_{5/2-n}$, as well as $A_{ab} \in W^{k-1,p}_{3/2-n}$, so that $Z_a = \Div_g \left( \frac{1}{2N} A_{ab} \right) \in W^{k-2,p}_{1/2-n}$. 
Proposition \ref{prop:vec-lap-properties}\eqref{part:iso} ensures that one may uniquely solve equation \eqref{eqn:conformalmomentum}, that is $(P_{g,N} W)_a = Z_a$, for $W^a \in W^{k,p}_{5/2-n}$. 
One now has $K_{ab} = \frac{1}{2N}(A_{ab} - (\ck W)_{ab}) \in W^{k-1,p}_{3/2-n}$, so Theorem \ref{thm:lichnerowicz-existence-slow} ensures that we may find a conformal factor $\varphi$ solving equation \eqref{eqn:lichnerowicz} and satisfying $\varphi-1 \in W^{k,p}_{5/2-n}$. 
For $n = 3$ and for sufficiently large $k$ ($k > 3+n/p$), the conformally transformed metric $\tilde g = \varphi^4 g$ has the form of equation \eqref{eqn:typebg}, and the physical second fundamental form $\widetilde K_{ij} = \varphi^{-2} K_{ij} \in W^{k,p}_{5/2-n}$ has the form \eqref{eqn:typebk}, which corresponds to type (B) data.

In attempting to repeat this procedure for either type (CK) or type (A) data, equations \eqref{eqn:typeckg}-\eqref{eqn:typeckk} and \eqref{eqn:typeag}-\eqref{eqn:typeak}, with all the falloff weights shifted down by one, one finds that $Z_a \in W^{k-2,p}_{-1/2-n}$, so that the desired domain of $P_{g,N}$ now has $\delta = 3/2 - n < 2 - n$, and we cannot simply invoke Proposition \ref{prop:vec-lap-properties}\eqref{part:iso}. 
This obstruction is not surprising: Proposition \ref{prop:vec-lap-properties} indicates that we should expect the solution $W^a$ to equation (\ref{eqn:pmomentum}) to be contained in $W^{k,p}_{(2-n)^+}$, ultimately corresponding to $\widetilde K_{ab} \in W^{k-1,p}_{(1-n)^+}$ and allowing an $O(r^{1-n})$ term in $\widetilde K_{ab}$ which gives rise to the initial data set's linear ADM momentum. 
The same obstruction arises in seeking $\varphi$: The conclusion of Theorem \ref{thm:lichnerowicz-existence-slow} is limited due to the generic appearance of an $O(r^{2-n})$ term in $\varphi$, corresponding to the mass term in the metric and screening out control of any remaining decay. While the problem of extracting the $O(r^{2-n})$ term from $\varphi$ to control the remaining decay has been treated in the literature (see Proposition 1 of \cite{bieri_brill_2025} or Theorem 1.17 of \cite{bartnik1}), previously existing results are insufficient to guarantee the emergence of type (A) data in which the mass term $h_{ij}$ in the metric is anisotropic, or to fully understand the $O(r^{1-n})$ term in the second fundamental form. As demonstrated in Theorems \ref{thm:summary_ck} and \ref{thm:summary_a} below, we are nevertheless able to treat both of these concerns.

Our second summarizing theorem, relevant to the construction of type (CK) data, partially overcomes these hurdles, the remainder being left for our final theorem. 
Theorem \ref{thm:summary_ck} contains the novel result that the ADM momentum components of the constructed initial data set may always be fully anticipated, and hence prescribed as a consequence of one's choice of the seed data, thereby allowing one to build initial data sets in the center of mass frame with nontrivial $\widetilde K_{ab}$ if desired. 
The following is a consequence of Theorem \ref{thm:momentum-summary}\eqref{part:K-fast-falloff} (and the immediately following discussion) and Theorem \ref{thm:lichnerowicz-summary}\eqref{part:phi-fast-falloff}:

\begin{theorem} \label{thm:summary_ck}
(Type (CK)) 
    We choose $1-n < \tau < 2-n$, and we choose $k \in \Nats$ and $p \in (1,\infty)$ 
    satisfying $k > \ell+1+n/p$ for some $\ell \in \Nats$. 
    Suppose that $g_{ab}$ is asymptotically Euclidean of class
    $W^{k,p}_\tau$ and is Yamabe positive, that $N$ is a positive function with $\delta N = N-1 \in W^{k,p}_{\tau}$, 
    and that $A_{ab} \in W^{k-1,p}_{\tau-1}$ is traceless and symmetric. The conformal constraint 
    equations \eqref{eqn:lichnerowicz}-\eqref{eqn:conformalmomentum} then admit a unique solution for
    the vector field $W^a \in W^{k,p}_{(2-n)^+}$ and the positive function $\varphi$ with 
    $\varphi - 1 \in W^{k,p}_{(2-n)^+}$. The corresponding initial data set 
    $(\Reals^n,\, \tilde g, \, \widetilde K_{ab})$ has ADM momentum components 
    \be \label{eqn:momentum-integrals}
    p_j := \mathcal P_j(\widetilde K_{ab}) 
    = \mathcal G_j \left(\frac{1}{2N}A_{ab} \right) 
    = -\mathcal R_j\left(\frac{1}{2N}A_{ab} \right)
    := -\int_{\Reals^n} \nabla^b \left( \frac{1}{2N} A_{ab}\right) k^a_{(j)}\; dV_g,
    \ee 
    and there is a constant $C$ such that
    \begin{align}
    \tilde g_{ij} & = \left(1+\frac{C}{r^{n-2}}\right)\delta_{ij} + o_{\ell+1}(r^\tau), \notag \\
    \widetilde K_{ij} & = \frac{2n(n-2) C_n}{r^{n-1}} 
    \left[ p_{i} n_j + p_{j} n_i + ((n-2) n_i n_j - \delta_{ij}) p_{k}n^k \right] + o_\ell(r^{\tau-1}). \notag
    \end{align}
\end{theorem}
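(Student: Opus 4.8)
The plan is to assemble this theorem from the constraint analyses of Sections~\ref{sec:momentum} and~\ref{sec:hamiltonian}, exploiting the semi-decoupled structure of the conformal constraints: first solve the momentum constraint \eqref{eqn:conformalmomentum} for $W^a$ via Theorem~\ref{thm:momentum-summary}\eqref{part:K-fast-falloff}, then feed the resulting $K_{ab}$ into the Lichnerowicz equation \eqref{eqn:lichnerowicz2} and invoke Theorem~\ref{thm:lichnerowicz-summary}\eqref{part:phi-fast-falloff}, and finally perform the conformal change $\tilde g=\varphi^{q_n-2}g$, $\widetilde K=\varphi^{-2}K$ and convert the weighted-Sobolev memberships into pointwise decay.

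\emph{Momentum stage.} Put $Z_b:=\nabla^a(\tfrac{1}{2N}A_{ab})$. Since $\delta N\in W^{k,p}_\tau$ with $\tau<0$, a Neumann series together with Lemma~\ref{lem:mult-basic} gives $\tfrac{1}{2N}-\tfrac12\in W^{k,p}_\tau$, hence $\tfrac{1}{2N}A_{ab}\in W^{k-1,p}_{\tau-1}$ and $Z_b\in W^{k-2,p}_{\tau-2}$. As $1-n<\tau<2-n$, Theorem~\ref{thm:momentum-summary}\eqref{part:K-fast-falloff} applies verbatim: it produces the unique $W^a\in W^{k,p}_{(2-n)^+}$ solving $(P_{g,N}W)_a=Z_a$, gives the identity $\mathcal P_j(K)=\mathcal G_j(\tfrac{1}{2N}A)=-\mathcal R_j(\tfrac{1}{2N}A)=-\mathcal O_j(Z)$, and yields the carrier expansion $W^a=\sum_i\mathcal O_i(Z)\,W^a_{(i)}+V^a$ with $V^a\in W^{k,p}_\tau$. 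Because $\varphi^{-2}-1=O(r^{2-n})$ does not affect ADM momentum, $\mathcal P_j(\widetilde K)=\mathcal P_j(K)$, which is precisely \eqref{eqn:momentum-integrals}. Writing $p_j:=-\mathcal O_j(Z)$ and using Proposition~\ref{prop:Wdual}\eqref{part:W-decomp} to replace each $W^a_{(i)}$ by $G^a_{(i)}$ modulo $W^{k,p}_{(1-n)^+}$, and noting that $1-n<\tau$ makes the $V^a$ term the slower of the two corrections, one obtains $W^a=-\sum_j p_j\,G^a_{(j)}+o_{\ell+1}(r^\tau)$. Then $K_{ab}=\tfrac{1}{2N}(A_{ab}-(\ck W)_{ab})$, and since $A_{ab}\in W^{k-1,p}_{\tau-1}$ while the corrections $(\ck-\overline\ck)G_{(j)}$ and $(\tfrac1{2N}-\tfrac12)(\ck W)_{ab}$ are faster than $r^{\tau-1}$, the explicit formula \eqref{eqn:LbarG} (equivalently \eqref{eqn:K-fast-decay}) delivers the stated leading term of $K_{ab}$ with an $o_\ell(r^{\tau-1})$ error; in particular $K_{ab}\in W^{k-1,p}_{(1-n)^+}$.

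\emph{Hamiltonian stage and conformal change.} With $K_{ab}\in W^{k-1,p}_{(1-n)^+}$ and $Y_g>0$, Theorem~\ref{thm:lichnerowicz-summary}\eqref{part:phi-fast-falloff} supplies the unique $u\in W^{k,p}_{(2-n)^+}$ for which $\varphi=1+u$ solves \eqref{eqn:lichnerowicz2}, together with $u=Cr^{2-n}+v$ for a constant $C$ and $v\in W^{k,p}_\tau$. Using $q_n-2=\tfrac{4}{n-2}$ we expand $\varphi^{q_n-2}=1+\tfrac{4}{n-2}Cr^{2-n}+(\text{a }W^{k,p}_\tau\text{ term})$, and since $g_{ab}-\delta_{ab}\in W^{k,p}_\tau$ with $\tau<2-n$, it follows that $\tilde g_{ab}=\varphi^{q_n-2}g_{ab}$ equals $(1+C'r^{2-n})\delta_{ab}$ plus a $W^{k,p}_\tau$ remainder, with $C':=\tfrac{4}{n-2}C$. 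Likewise $\varphi^{-2}-1\in W^{k,p}_{(2-n)^+}$ times $K_{ab}=O(r^{1-n})$ is $O(r^{3-2n})$, which decays strictly faster than $r^{\tau-1}$ throughout $1-n<\tau<2-n$, so $\widetilde K_{ab}=\varphi^{-2}K_{ab}=K_{ab}+o_\ell(r^{\tau-1})$ retains the leading term of $K_{ab}$. Finally, since $k>\ell+1+n/p$, the weighted Sobolev embedding recorded after Definition~\ref{def:weighted-basic} converts membership in $W^{k,p}_\tau$ into $o_{\ell+1}(r^\tau)$ for $\tilde g_{ab}-\delta_{ab}$ and membership in $W^{k-1,p}_{\tau-1}$ into $o_\ell(r^{\tau-1})$ for the $\widetilde K$-remainder, giving the displayed asymptotics.

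\emph{Main obstacle.} No new hard analysis is required --- the substance is carried by Theorems~\ref{thm:momentum-summary} and~\ref{thm:lichnerowicz-summary} --- so the only genuine care lies in the decay bookkeeping: checking that each cross-term and conformal-factor correction (the $W^a_{(i)}\mapsto G^a_{(i)}$ remainder in $W^{k,p}_{(1-n)^+}$, the $\tfrac1{2N}-\tfrac12$, $\varphi^{q_n-2}-1$, and $\varphi^{-2}-1$ factors, the $(\ck-\overline\ck)G_{(j)}$ piece, and $g_{ab}-\delta_{ab}$ itself) decays strictly faster than the advertised rates $r^\tau$ and $r^{\tau-1}$ --- which ultimately reduces to the strict inequalities $1-n<\tau<2-n<0$ --- and that the margin $k>\ell+1+n/p$ is precisely what the weighted embedding needs to deliver $\ell+1$ and $\ell$ orders of pointwise control, respectively.
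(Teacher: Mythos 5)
Your proposal is correct and follows exactly the route the paper intends: the paper itself gives no separate proof of Theorem \ref{thm:summary_ck}, deriving it in one sentence from Theorem \ref{thm:momentum-summary}\eqref{part:K-fast-falloff} (with the discussion culminating in \eqref{eqn:K-fast-decay}) and Theorem \ref{thm:lichnerowicz-summary}\eqref{part:phi-fast-falloff}, which is precisely your two-stage argument. Your additional decay bookkeeping for the conformal-factor corrections and the $W^a_{(i)}\mapsto G^a_{(i)}$ replacement is accurate and usefully fills in what the paper leaves implicit.
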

\qed
 
We recall that $k_{(j)}^a$ denotes the vector fields of Proposition \ref{prop:kernel}, which comprise a particular basis of the kernel of $P_{g,N}$ on $W^{k,p}_{0^+}$. In order to use this result to prescribe the ADM momentum components in practice, one should fix $g$ and $N$ according to the above hypotheses, numerically construct the kernel vector fields $k^a_{(j)}$, and compute each of the integrals \eqref{eqn:momentum-integrals} for $n$ different choices of $A_{ab}$. This procedure generically produces $n$ independent momentum vectors, resulting in a basis that one can use to construct any desired momentum vector by specifying an appropriate linear combination of one's choice of the $A_{ab}$'s. To most simply construct an initial data set in the center of mass frame, one should choose $A_{ab} = 0$. To construct an initial data set in the center of mass frame which is not time-symmetric, one should compute the momentum vector for one more choice of $A_{ab}$ (for a total of $n+1$ choices, necessarily producing a linearly dependent set of momentum vectors), thereby allowing one to build a nontrivial linear combination with zero momentum.

Our third and final summarizing theorem is relevant to the construction of type (A) data. 
It contains the novel result that an anisotropic mass term can be built into the metric $\tilde g_{ij}$ of one's initial data set in practice, providing an intermediate regime between type (B) and type (CK) data. 
It further indicates that a Euclidean transverse traceless tensor which is homogeneous of degree $1-n$ can be built into $\widetilde K_{ij}$ as its leading order term if desired-- this is precisely the momentum analogue of the anisotropic mass term allowed for in $\tilde g_{ij}$.
The following statement is a consequence of Theorem \ref{thm:momentum-summary}\eqref{part:K-threshold-falloff} and Theorem \ref{thm:lichnerowicz-summary}\eqref{part:phi-threshold-falloff} (and the immediately following discussion):

\begin{theorem} \label{thm:summary_a}
(Type (A)) 
    We choose $1-n < \tau < 2-n$, and we choose $k \in \Nats$ and $p \in (1,\infty)$ 
    satisfying $k > \ell+1+n/p$ for some $\ell \in \Nats$. 
    Suppose that $N$ is a positive function with $\delta N = N-1 \in W^{k,p}_{(2-n)^+}$. Suppose that $g_{ab}$ is asymptotically Euclidean of class $W^{k,p}_{(2-n)^+}$, is Yamabe positive, and is given by
    \[
    g_{ij} = \delta_{ij} + r^{2-n}\beta_{ij} + \gamma_{ij}
    \]
    on $E_R$ for some $R>0$, where $\beta_{ij} \in W^{k,p}_{\mathrm{loc}}(\Reals^n \setminus \{0\})$ is homogeneous of degree zero 
    and $\gamma_{ij}\in W^{k,p}_{\tau}$. 
    Suppose that $A_{ab}\in W^{k-1,p}_{(1-n)^+}$ has the form
    \be
    A_{ab} = r^{1-n} B_{ab} + C_{ab}
    \ee
    on $E_R$, where $B_{ab} \in W^{k-1,p}_{\mathrm{loc}}(\Reals^n\setminus\{0\})$ is homogeneous of degree zero and $C_{ab}\in W^{k-1,p}_{\tau-1}$, both traceless and symmetric. The conformal constraint 
    equations \eqref{eqn:lichnerowicz}-\eqref{eqn:conformalmomentum} admit a unique solution for
    the vector field $W^a \in W^{k,p}_{(2-n)^+}$ and the positive function $\varphi$ with 
    $\varphi - 1 \in W^{k,p}_{(2-n)^+}$. The corresponding initial data set 
    $(\Reals^n,\, \tilde g, \, \widetilde K_{ab})$ has ADM momentum components 
    \[
    \mathcal P_j(\widetilde K_{ab}) 
    = \mathcal G_j \left(\frac{1}{2N}A_{ab} \right) 
    := \int_{\Reals^n} \frac{1}{2N} A_{ab} \nabla^b k^a_{(j)}\; dV_g.
    \] 
    There is a homogeneous of degree $2-n$ symmetric tensor $h_{ij}$ and a homogeneous of degree $1-n$ Euclidean transverse traceless tensor $D_{ij}$ such that
    \begin{align}
    \tilde g_{ij} & = \delta_{ij} + h_{ij} + o_{\ell+1}(r^\tau), \notag \\
    \widetilde K_{ij} & = D_{ij} + o_\ell(r^{\tau-1}). \notag
    \end{align}
    Moreover: if $B_{ab} \neq \omega \delta_{ab}$ for any $\omega \in W^{k,p}(S^{n-1})$, 
    then $h_{ij}$ is not a constant multiple of $r^{2-n} \delta_{ij}$.
    \end{theorem}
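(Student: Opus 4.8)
The plan is to obtain this theorem by repackaging Theorem~\ref{thm:momentum-summary}\eqref{part:K-threshold-falloff} and Theorem~\ref{thm:lichnerowicz-summary}\eqref{part:phi-threshold-falloff}: one runs the semi-decoupled conformal system --- first the momentum constraint, then the Hamiltonian constraint --- and then transcribes the weighted-Sobolev conclusions into pointwise $o_m$ form by Sobolev embedding. The existence and uniqueness of $W^a \in W^{k,p}_{(2-n)^+}$ follow from Theorem~\ref{thm:momentum-summary}, and those of $\varphi$ with $\varphi - 1 \in W^{k,p}_{(2-n)^+}$ from Corollary~\ref{cor:lichnerowicz-existence-fast} (using Yamabe positivity); the substance of the proof is then to track asymptotic structure through $\tilde g = \varphi^{q_n-2} g$ and $\widetilde K = \varphi^{-2} K$.

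First I would set $Z_b := \nabla^a(\tfrac{1}{2N} A_{ab})$ and verify it meets the borderline hypotheses of Theorem~\ref{thm:momentum-summary}\eqref{part:K-threshold-falloff} (with the weight parameter there taken to be $\tau$): from $A_{ab} = r^{1-n} B_{ab} + C_{ab}$ on $E_R$, $\delta N \in W^{k,p}_{(2-n)^+}$, and Lemma~\ref{lem:mult-basic}, one finds $Z_b = r^{-n} F_b + H_b$ on $E_R$ with $F_b \in W^{k-2,p}(S^{n-1})$ homogeneous of degree zero and $H_b$ strictly faster-decaying, while Lemma~\ref{lem:div-obstr-vanish} supplies the vanishing of the obstructions $\int_{S^{n-1}} F(e_{(i)})\, d\overline A$. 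Theorem~\ref{thm:momentum-summary}\eqref{part:K-threshold-falloff} then produces $W^a = r^{2-n} U^a + V^a$ on $E_R$ with $U^a$ homogeneous of degree zero and $V^a \in W^{k,p}_\tau$, and identifies $\mathcal P_i(K) = \mathcal G_i(\tfrac{1}{2N} A_{ab})$ for $K_{ab} := \tfrac{1}{2N}(A_{ab} - (\ck W)_{ab})$. Isolating the homogeneous degree-$(1-n)$ piece of $K_{ab}$ on $E_R$ then gives $D_{ab} = \tfrac12(r^{1-n} B_{ab} - (\overline{\ck}(r^{2-n} U))_{ab})$, since the remaining contributions ($C_{ab}$, $(\ck V)_{ab}$, $((\ck - \overline{\ck})(r^{2-n} U))_{ab}$, and $(\tfrac{1}{2N}-\tfrac12)(A_{ab}-(\ck W)_{ab})$) all decay strictly faster than $r^{1-n}$ by Lemma~\ref{lem:mult-basic}. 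This $D_{ab}$ is symmetric and traceless (inherited from $K_{ab}$) and Euclidean transverse-traceless just as in the discussion following Theorem~\ref{thm:momentum-summary}: the leading order of $\nabla^a K_{ab} = 0$ reads $\overline{\nabla}^a D_{ab} = 0$, because the flat divergence of the $o(r^{1-n})$ correction to $D_{ab}$ is $o(r^{-n})$ while the full and flat divergences differ by $O(r^{2-2n})$, forcing the homogeneous degree-$(-n)$ quantity $\overline{\nabla}^a D_{ab}$ to vanish.

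Next I would feed $K_{ab} \in W^{k-1,p}_{(1-n)^+}$ into the Hamiltonian sector. Corollary~\ref{cor:lichnerowicz-existence-fast} supplies the unique $\varphi = 1+u$, $u \in W^{k,p}_{(2-n)^+}$, solving \eqref{eqn:lichnerowicz-u}, and since $g_{ab} = \delta_{ab} + r^{2-n}\beta_{ab} + \gamma_{ab}$ with $\gamma_{ab} \in W^{k,p}_\tau$ and $1-n<\tau<2-n$, Theorem~\ref{thm:lichnerowicz-summary}\eqref{part:phi-threshold-falloff} refines this to $u = \omega r^{2-n} + v$ on $E_R$ with $\omega \in W^{k,p}(S^{n-1})$ and $v \in W^{k,p}_\tau$. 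Expanding $\varphi^{q_n-2} = 1 + (q_n-2)\,\omega\, r^{2-n} + (\text{remainder in } W^{k,p}_\tau)$ and multiplying by $g_{ab}$, I would collect $\tilde g_{ab} = \delta_{ab} + h_{ab} + (\text{remainder in } W^{k,p}_\tau)$ with $h_{ab} = r^{2-n}(\beta_{ab} + (q_n-2)\,\omega\,\delta_{ab})$, homogeneous of degree $2-n$; likewise $\widetilde K_{ab} = \varphi^{-2} K_{ab} = D_{ab} + (\text{remainder in } W^{k-1,p}_{\tau-1})$. Sobolev embedding, available since $k > \ell+1+n/p$, upgrades these remainders to $o_{\ell+1}(r^\tau)$ and $o_\ell(r^{\tau-1})$ respectively. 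For the ADM momentum, $\widetilde K_{ab}-K_{ab} = (\varphi^{-2}-1)\, K_{ab} = O(r^{3-2n})$ contributes a vanishing surface integral over $\partial B_R$ in the limit, so $\mathcal P_j(\widetilde K_{ab}) = \mathcal P_j(K_{ab}) = \mathcal G_j(\tfrac{1}{2N} A_{ab}) = \int_{\Reals^n} \tfrac{1}{2N} A_{ab}\, \nabla^b k^a_{(j)}\; dV_g$.

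The anisotropy assertion would then be immediate from the explicit form of $h_{ab}$: if $h_{ij} = c\, r^{2-n}\delta_{ij}$ for some constant $c$, then $\beta_{ij} = (c - (q_n-2)\,\omega)\,\delta_{ij}$ with $c - (q_n-2)\,\omega \in W^{k,p}(S^{n-1})$, contradicting the hypothesis that $\beta_{ab}$ is not $\delta_{ab}$ times a function in $W^{k,p}(S^{n-1})$. I do not expect a serious obstacle in this proof: the genuinely new structural facts --- that $K$ carries precisely the gravitational momentum $\mathcal G_j(A_{ab}/2N)$, and that its leading $r^{1-n}$ term is Euclidean transverse-traceless --- have already been established in Section~\ref{sec:momentum}, and the anisotropy of the mass term in Section~\ref{sec:hamiltonian}. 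The only real care required is the bookkeeping that every cross term produced in the conformal rescaling decays strictly faster than the displayed homogeneous leading terms, which rests on the ordering $4-2n \le 1-n < \tau < 2-n$ together with Lemma~\ref{lem:mult-basic}, and the translation of weighted-Sobolev decay into the $o_m$ notation via Sobolev embedding.
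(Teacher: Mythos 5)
Your proposal is correct and follows essentially the same route as the paper, which simply cites Theorem \ref{thm:momentum-summary}\eqref{part:K-threshold-falloff}, Theorem \ref{thm:lichnerowicz-summary}\eqref{part:phi-threshold-falloff}, and the surrounding discussions (on the transverse-traceless leading term of $K_{ab}$ and on $h_{ab} = r^{2-n}(\beta_{ab}+(q_n-2)\omega\delta_{ab})$) without further argument. Your version merely fills in the bookkeeping that the paper leaves implicit, and does so correctly.
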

\qed

We remark that the discussion at the end of Section \ref{sec:momentum} further implies that, with an appropriate choice of $A_{ab} \in W^{k-1,p}_{(1-n)^+}$, one may arrange that the leading order term in $\widetilde K_{ij}$ can be {\it any} given Euclidean transverse traceless tensor $D_{ij}$ that is homogeneous of degree $1-n$. That is, this term can be prescribed, and it could readily be made nontrivial even with zero ADM momentum.

Taken together, Theorems \ref{thm:summary_b}-\ref{thm:summary_a} answer the question of how one can practically construct initial data sets of each of the types (CK), (B), and (A) (and a bit beyond) using the formalism of the conformal method. These theorems indicate that one can precisely control the initial data set's ADM momentum and the leading (or next-to-leading) order decay rates of the physical metric $\tilde g_{ij}$ and the second fundamental form $\widetilde K_{ij}$, and that one may build an anisotropic mass term into $\tilde g_{ij}$, or the analogous term into $\widetilde K_{ij}$, if desired. 
As an application, these results guarantee that a simple example numerically constructed below is of type (A), allowing us to place the strongest known conditions (cf.\@ \cite{bieri_brill_2025}) on the collection of spacetimes to which the antipodal matching conjecture of \cite{strominger} could apply.

\section{A Numerical Example}
\label{sec:numerics}

In this section, we present a numerically computed Brill wave solution of type (A) using the method of \cite{bieri_brill_2025}. Brill waves are initial data sets on $\Reals^3$ for which $\widetilde K=0$ and $\tilde g$ is axisymmetric.  The momentum constraint is then trivially satisfied, while the Lichnerowicz equation \eqref{eqn:lichnerowicz} becomes
\be
-8 {\Delta_g}\varphi + R(g) \varphi = 0.
\label{Brill1}
\ee

We find it convenient to perform the numerical computations in spherical polar coordinates $(r,\theta,\phi)$.  In these coordinates, the line element of the unphysical metric $g$ takes the form 
\be
d {s^2} = {e^{2q}} (d {r^2} + {r^2} d {\theta ^2}) + {r^2} {\sin^2} \theta d {\phi^2}.
\ee
Here, in keeping with axisymmetry, the function $q$ depends only on $r$ and $\theta$.  
Equation \eqref{Brill1} then takes the form
\be
- \overline \Delta \varphi + S \varphi = 0,
\label{Brill2}
\ee
where $\overline \Delta$ is the flat Laplacian and $S$ is given by 
\be
S = - {\frac 1 4} \left ( {\frac {{\partial ^2}q} {\partial {r^2}}} + {\frac 1 r} \, {\frac {\partial q} {\partial r}} + {\frac 1 {r^2}} \, {\frac {{\partial ^2}q} {\partial {\theta ^2}}} \right ). 
\ee
 
We define $F := \ln \varphi$ and decompose $F={F_1}+{F_2}$, with $F_1$ a solution to 
\be
\overline \Delta {F_1} = S.
\label{F1eqn}
\ee
Equation (\ref{Brill2}) now becomes
\be
\overline \Delta {F_2} = - {\vec \nabla}F \cdot {\vec \nabla}F,
\label{F2eqn}
\ee
which we seek to solve for $F_2$. 
Given a numerical method to invert the flat Laplacian, we first solve equation (\ref{F1eqn}) and proceed to solve equation (\ref{F2eqn}) by iteration.  
That is, having found $F_1$, we make the initial guess of zero for $F_2$ and repeatedly solve equation (\ref{F2eqn}) for an improved version of $F_2$, where the right hand side of equation (\ref{F2eqn}) is computed using the previous version of $F_2$.  At each step, the current version of $F_2$ is stored, with the final version used to compute $F$.

Our numerical method to invert the flat Laplacian relies on the standard Green's function for axisymmetric functions (see e.g., \cite{Jackson}) where we compute all integrals numerically. Explicitly, for equation (\ref{F1eqn}) we have 
\be
{F_1} = {\sum _{\ell =0} ^\infty} \left [ {g_\ell}(r) {r^{-(\ell+1)}} + {h_\ell}(r) {r^\ell}\right ] {P_\ell}(\cos \theta ),
\ee
where ${g_\ell}(r)$ and ${h_\ell}(r)$ are given by 
\be
{g_\ell}(r) = {\int _0 ^r} {{\tilde r}^{\ell+2}} d {\tilde r} {\int _0 ^\pi} \sin \theta \, d \theta \, S({\tilde r},\theta){P_\ell}(\cos \theta),
\ee
\be
{h_\ell}(r) = {\int _r ^\infty} {{\tilde r}^{1-\ell}} d {\tilde r} {\int _0 ^\pi} \sin \theta \, d \theta \, S({\tilde r},\theta){P_\ell}(\cos \theta),
\ee
and $P_\ell$ are the Legendre polynomials.

To assess whether our constructed initial data set conforms to the antipodal conjecture of \cite{strominger}, we examine the behavior of the curvature component $\rho$, which is the principal curvature corresponding to the plane that contains the gradient of $r$ and the normal to the initial data surface. For Brill wave initial data, $\rho$ is given asymptotically (as $r \to \infty$) by the following expression:
\bea
{r^2} \rho &=& - {r^2} {\frac {{\partial ^2}q} {\partial {r^2}}} - 
{\frac {{\partial ^2}q} {\partial {\theta^2}}} - \cot \theta {\frac {\partial q} {\partial \theta}} + 2 {r^2} {\frac {\partial q} {\partial r}} {\frac {\partial F} {\partial r}} - 2 {\frac {\partial q} {\partial \theta}} {\frac {\partial F} {\partial \theta}}
\nonumber
\\
&& - 4 {r^2} {\frac {{\partial ^2}F} {\partial {r^2}}} - 4 r {\frac {\partial F} {\partial r}} - 2 {\frac {{\partial ^2}F} {\partial {\theta^2}}} - 2 \cot \theta {\frac {\partial F} {\partial \theta}} - 4 {{\left ( {\frac {\partial F} {\partial \theta}} \right ) }^2}.
\eea

\begin{figure}[t!]	
\centering
\includegraphics[width=0.7\linewidth]{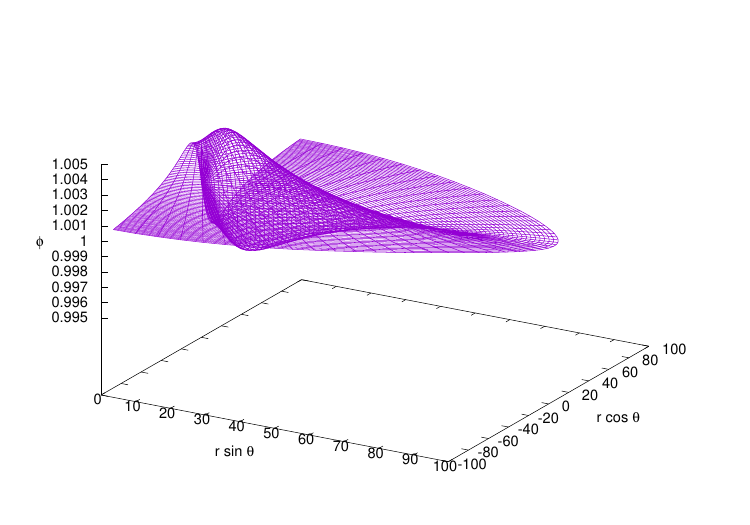}
	\caption{$\varphi$ with $q$ given by equation (\ref{qformula2}) with ${a_0}=1, \, {r_0}=10, \, \gamma = 2$}			\label{phifig}
\end{figure}

In order to produce an example that is not antipodally symmetric, we consider $q$ of the form
\be
q={a_0}{r^3}\cos (\theta) {\sin ^2}(\theta) {{({r^2}+{r_0^2})}^{-\gamma}},
\label{qformula2}
\ee
where ${a_0}, \, {r_0}, \,$ and $\gamma$ are constants.  
In order to obtain type (A) behavior, we choose $\gamma = 2$-- that the resulting initial data set is of type (A) is now implied by Theorem \ref{thm:summary_a}.
Figure \ref{phifig} graphs the numerically computed $\varphi$ for the case
${a_0}= 1, {r_0}=10, \gamma = 2$ up to the radius of $r=100$.  Figure \ref{cfig} plots ${r^3}\rho$ for two different values of $r$: $r=10000$ and $r=15000$.   The fact that the two curves agree corroborates the claim that the asymptotic behavior of $\rho$ is $\rho \propto {r^{-3}}$, as expected.

\begin{figure}[t!]	
\centering
\includegraphics[width=0.7\linewidth]{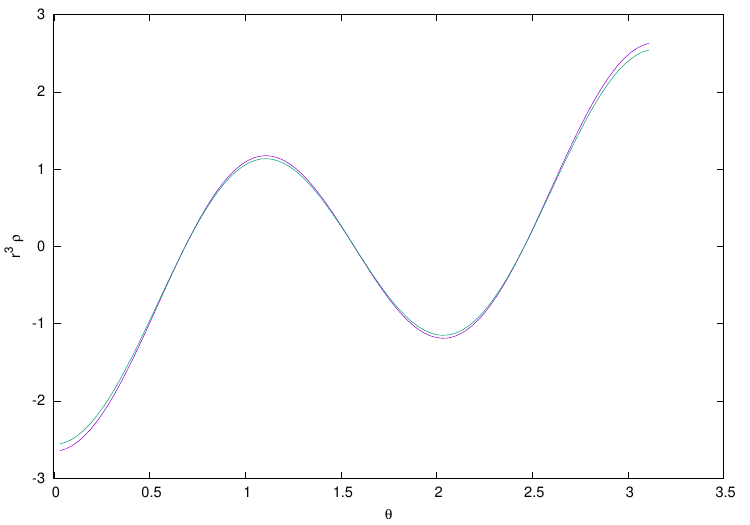}
	\caption{${r^3}\rho$ as a function of $\theta$ for $r=10000$ and $r=15000$ with $q$ given by equation (\ref{qformula2}) with ${a_0}=1, \, {r_0}=10, \, \gamma = 2$	}	\label{cfig}
\end{figure}

We observe in Figure \ref{cfig} that $\rho$ is not antipodally symmetric.  This is relevant to the conjecture of Strominger \cite{strominger} that in the limit of early time at null infinity there is a symmetry involving the combination of time reflection and antipodal mapping.  Since Brill wave initial data sets have zero extrinsic curvature, it follows that their time evolution has time reflection symmetry.  The evolution of the Brill wave initial data set whose features are plotted in Figures \ref{phifig} and \ref{cfig} therefore cannot satisfy the conjecture of \cite{strominger}.  This result should not be regarded as a counterexample to this conjecture, but rather as a statement regarding the sort of asymptotic flatness under which this conjecture can hold: that is, the conjecture of \cite{strominger} does not hold for metrics of type (A).

\section{Conclusion}
\label{sec:conclusion}

In this work, we have thoroughly investigated the problem of constructing maximal solutions $(\mathbb R^n, \tilde g, \widetilde K_{ab})$ to the vacuum constraint equations \eqref{eqn:hamiltonian}-\eqref{eqn:momentum} via the conformal method with prescribed asymptotic behavior. 
While it has long been known that one can construct unique solutions for any choice of seed data $g_{ab}$, $A_{ab}$, and $N$ taken in appropriate function spaces, we have clarified the precise extent of the technical control which can be exerted over the asymptotics of these solutions. 
At a broad level, we have proceeded by first characterizing the solutions to the conformal constraint equations \eqref{eqn:lichnerowicz}-\eqref{eqn:conformalmomentum} explicitly at leading order, using the ``momentum carrier" vector fields $W^a_{(j)}$ and the ``mass carrier" function $u_0$ in the event of rapid decay in the seed data, or using understood solutions from analogous problems involving the flat operators $\overline P$ and $\overline \Delta$ in the event of threshold decay corresponding to the natural forms \eqref{eqn:A-threshold-decomp} and \eqref{eqn:type-a-unphysical} for $A_{ab}$ and $g_{ab}$. 
Generally, we have shown that peeling off this leading-order behavior of the solutions renders the equations solvable in a space with faster decay, asserting control of the decay rate of subleading corrections.
Theorems \ref{thm:momentum-summary} and \ref{thm:lichnerowicz-summary} summarize these results at the more abstract level of the vector field $W^a$ and conformal factor $\varphi$ directly solved for in the conformal method, while Theorems \ref{thm:summary_b}, \ref{thm:summary_ck}, and \ref{thm:summary_a} summarize these results at the level of the ultimately constructed initial data set $(\mathbb R^n, \tilde g, \widetilde K_{ab})$. We remark that the leading-order terms may be understood well enough to prescribe the initial data set's ADM momentum components, and (in working with prescribed threshold decay) to impose that the metric include an anisotropic mass term.

The results established herein ensure that, with an appropriate choice of seed data, one can construct initial data sets of any of the three types discussed in the introduction. We have illustrated this in Section \ref{sec:numerics} by numerically constructing a simple example of a type (A) spacetime guaranteed to include an anisotropic mass term, which previously available results were unable to handle. This example is constructed such that its spacetime evolution cannot satisfy the antipodal matching conjecture of \cite{strominger}, restricting the collection of spacetimes within which one can expect this property of physical interest to be realized. It is our hope that our results furnish the ability to further construct a wealth of initial data sets with various decay features, allowing the relativity community to probe, both numerically and analytically, how physically interesting properties of spacetimes may depend on one's choice of physically motivated asymptotic behavior.

\section*{Acknowledgements}

Lydia Bieri is supported by NSF grant DMS-2204182 to The University of Michigan. James Wheeler was also partially supported by NSF grant DMS-2204182 to The University of Michigan. David Garfinkle was supported by NSF grant PHY-2102914 to Oakland University. 

\appendix

\ifarxiv
\section{Fredholm Properties of the Vector Laplacian}
\else
\section{\small Fredholm Properties of the Vector Laplacian}
\fi
\label{app:vec-fredholm}

This appendix contains auxiliary technical results on the 
Fredholm properties of vector Laplacians.  Although the body of 
our work above is restricted to the setting of metrics with H\"older continuous first derivatives, for the sake of generality and potential other applications we now relax the condition $k>n/p+1$ of Assumption \ref{assumption:main} with $k>n/p$ instead.
\begin{assumption}\label{assumption:appendix}
    The metric $g_{ab}$ on $\Reals^n$, $n\ge 3$ and the lapse $N$ satisfy the following two conditions for some $\tau<0$,  $k\in\Nats$, and $1<p<\infty$ with $k > n/p$:
    \begin{itemize}
        \item $g_{ab}$ is asymptotically Euclidean of class $W^{k,p}_{\tau}$.
        \item $N$ is a positive function with $\delta N = N-1 \in W^{k,p}_{\tau}$
    \end{itemize}
\end{assumption}
The vector Laplacian is then $P_{g,N} =  \Div_g (\frac{1}{2N} \ck)$, which we abbreviate as $P$ in this section.  The model Euclidean vector Laplacian with $N=1$ is then $\overline P = \frac{1}{2} \overline \Div (\overline \ck)$. 

The natural Fredholm theory for a self-adjoint operator such as the vector Laplacian makes use of dual function spaces, and in particular distributions having a negative order of differentiability. Hence we generalize Definition \ref{def:weighted-basic} to allow weighted Sobolev spaces $W^{k,p}_\delta$ with $k \le 0$. Recall that if $k<0$, then $W^{k,p}(\Omega)$ for an arbitrary domain $\Omega \subset  \Reals^n$ is the topological dual space $W^{k,p}(\Omega) := (W^{-k,p'}_0(\Omega))^*$, with the induced dual norm, where $\frac{1}{p} + \frac{1}{p'} = 1$ and where $W^{-k,p'}_0(\Omega)$ is the closure of $C_c^\infty(\Omega)$ in $W^{-k,p'}(\Omega)$.
\begin{definition} \label{def:weighted-full}
    Let $k\in \Ints$, $1<p<\infty$, and $\delta\in \Reals$.
    We denote by $A_r$ the annulus $\{x\in\Reals^n: r/2 < |x| < 2r\}$,
    and we let $S_r:\Reals^n \rightarrow \Reals^n$ be the scaling map $S_r(x) = rx$, so 
    $S_r(A_1) = A_r$. 
    The {\bf weighted Sobolev space} $W^{k,p}_\delta$ consists of the tempered distributions 
    $u$ on $\mathbb R^n$ such that 
    \[
 ||u||_{W^{k,p}_\delta}^p := ||u||_{W^{k,p}(B_1)}^p + \sum_{j=0}^\infty 
    2^{-j p \delta} ||S_{2^j}^* u||_{W^{k,p}(A_1)}^p < \infty.
    \]
\end{definition}
It follows from Bartnik's scaling technique \cite{bartnik1} that 
for $k\ge 0$ the norm above is equivalent to that of Definition \ref{def:weighted-basic}.
Although we do not require it, we remark that Definition \ref{def:weighted-full}
generalizes to non-integral scales of differentiabilty, and 
indeed elementary arguments show that 
$W^{k,p}_\delta$ defined here agrees with the space $f^s_{p,2,\mu}$ of \cite{triebel_spaces_1976a}
with $s=k$ and $\mu = - \delta p + sp - n$.  See also the analogous 
definition for the special case $p=2$ in \cite{maxwell_rough_2006}. General properties of these spaces in the case $k\ge 0$ can be found in \cite{bartnik1}, and their generalizations to the broader family appear in \cite{triebel_spaces_1976a} and \cite{maxwell_rough_2006}. The sole 
exception is the compactness of the embedding $W^{k,p}_\delta \hookrightarrow W^{j,p}_{\delta'}$ when $k\ge j$ and $\delta'<\delta$.  This is appears in the special case $p=2$ in \cite{maxwell_rough_2006} and can be extended to general $p$ using the same technique.

The following multiplication lemma provides criteria under
which the product of elements of two weighted Sobolev spaces determines an element of a third weighted space. The proof follows from Bartnik's scaling technique and the corresponding Sobolev multiplication properties on unweighted spaces (see, for example, Theorem 2.5 of \cite{holst_scaling_2023}). See also \cite{allen_sobolevclass_2022a} Proposition 3.11, which is the equivalent result for asymptotically hyperbolic manifolds.
\begin{lemma}\label{lem:mult}
	Suppose $1<p_1,p_2,q<\infty$, $k_1,k_2,j\in\Ints$ and $\delta_1, \delta_2 \in\Reals$.  Let $r_1,r_2$ and $r$ be defined by
\[
\frac{1}{r_i} = \frac{1}{p_i} - \frac{k_i}{n}\quad i=1,2\quad\text{and}\quad
\frac{1}{r} = \frac{1}{q} - \frac{j}{n}.
\]
Pointwise multiplication of $C^\infty_c$ 
functions extends to a continuous bilinear map 
$W^{k_1,p_1}_{\delta_1}\times W^{k_2,p_2}_{\delta_2}
\rightarrow W^{j,q}_{\delta_1 + \delta_2}(\Reals)$ so long as
\begin{align*}
    \frac{1}{p_1}+\frac{1}{p_2} &\ge \frac{1}{q}, \\
    k_1+k_2&\ge 0, \\
 \min({k_1,k_2}) &\ge j, \\
\max\left(\frac{1}{r_1},\frac{1}{r_2}\right) & \le  \frac{1}{r}, \\
\text{and} \qquad \quad \frac{1}{r_1} + \frac{1}{r_2} & \le \min\left(1,\frac1r\right),
\end{align*}
with the final inequality being strict if $\min(1/r_1,1/r_2,1-1/r)=0$.
\end{lemma}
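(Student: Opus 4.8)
The plan is to reduce the weighted estimate to the classical unweighted Sobolev multiplication theorem by means of Bartnik's dyadic scaling characterization of the norm in Definition \ref{def:weighted-full}, and then to reassemble the dyadic pieces with a discrete Hölder inequality. Throughout, it suffices to prove the bilinear bound $\|uv\|_{W^{j,q}_{\delta_1+\delta_2}} \lesssim \|u\|_{W^{k_1,p_1}_{\delta_1}}\|v\|_{W^{k_2,p_2}_{\delta_2}}$ for $u,v\in C^\infty_c$; the extension then follows by density when $k_1,k_2\geq 0$, and by continuity of the bilinear form on the dense class of compactly supported smooth functions when some indices are negative.

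First I would set up the scaling reduction. By Definition \ref{def:weighted-full}, $\|u\|_{W^{k,p}_\delta}$ is (up to equivalence) the $\ell^p$ norm of the sequence $a_j := 2^{-j\delta}\,\|S_{2^j}^* u\|_{W^{k,p}(A_1)}$ together with the fixed-ball term $\|u\|_{W^{k,p}(B_1)}$. Since $S_{2^j}^*$ is multiplicative, $S_{2^j}^*(uv) = (S_{2^j}^* u)(S_{2^j}^* v)$, and since the target weight is $\delta_1+\delta_2$, the weight factor splits as $2^{-j(\delta_1+\delta_2)} = 2^{-j\delta_1}\cdot 2^{-j\delta_2}$. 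Thus, modulo the local estimate on each annular block, the weighted bound reduces to a term-by-term statement. The unweighted multiplication theorem on the model annulus --- for instance Theorem 2.5 of \cite{holst_scaling_2023}, whose hypotheses are precisely $k_1+k_2\geq 0$, $\min(k_1,k_2)\geq j$, $\max(1/r_1,1/r_2)\leq 1/r$, and $1/r_1+1/r_2\leq\min(1,1/r)$ with strictness in the degenerate case --- controls $\|(S_{2^j}^* u)(S_{2^j}^* v)\|_{W^{j,q}(A_1)}$ by the product of the $W^{k_1,p_1}$ and $W^{k_2,p_2}$ norms on a slightly thickened annulus $\widetilde A_1$; the thickening contributes only the neighboring blocks $j\mapsto j\pm 1$, which is harmless for $\ell^p$ sequences. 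An analogous estimate on $B_1\subset B_2$ handles the fixed-ball contribution directly.

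Next I would recombine over scales. Writing $a_j := 2^{-j\delta_1}\|S_{2^j}^* u\|_{W^{k_1,p_1}(A_1)}\in\ell^{p_1}$ and $b_j := 2^{-j\delta_2}\|S_{2^j}^* v\|_{W^{k_2,p_2}(A_1)}\in\ell^{p_2}$, the previous step (with the overlap bound) gives $2^{-j(\delta_1+\delta_2)}\|S_{2^j}^*(uv)\|_{W^{j,q}(A_1)}\lesssim\sum_{|i-j|\leq 1}a_i b_i$. Generalized discrete Hölder yields $\|(a_jb_j)\|_{\ell^s}\leq\|(a_j)\|_{\ell^{p_1}}\|(b_j)\|_{\ell^{p_2}}$ with $1/s=1/p_1+1/p_2$, and the sequence-space embedding $\ell^s\hookrightarrow\ell^q$ holds exactly because $1/p_1+1/p_2=1/s\geq 1/q$ --- this is where the first hypothesis of the statement is used and the only place it enters. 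Combining with the ball estimate closes the bound.

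\textbf{Main obstacle.} The scaling bookkeeping above is essentially mechanical; the genuinely delicate point is that the hypotheses displayed in the statement are the correct translation of the unweighted theorem precisely in the borderline regime $\min(1/r_1,1/r_2,1-1/r)=0$ (a factor sitting at the failure of the $L^\infty$ embedding, or a target at the failure of the $L^1$-type estimate, where a logarithmic loss forces the strict inequality) and in the negative-differentiability regime where $k_1<0$, $k_2<0$, or $j<0$ and the unweighted estimate is itself obtained by duality. Here one must check that the duality definition of $W^{k,p}_\delta$ for $k<0$ in Definition \ref{def:weighted-full} is compatible with the scaling reduction --- that $S_{2^j}^*$ acts on the negative-order local spaces $W^{k,p}(A_1)$ with the expected Jacobian-weighted bounds --- so that the negative-order unweighted multiplication result transfers verbatim. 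Once this compatibility is in hand, no new analysis is needed beyond invoking \cite{holst_scaling_2023} (cf.\ also \cite{allen_sobolevclass_2022a}, Proposition 3.11) block by block.
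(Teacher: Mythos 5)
Your proposal is correct and follows exactly the route the paper intends: the paper gives no detailed proof of Lemma \ref{lem:mult}, stating only that it "follows from Bartnik's scaling technique and the corresponding Sobolev multiplication properties on unweighted spaces (Theorem 2.5 of \cite{holst_scaling_2023})," and your argument is precisely a careful elaboration of that — dyadic rescaling via $S_{2^j}^*$, the unweighted multiplication theorem block by block on (thickened) annuli, and discrete H\"older with the embedding $\ell^s\hookrightarrow\ell^q$ to account for the hypothesis $1/p_1+1/p_2\ge 1/q$, which is indeed the only condition not inherited directly from the unweighted result.
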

\qed

A straightforward application of Lemma \ref{lem:mult} (see Proposition 2.6 of \cite{holst_scaling_2023} for the corresponding result in unweighed spaces) establishes the following elementary mapping property of the vector Laplacian:
\[
P: W^{j,q}_\delta \to W^{j-2,q}_{\delta-2}
\]
is continuous so long as
\begin{align}
2-k &\le j \le k, \label{eqn:lap-condition1} \\
\frac1p - \frac{k}{n} &\le \frac1q -\frac{j}{n} \le \frac1{p'} - \frac{2-k}{n}. \label{eqn:lap-condition2}
\end{align}
\begin{definition}\label{def:calS}
Suppose $1<p<\infty$ and $k>n/p$. The set $\mathcal S^{k,p}$ of \textbf{compatible Sobolev indices} 
is the collection of pairs $(j,q)$ in $\Ints\times (1,\infty)$ satisfying \ref{eqn:lap-condition1}--\ref{eqn:lap-condition2}.
\end{definition}
Using the standing assumption $k>n/p$ it is easy to show 
$\mathcal S^{k,p}$ is nonempty and contains $j=1$, $p=2$. This observation and related facts concerning
$\mathcal S^{k,p}$ are discussed in \cite{holst_scaling_2023}.

The dual space $(W^{j,q}_{\delta})^*$ for scalars can be identified with $W^{-j, q'}_{-n-\delta}$; see \cite{triebel_spaces_1976a} for one approach. This identification does not rely on a metric, but it is more useful for our purposes to use a metric-specific representation of the dual space that aligns with the adjoint structure of $P$.  The following result is a variation of Proposition 3.18 of \cite{allen_sobolevclass_2022a}, which proves the analogous fact in the asymptotically hyperbolic setting.  The proof follows the same technique, and is omitted.

\begin{lemma}\label{lem:dual}
Suppose $g_{ab}$ satisfies Assumption \ref{assumption:appendix} with the parameters $k$, $p$, and $\tau$.
Suppose that $\delta\in \Reals$, and that $(j,q) \in \Ints \times (1,\infty)$ satisfies $|j| \leq k$ and
\be \label{eqn:pairing-condition}
\frac1p - \frac{k}{n} \le \frac1q -\frac{j}{n} \le \frac1{p'} + \frac{k}{n}.
\ee
Given $X \in W^{-j,q'}_{-n-\delta}$, we define 
$f_X : C^\infty_c(\Reals^n) \rightarrow \Reals$
by
\begin{equation}
f_X(Y) := \int_{\Reals^n} \left<X,Y\right>_g\; dV_g.
\end{equation}
Then $f_X$ extends to a continuous map
\[
f_X : W^{j,q}_{\delta} \to \Reals.
\]
Moreover the map $X \mapsto f_X$ is a
linear isomorphism 
$W^{-j,q'}_{-n-\delta} \rightarrow (W^{j,q}_{\delta})^*$.
\end{lemma}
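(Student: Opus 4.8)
The plan is to reduce the assertion to the metric-free duality of weighted Sobolev spaces by absorbing every occurrence of the metric into a single fibrewise bundle isomorphism. Expanding $\langle X,Y\rangle_g\, dV_g$ in components, the metric enters only through a finite product of factors of $g_{ab}$, $g^{ab}$, and $\sqrt{\det g}$; by Assumption \ref{assumption:appendix} together with the multiplication Lemma \ref{lem:mult} (and the stability of $1+W^{k,p}_\tau$ under products and under inversion, valid since $k>n/p$), each such factor lies in $1+W^{k,p}_\tau$. Collecting them, one writes
\[
f_X(Y)=\int_{\Reals^n}\langle X,\Phi Y\rangle\, dx,
\]
where $\langle\cdot,\cdot\rangle$ is the fixed metric-independent contraction on the relevant tensor bundle, $dx$ is Lebesgue measure, and $\Phi$ is a fibrewise-invertible bundle endomorphism with $\Phi-I$ a tensor field in $W^{k,p}_\tau$; since $g$ is Riemannian, Cramer's rule exhibits $\Phi^{-1}$ in the same form, so $\Phi^{-1}-I\in W^{k,p}_\tau$ as well.

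First I would show that $\Phi$ is a topological isomorphism of $W^{j,q}_\delta$. Writing $\Phi=I+R$ with $R$ pointwise multiplication by a tensor field in $W^{k,p}_\tau$, Lemma \ref{lem:mult} shows that $R$ maps $W^{j,q}_\delta$ continuously into $W^{j,q}_{\tau+\delta}$, hence (as $\tau<0$) continuously into $W^{j,q}_\delta$: one checks that the hypotheses of Lemma \ref{lem:mult} for this particular product reduce to exactly the conditions $|j|\le k$ and \eqref{eqn:pairing-condition} imposed here. The identical estimate applies to $\Phi^{-1}=I+R'$, and since the composition of pointwise multiplications is multiplication by the product of the coefficient fields, $\Phi\circ\Phi^{-1}=\Phi^{-1}\circ\Phi=I$ when applied to elements of $C^\infty_c$; density of $C^\infty_c$ in $W^{j,q}_\delta$ then makes $\Phi$ a topological isomorphism with inverse $\Phi^{-1}$. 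In particular $f_X$ extends continuously from $C^\infty_c$ to $W^{j,q}_\delta$, with
\[
|f_X(Y)|\le\|X\|_{W^{-j,q'}_{-n-\delta}}\,\|\Phi Y\|_{W^{j,q}_\delta}\lesssim\|X\|_{W^{-j,q'}_{-n-\delta}}\,\|Y\|_{W^{j,q}_\delta},
\]
using the continuity of the metric-free distributional pairing $W^{-j,q'}_{-n-\delta}\times W^{j,q}_\delta\to\Reals$ noted before the lemma; this gives the first claim.

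For the isomorphism statement, recall (as already observed in the text, with reference to \cite{triebel_spaces_1976a}) the metric-free duality $W^{-j,q'}_{-n-\delta}\cong(W^{j,q}_\delta)^*$ realized by the Euclidean pairing $X\mapsto f^{(0)}_X$, $f^{(0)}_X(Y)=\int_{\Reals^n}\langle X,Y\rangle\, dx$; this also follows directly from the dyadic scaling characterization of Definition \ref{def:weighted-full}, which reduces it to the unweighted duality $(W^{j,q}(A_1))^*=W^{-j,q'}(A_1)$ on the model annulus combined with $\ell^q$--$\ell^{q'}$ duality over the scales. Since $f_X=f^{(0)}_X\circ\Phi$, the map $X\mapsto f_X$ is the composite of the isomorphism $X\mapsto f^{(0)}_X$ with the isomorphism $\ell\mapsto\ell\circ\Phi$ of $(W^{j,q}_\delta)^*$ induced by the isomorphism $\Phi$ of the previous step, hence is itself an isomorphism $W^{-j,q'}_{-n-\delta}\to(W^{j,q}_\delta)^*$.

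The main obstacle is the low regularity $k>n/p$ allowed by Assumption \ref{assumption:appendix}: since $g$ need not be smooth, multiplying the negative-order distribution $X$ — and, when $j<0$, also the tensor $Y$ — by metric coefficients must be justified, and it is precisely the Sobolev-index inequalities \eqref{eqn:pairing-condition} that make the relevant instance of Lemma \ref{lem:mult} applicable and ensure that $\Phi$ preserves $W^{j,q}_\delta$. The remaining ingredients — that $1+W^{k,p}_\tau$ is an algebra stable under inversion, and that $C^\infty_c$ is norm-dense in $W^{j,q}_\delta$ for every $j\in\Ints$ — are routine consequences of $k>n/p$ and the definitions, and I would treat them as such.
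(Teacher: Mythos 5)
Your proposal is correct. Note that the paper itself gives no proof of Lemma \ref{lem:dual}: it defers to Proposition 3.18 of \cite{allen_sobolevclass_2022a} (the asymptotically hyperbolic analogue) and states that the technique carries over. Your argument is precisely that standard technique: absorb $g_{ab}$, $g^{ab}$, $\sqrt{\det g}$ into a single fibrewise automorphism $\Phi=I+R$ with $R\in W^{k,p}_\tau$, check via Lemma \ref{lem:mult} that $R$ maps $W^{j,q}_\delta$ into $W^{j,q}_{\tau+\delta}\hookrightarrow W^{j,q}_\delta$ exactly under $|j|\le k$ and \eqref{eqn:pairing-condition} (your bookkeeping here is right, including the observation that $1/r_1=1/p-k/n<0$ keeps the borderline strictness clause from being triggered), and then compose with the metric-free duality $W^{-j,q'}_{-n-\delta}\cong(W^{j,q}_\delta)^*$, which the paper explicitly takes as known from \cite{triebel_spaces_1976a}. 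The only point I would tighten is your parenthetical re-derivation of that metric-free duality from Definition \ref{def:weighted-full}: the annuli $A_{2^j}$ overlap, so the space is a retract of, not equal to, an $\ell^q$-sum of annulus spaces, and one needs a partition of unity to make the reduction honest; but since you (like the paper) may simply cite the duality, this does not affect the validity of the proof.
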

\qed

Using the specific identification of 
$(W^{j,q}_\delta)^*$ with $W^{-j,q'}_{-n-\delta}$ from Lemma \ref{lem:dual}, the following lemma shows that the vector Laplacian is self-adjoint.
\begin{lemma}\label{lem:self-adjoint}
Suppose $g_{ab}$ satisfies Assumption \ref{assumption:appendix} with the parameters $k$, $p$, and $\tau$. Consider Sobolov parameters $(j,q)$ and a weight $\delta$ such that $(j,q)\in \mathcal S^{k,p}$ from Definition \ref{def:calS}.  
Then $(-j,q')\in S^{k,p}$ as well, and hence
$P$ acts continuously on $W^{-j,p'}_{2-n-\delta}$.  Moreover,
for all $X\in W^{j,q}_{\delta}$ and $Y\in W^{-j,q'}_{2-n-\delta}$,
\be\label{eq:int-by-parts}
\int_{\Reals^n} \left<P X,Y\right>_g\; dV_g = \int_{\Reals^n} \left< X,P Y\right>_g\; dV_g.
\ee
\end{lemma}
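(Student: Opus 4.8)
The plan is to reduce \eqref{eq:int-by-parts} to the case $X,Y\in C^\infty_c(\Reals^n)$, where it follows from a single boundary-term-free integration by parts exploiting the symmetric, trace-free structure of $\ck$, and then to extend to general $X,Y$ by density and continuity.

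First I would settle the preliminary assertions. The defining inequalities \eqref{eqn:lap-condition1}--\eqref{eqn:lap-condition2} of $\mathcal S^{k,p}$ are invariant under the index reflection $(j,q)\mapsto(2-j,q')$ that governs the adjoint of a second-order operator: \eqref{eqn:lap-condition1} because it is symmetric about $j=1$, and \eqref{eqn:lap-condition2} because replacing $1/q$ by $1/q'=1-1/q$ and $j$ by $2-j$ sends $1/q-j/n$ to $1-2/n-(1/q-j/n)$ and carries the two bounds into one another. Hence the reflected index again lies in $\mathcal S^{k,p}$, so the elementary mapping property already recorded for $P$ applies there and $P$ acts continuously on $W^{2-j,q'}_{2-n-\delta}$ with values in $W^{-j,q'}_{-n-\delta}$. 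By Lemma \ref{lem:dual}, $W^{2-j,q'}_{2-n-\delta}$ is exactly $(W^{j-2,q}_{\delta-2})^*$ and $W^{-j,q'}_{-n-\delta}$ is exactly $(W^{j,q}_\delta)^*$, and since membership in $\mathcal S^{k,p}$ entails the hypothesis \eqref{eqn:pairing-condition}, both sides of \eqref{eq:int-by-parts} define continuous bilinear maps $W^{j,q}_\delta\times W^{2-j,q'}_{2-n-\delta}\to\Reals$.

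Next comes the computation for $X,Y\in C^\infty_c$. Writing $(P X)_a=\nabla^b\big(\tfrac1{2N}(\ck X)_{ab}\big)$ and integrating by parts, with no boundary contribution by compact support, gives
\[
\int_{\Reals^n}\langle P X,Y\rangle_g\,dV_g = -\int_{\Reals^n}\tfrac1{2N}\,(\ck X)_{ab}\,\nabla^b Y^a\,dV_g .
\]
Since $(\ck X)_{ab}$ is symmetric and trace-free, contracting it against $\nabla^b Y^a$ agrees with contracting it against $\tfrac12(\ck Y)^{ab}$: the symmetrization produces $\tfrac12(\nabla^bY^a+\nabla^aY^b)$, and the added pure-trace term $-\tfrac1n g^{ab}\nabla_cY^c$ is annihilated. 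Thus the right-hand side equals $-\tfrac12\int_{\Reals^n}\tfrac1{2N}\langle\ck X,\ck Y\rangle_g\,dV_g$, which is manifestly symmetric under $X\leftrightarrow Y$; running the computation in reverse with the roles of $X$ and $Y$ exchanged identifies it with $\int_{\Reals^n}\langle X,P Y\rangle_g\,dV_g$, so \eqref{eq:int-by-parts} holds on $C^\infty_c$.

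Finally I would invoke density of $C^\infty_c(\Reals^n)$ in $W^{j,q}_\delta$ and in $W^{2-j,q'}_{2-n-\delta}$, a standard feature of this scale of weighted spaces over the full integer range of differentiability, and pass to the limit (first in one argument, then the other) using the joint continuity from the first step. I expect that first step to be the only genuine obstacle: one must verify that for every admissible $(j,q,\delta)$ the operator $P$ really does map the possibly negative-order weighted spaces continuously into their targets, and that the pairings invoked satisfy \eqref{eqn:pairing-condition}, so that both halves of \eqref{eq:int-by-parts} are meaningful before one argues that they agree; once the function spaces and dual pairings are aligned, the analytic content reduces to the elementary integration by parts above.
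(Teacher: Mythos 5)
Your proposal is correct and takes essentially the same route as the paper's proof: check that the reflection $(j,q)\mapsto(2-j,q')$ preserves $\mathcal S^{k,p}$, establish \eqref{eq:int-by-parts} for smooth compactly supported fields by a boundary-free integration by parts exploiting the symmetric trace-free structure of $\ck$, and then extend by density using the continuity of both pairings via Lemma \ref{lem:dual}. You merely fill in the details the paper leaves implicit, and your use of $(2-j,q')$ and $W^{2-j,q'}_{2-n-\delta}$ is the reading consistent with the continuity criterion \eqref{eqn:lap-condition1}--\eqref{eqn:lap-condition2} and with the adjoint as identified in Proposition \ref{prop:fredholm}.
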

\begin{proof}
The fact that $(-j,q')\in \mathcal S^{k,p}$ is an easy computation from the definition.  Equality \eqref{eq:int-by-parts} holds when $X$ and $Y$ are smooth and compactly supported.  The continuity of $P$ 
on $W^{j,q}_{\delta}$ and $W^{-j,q'}_{2-n-\delta}$ together with 
Lemma \ref{lem:dual} and the density of smooth compactly supported 
vector fields in these spaces establishes equality \eqref{eq:int-by-parts} generally.
\\
\end{proof}

We have the following elementary elliptic regularity estimate, a 
variation of \cite{bartnik1} Proposition 1.6.  

\begin{lemma}\label{lem:basic-reg}
    Suppose that $g_{ab}$ and $N$ satisfy Assumption \ref{assumption:appendix} with parameters $k$, $p$, and $\tau$.
    Assume $X\in W^{2-k,q'}_{\mathrm{loc}}$ so that $P X$ is
    well-defined as a distribution. 
    Suppose for some $(j,q)\in \mathcal S^{k,p}$ from Definition \ref{def:calS} and some $\delta\in\Reals$ that
    $X\in W^{j-2,q}_{\delta}$ and that $P X = Z$ for some $Z\in W^{j-2,q}_{\delta-2}$.  Then $X\in W^{j,q}_{\delta}$ and
    \be\label{eq:basic-reg}
    ||X||_{W^{j,q}_{\delta}} \lesssim ||P X||_{W^{j-2,q}_{\delta-2}} + ||X||_{W^{j-2,q}_{\delta}}.
    \ee
\end{lemma}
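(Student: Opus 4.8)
The plan is to follow Bartnik's scaling technique \cite{bartnik1}: establish a scale-invariant local elliptic estimate for $P$ on a fixed annulus, and then rescale along a dyadic exhaustion of $\Reals^n$ and sum the weighted pieces.

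I would first record the local estimate. The Euclidean operator $\overline P$ is a strongly elliptic (indeed symmetric, positive) second-order system, with principal symbol $\frac12\big(|\xi|^2\delta_{ab}+(1-\frac2n)\xi_a\xi_b\big)$ positive definite for $\xi\ne 0$, and under Assumption \ref{assumption:appendix} the operator $P=\overline P+\sum_{|\alpha|\le 2}B^\alpha_{ab}\partial_\alpha$ differs from it by lower-order terms whose coefficients lie in $W^{k-2+|\alpha|,p}_{\tau-2+|\alpha|}$ and, since $\tau<0$, are uniformly small at large radius. Classical Agmon--Douglis--Nirenberg interior estimates for elliptic systems, together with the multiplication bounds of Lemma \ref{lem:mult} that define the class $\mathcal S^{k,p}$ --- these are precisely what makes the products of the coefficients with derivatives of $X$ lie in the spaces appearing in $P:W^{j,q}_\delta\to W^{j-2,q}_{\delta-2}$ --- yield, for a slightly enlarged annulus $A_1'\supset A_1$,
\[
\|Y\|_{W^{j,q}(A_1)}\ \lesssim\ \|PY\|_{W^{j-2,q}(A_1')}+\|Y\|_{W^{j-2,q}(A_1')},
\]
for every $(j,q)\in\mathcal S^{k,p}$, with a constant controlled by the $W^{k,p}(A_1')$ norms of $g_{ab}-\delta_{ab}$ and $\delta N$. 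This estimate also yields membership, not merely the inequality: the hypotheses $Y,PY\in W^{j-2,q}$ force $Y\in W^{j,q}$ by interior regularity (proved by difference quotients or mollification, requiring no a priori extra regularity on $Y$), which is what delivers the conclusion $X\in W^{j,q}_\delta$.

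Next I would carry out the rescaling. With $S_{2^i}$ the dilations of Definition \ref{def:weighted-full}, the pullback $Y_i:=S_{2^i}^*X$ solves $P_{g_i,N_i}Y_i=4^i\,S_{2^i}^*(PX)$, where $g_i=S_{2^i}^*g$ and $N_i=S_{2^i}^*N$, the factor $4^i=(2^i)^2$ coming from the two derivatives in $P$ by naturality. Because $g-\delta\in W^{k,p}_\tau$ and $\delta N\in W^{k,p}_\tau$ with $\tau<0$, the rescaled coefficients converge to the flat ones in $W^{k,p}(A_1')$, so the local estimate applies to each $Y_i$ with a constant uniform in $i$; and the factor $4^i$ is exactly what converts the weight $\delta$ into $\delta-2$ in the norms of Definition \ref{def:weighted-full}. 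Applying the uniform local estimate to each $Y_i$, raising to the $q$-th power, inserting the weights $2^{-iq\delta}$, and summing over $i\ge 0$ --- using that the enlarged annuli $A'_{2^i}$ have bounded overlap, so that the sum of their local norms is comparable to the weighted norm on $\Reals^n\setminus B_1$ --- produces the claimed bound outside $B_1$; adding the single interior estimate on $B_1$ against $B_2$ then gives it on all of $\Reals^n$.

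The main obstacle is the scale-invariant local estimate with the correct dependence on the coefficients across the whole admissible range $(j,q)\in\mathcal S^{k,p}$, in particular when $j-2$ is negative: one must verify that for every such pair the ADN estimate for the frozen flat system perturbs correctly, i.e.\ that the terms $B^\alpha_{ab}\partial_\alpha X$ are controlled in $W^{j-2,q}$ by $\|X\|_{W^{j,q}}$ with an absorbable constant --- this is exactly where the constraints \eqref{eqn:lap-condition1}--\eqref{eqn:lap-condition2} together with Lemma \ref{lem:mult} enter --- and that these constants remain uniform under the dyadic rescaling, which is ensured by the decay $\tau<0$. Once this is in place, the summation is routine and follows \cite{bartnik1} essentially verbatim.
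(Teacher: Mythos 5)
Your proposal is correct and follows essentially the same route as the paper's proof: Bartnik's scaling technique, with local interior elliptic estimates on a fixed annulus (membership via the local regularity theory, the perturbation $P-\overline P$ controlled by the multiplication lemma and rendered absorbable at large radius by the decay $\tau<0$, which is exactly how the paper secures uniform constants), followed by dyadic rescaling, weighting by $2^{-iq\delta}$, and summation, with the finitely many interior pieces handled separately. The only cosmetic difference is that you package the absorption step as a uniform local estimate for the full operator $P_{g_i,N_i}$, whereas the paper estimates against $\overline P$ and absorbs the explicitly $r^\tau$-weighted remainder.
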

\begin{proof}
Theorem 2.21 of \cite{holst_scaling_2023} implies $X\in W^{j,q}_{\mathrm{loc}}$. To see that $X$ additionally lies in the weighted space we first
decompose
\[
P = \overline P + \sum_{|\alpha|\le 2} A^\alpha  \partial_\alpha
\]
with matrix coefficients $A^\alpha\in W^{k-2+|\alpha|,p}_{\tau-2+|\alpha|}$. Recalling the scaling operator $S_r$ of Definition \ref{def:weighted-full} we find that for any $r=2^m$ for some $m\in\Nats$
\[
S_r^*(PX) = r^{-2} \overline P S_r^*(X) + \sum_{|\alpha|\le 2 } r^{-|\alpha|} S_r^*( A^\alpha) \partial_\alpha S_r^* X.
\]
Hence
\[
\overline P S_r^*(X) = r^2 S_r^*(PX) - r^{\tau}\sum_{|\alpha|\le 2 } r^{2-|\alpha|-\tau} S_r^*( A^\alpha) \partial_\alpha S_r^* X.
\]
Theorem 2.5 of \cite{holst_scaling_2023} on Sobolev multiplication and the embedding $\ell^q\hookrightarrow \ell^\infty$ imply
\[
|| r^{2-|\alpha|-\tau} S_r^*( A^\alpha) \partial_\alpha S_r^* X ||_{W^{j-2,q}(A_1)} \lesssim ||S_r^* X||_{W^{j,q}(A_1)}.
\]
Hence elliptic estimates for $\overline P$ yield
\[
||S_r^* X||_{W^{j,q}(A_1)}  \lesssim r^2 ||S^*_r PX ||_{W^{j-2,q}(A_1)} + r^{\tau} ||S_r^* X||_{W^{j,q}(A_1)} + ||X||_{W^{j-2,q}(\hat A_1)}
\]
where $\hat A^1$ is the thickened annulus $1/3 < |x| < 3$.
Since the the implicit constant is independent of $r=2^m$ we can  
find $M$ sufficiently large so that if $m\ge M$ then
the term multiplied by $r^\tau$ can be absorbed into the left hand side to find
\be\label{eq:A1-est}
||S_r^* X||_{W^{j,q}(A_1)}  \lesssim r^2 ||S^*_r PX ||_{W^{j-2,q}(A_1)} + ||S^*_r X||_{W^{j-2,q}(\hat A_1)}.
\ee
Now
\[
||S^*_{r} X||_{W^{j-2,q}(\hat A_1)} \lesssim \sum_{i=-1}^1 
2^{-i} ||S_{2^i r}^* X||_{W^{j-2,q}(A_1)}.
\]
Hence multiplying inequality \eqref{eq:A1-est} by $2^{-\delta m}$,
raising to the $q^{\rm th}$ power, and summing we find
\[
\sum_{m\ge M} 2^{-\delta q m} ||S^*_{2^m} X||^2_{W^{j-2,q}( A_1)}
\lesssim 
\sum_{m\ge M} 2^{-(\delta - 2) q m} ||S^*_{2^m} PX ||^q_{W^{j-2,q}(A_1)}
+ 
\sum_{m\ge M-1} 2^{-\delta q m} ||S^*_{2^m} X ||^q_{W^{j-2,q}(A_1)}.
\]
Therefore
\be\label{eq:exterior-est}
\sum_{m\ge M} 2^{-\delta q m} ||S^*_{2^m} X||^2_{W^{j-2,q}(A_1)} 
\lesssim ||PX||^q_{W^{j-2,q}_{\delta-2}} + ||X||^q_{W^{j-2,q}_\delta}.
\ee
and we conclude $X\in W^{j,q}_\delta$. Estimate \eqref{eq:basic-reg}
follows from estimate \eqref{eq:exterior-est} along with 
local regularity estimates for the finitely many terms omitted from the sum.
\\
\end{proof}

Fredholm properties of the vector Laplacian follow from those of the Euclidean model operator $\overline P$, which we establish now.

\begin{proposition}\label{prop:flatcase}
    Suppose $\delta$ is non-exceptional (i.e. $\delta$ is not an integer or $2-n<\delta<0$).  Then
    \[
    \overline{P}^{j,q}_\delta : W^{j,q}_\delta \rightarrow W^{j-2,q}_{\delta-2}
    \]
    is continuous and Fredholm for all $j\in\Ints$ and each $1<q<\infty$. Moreover:
    \begin{enumerate}[(a)]
    \item $\ker \overline{P^{j,q}_\delta}$ is independent of $j$ and $q$ and consists of polynomials. Consequently it is trivial if $\delta<0$,
    \item\label{part:flat-lap-image} $\im \overline{P}^{j,q}_\delta = \left(\ker \overline{P}^{2-j,q'}_{2-n-\delta}\right)^\perp$ in the following sense: a vector $Z \in \im \overline{P}^{j,q}_\delta$ if and only if $\int_{\Reals^n} \left<Z,K\right>_g\; dV_g=0$ for all $K\in \ker \overline{P}^{2-j,q'}_{2-n-\delta}$,
    \item the Fredholm index $\iota\left(\overline{P}^{j,q}_\delta\right)$ is independent of $j$ and $q$,
    \item if $2-n<\delta<0$, then $\overline{P}^{j,q}_\delta$ is an isomorphism,
    \item for all $X\in W^{j,q}_\delta$ and each $\delta'\in\Reals$
    \begin{equation}\label{eqn:flat-semi-fred}
    ||X||_{W^{j,q}_\delta} \lesssim ||\overline{P} X||_{W^{j-2,q}_{\delta-2}} + ||X||_{W^{j-2,q}_{\delta'}}.
    \end{equation}
    \end{enumerate}
\end{proposition}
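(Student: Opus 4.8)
The plan is to reduce the proposition to the explicit algebraic form of $\overline P$ together with a Bartnik-type scaled a priori estimate, upgraded to a semi-Fredholm estimate by exploiting non-exceptionality of $\delta$. Continuity of $\overline P^{j,q}_\delta$ for every $j\in\Ints$ and $1<q<\infty$ is immediate from the mapping property recorded after Definition~\ref{def:calS}: for the flat operator one may take the background regularity index as large as desired, so the admissibility conditions on $(j,q)$ become vacuous (equivalently, apply Lemma~\ref{lem:mult} to the constant-coefficient operator). For the kernel (part~(a)) I would use the Cartesian identity $2(\overline P X)_a=\overline\Delta X_a+(1-\tfrac{2}{n})\overline\nabla_a(\overline\nabla_b X^b)$ from the proof of Proposition~\ref{prop:flat-P-threshold}. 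Taking a divergence gives $2\overline\nabla^a(\overline P X)_a=(2-\tfrac{2}{n})\overline\Delta(\overline\nabla_b X^b)$, so $\overline P X=0$ forces $\overline\nabla_b X^b$ to be a harmonic tempered distribution, hence a polynomial; substituting back, each component $X_a$ is harmonic modulo a polynomial and so is itself a polynomial. Thus $\ker\overline P^{j,q}_\delta$ consists of the polynomial vector fields annihilated by $\overline P$ whose degree is strictly less than $\delta$ --- an unambiguous cutoff precisely because $\delta$ is non-exceptional --- and is therefore finite-dimensional, manifestly independent of $(j,q)$, and trivial when $\delta<0$ since no nonzero polynomial decays.

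The analytic heart is estimate~\eqref{eqn:flat-semi-fred}. Bartnik's scaling technique --- the argument proving Lemma~\ref{lem:basic-reg}, now with the perturbation coefficients absent because the operator is $\overline P$ itself --- first yields the weaker bound $\|X\|_{W^{j,q}_\delta}\lesssim\|\overline P X\|_{W^{j-2,q}_{\delta-2}}+\|X\|_{W^{j-2,q}_\delta}$. To replace the error weight $\delta$ by an arbitrary $\delta'$ it suffices to treat $\delta'>\delta$, which I would do by rescaling and contradiction on dyadic exterior annuli: localizing a would-be violating sequence to $A_{2^m}$, rescaling to $A_1$, and passing to a limit produces a nonzero vector field on $\Reals^n\setminus\{0\}$, annihilated by $\overline P$ and behaving like $r^\delta$ at both $0$ and $\infty$; since $\delta$ is non-exceptional no such homogeneous solution exists (the admissible degrees are exactly the exceptional weights, consistent with the kernel computations for $\overline P$ in the proofs of Propositions~\ref{prop:kernel} and~\ref{prop:flat-P-threshold}), giving the contradiction. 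This is the standard mechanism for semi-Fredholm estimates for asymptotically homogeneous elliptic systems; compare \cite{bartnik1} for the scalar case. Estimate~\eqref{eqn:flat-semi-fred}, combined with the compactness of $W^{j,q}_\delta\hookrightarrow W^{j-2,q}_{\delta'}$ for $\delta'>\delta$, then shows that $\overline P^{j,q}_\delta$ has finite-dimensional kernel and closed range.

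To promote this to the full Fredholm statement I would record the Euclidean specialization of Lemma~\ref{lem:self-adjoint}: under the pairing of Lemma~\ref{lem:dual}, the Banach-space adjoint of $\overline P^{j,q}_\delta$ is $\overline P^{2-j,q'}_{2-n-\delta}$. Since $\delta\mapsto 2-n-\delta$ preserves the set of non-exceptional weights, part~(a) identifies $\ker\overline P^{2-j,q'}_{2-n-\delta}$ as a finite-dimensional polynomial space, so $\operatorname{coker}\overline P^{j,q}_\delta$ is finite-dimensional and $\overline P^{j,q}_\delta$ is Fredholm; because the range is closed, $\im\overline P^{j,q}_\delta$ is the annihilator of $\ker\overline P^{2-j,q'}_{2-n-\delta}$ in the sense of part~\eqref{part:flat-lap-image}. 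The index is then $\dim\ker\overline P^{j,q}_\delta-\dim\ker\overline P^{2-j,q'}_{2-n-\delta}$, the difference between the numbers of degree-$<\delta$ and degree-$<(2-n-\delta)$ polynomial solutions, which depends on $\delta$ alone, giving part~(c). Finally, when $2-n<\delta<0$ both $\delta$ and $2-n-\delta$ are negative, so kernel and cokernel are trivial and $\overline P^{j,q}_\delta$ is an isomorphism, which is part~(d).

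I expect the weight improvement in~\eqref{eqn:flat-semi-fred} to be the main obstacle: it is the one step where the non-exceptional hypothesis is genuinely used, and making the localization-and-rescaling argument rigorous requires care with the behavior at infinity and --- when the differentiability index $j$ is negative --- with which compact embeddings and which duality identifications from Definition~\ref{def:weighted-full} and Lemma~\ref{lem:dual} are available.
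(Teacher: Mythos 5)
Your proposal is correct in outline but inverts the logical structure of the paper's argument, so the two proofs are genuinely different routes to the same statement. The paper does not prove estimate \eqref{eqn:flat-semi-fred} first: it imports the Fredholm property and the polynomial-kernel characterization for $j\ge 2$ wholesale from Theorem 3 of \cite{lockhart_elliptic_1983}, extends to $j\le 0$ by observing (via Lemmas \ref{lem:self-adjoint} and \ref{lem:dual}) that the adjoint of $\overline P^{j,q}_\delta$ is $\overline P^{2-j,q'}_{2-n-\delta}$ and that the adjoint of a Fredholm map is Fredholm, handles the leftover marginal case $j=1$ by a separate closed-range-plus-regularity bootstrap, and only then obtains \eqref{eqn:flat-semi-fred} as an instance of the abstract fact that a semi-Fredholm map $T$ satisfies $\|x\|_X\lesssim\|Tx\|_Y+\|Qx\|_Z$ whenever $Q$ is continuous and injective on $\ker T$. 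You instead compute the kernel directly from the Cartesian divergence identity (essentially the computation the paper performs in the proof of Proposition \ref{prop:flat-P-threshold}) and propose to prove \eqref{eqn:flat-semi-fred} from scratch by Bartnik scaling plus a dyadic rescaling-and-contradiction argument, then derive Fredholmness from the estimate. Your route is self-contained and treats all $j$ uniformly, avoiding the paper's special-casing of $j=1$; what it costs is that all the difficulty is concentrated in the weight-improvement step, which you only sketch. Making that step rigorous requires classifying the $\overline P$-harmonic vector fields on $\Reals^n\setminus\{0\}$ with two-sided $O(r^\delta)$ control (a two-sided Laurent/multipole expansion, beyond the one-sided expansion of Lemma \ref{lem:P-multipolar}) and verifying that the blow-up limit is nonzero; that analysis is precisely the content hidden inside the Lockhart--McOwen citation. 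So the proposal is sound, but its hardest step is exactly the one the paper chooses to outsource rather than reprove.
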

\begin{proof}
    Theorem 3 of \cite{lockhart_elliptic_1983} implies that if $j\ge 2$ and if $\delta$ is non-exceptional, then
    $\overline{P}^{j,q}_\delta$ is Fredholm.  Lemma \ref{lem:self-adjoint} shows that 
    $\left(\overline{P}^{j,q}_\delta\right)^* = \overline{P}^{2-j,q'}_{2-n-\delta}$. 
    Noting that $2-n-\delta$ is non-exceptional if and only if $\delta$ is,
    we conclude that $\overline{P}^{j,q}_{\delta}$ is Fredholm for $j\le 0$ as well
    since the adjoint of a Fredholm map is Fredholm. 

    Theorem 3 of \cite{lockhart_elliptic_1983} also implies that $\ker \overline{P}^{j,q}_\delta$
    consists of polynomials if $j\ge 2$. 
    Lemma \ref{lem:basic-reg} implies this same fact is true for fixed $q$ but arbitrary $j$.  Moreover, A polynomial is in $W^{j,q}_{\delta}$ if and only if its order is less than $\delta$ and hence the kernel depends on $\delta$ but is independent of $j$ and $q$.
    We denote this common kernel by $\ker \overline{P}_\delta$.

    To handle the marginal case $j=1$ we first observe from Lemma \ref{lem:basic-reg}
    that $\ker \overline{P}^{j,q}_\delta$ is independent of $j$ and $q$.
    Since 
    $\ker \overline{P}^{1,q}_\delta = \ker \overline{P}_\delta$ it
    is finite dimensional and
    elementary arguments imply $\im \overline{P}^{1,q}_\delta \subset W^{-1,q}_{\delta-2}\cap (\ker \overline{P}_{2-n-\delta})^\perp $. If we show 
    the reverse containment we have established that $\overline{P}^{1,q}_\delta$ has closed range and finite dimensional cokernel
    and hence $\overline{P}^{1,q}_{\delta}$ is Fredholm.
    To establish the reverse containment, 
    suppose $Z \in W^{-1,q}_{\delta-2} \cap (\ker \overline{P}^{-1,q'}_{-n-\delta})^\perp$. From our established results for $\overline{P}^{-2,q}_{\delta}$ there exists $X\in W^{0,q}_{\delta}$ with $\overline{P}X = Z$ and Lemma \ref{lem:basic-reg} implies $X\in W^{1,q}_\delta$ as needed.

    Since the image of 
    $\overline{P}^{j,q}_\delta$ is characterized in terms of the kernel of the adjoint for a Fredholm map (this is item \eqref{part:flat-lap-image}),
    we see that the dimension of the cokernel is independent of $j$ and $q$, 
    and hence so is the index.

    The isomorphism range $2-n<\delta<0$ follows from the fact that 
    for these values of $\delta$ we have $\delta<0$ and $2-n-\delta<0$,
    so the kernels of both $\overline P^{j,q}_\delta$ and its
    adjoint $\overline{P}^{2-j,q'}_{2-n-\delta}$ are trivial. 

    Finally, inequality \eqref{eqn:flat-semi-fred} is a special case of
    a more general fact about Fredholm operators.  If $T:X\rightarrow Y$
    is a semi-Fredholm map between Banach spaces, and if $Q: X\to Z$  is a 
    continuous map that is injective on $\ker T$, then for all $x\in X$,
    \[
    ||x||_X \lesssim ||Tx||_Y + ||Q x||_Z.
    \]
\end{proof}

The next two results concern the asymptotics 
of solutions of $\overline PX = Z$.  The first shows that if $Z$ has $O(r^{\delta-2})$ decay then $\overline PX = Z$ is solvable, up to an error on a compact set, with $X$ having $O(r^{\delta})$ decay. The second
concerns the specific asymptotic structure of a solution of $PX = Z$ when $Z$ has compact support.

\begin{lemma}\label{lem:P-ext-exact}
Suppose $k\in\Ints$, $1<p<\infty$ and that $\delta$ is non-exceptional.
If $Z_a\in W^{k-2,p}_{\delta-2}$, then for any fixed $R>0$ 
there exists $X^a\in W^{k,p}_{\delta}$ such that 
$(\overline P X)_a = Z_a$ on $E_R$.
\end{lemma}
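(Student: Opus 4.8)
The plan is to reduce the exterior solvability problem to the global Fredholm theory of the Euclidean vector Laplacian in Proposition \ref{prop:flatcase} by correcting $Z_a$ on a compact set. Since $\delta$ is non-exceptional, that proposition guarantees that $\overline P : W^{k,p}_\delta \to W^{k-2,p}_{\delta-2}$ is Fredholm, with image equal to the annihilator of the finite-dimensional cokernel $\mathcal K := \ker \overline P^{2-k,p'}_{2-n-\delta}$, which by the same proposition consists of polynomials (necessarily of order $<2-n-\delta$). If $\mathcal K=\{0\}$ — which occurs precisely when $\delta\ge 2-n$ — then $\overline P$ is surjective and one may solve $(\overline P X)_a = Z_a$ on all of $\Reals^n$, so we may assume $\mathcal K$ is nontrivial with basis $Q^{(1)}_a,\dots,Q^{(m)}_a$.

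The key point is that a nonzero polynomial cannot vanish on a nonempty open set, so the restrictions $Q^{(i)}_a$ to $B_R$ remain linearly independent; equivalently, the bilinear pairing $(\psi,Q)\mapsto \int_{\Reals^n}\langle\psi,Q\rangle\,dx$ on $C^\infty_c(B_R)\times\mathcal K$ is nondegenerate in its second argument. Hence, by finite-dimensionality of $\mathcal K$, I would choose smooth vector fields $\psi^{(1)}_a,\dots,\psi^{(m)}_a$ supported in $B_R$ with $\int_{\Reals^n}\langle\psi^{(j)},Q^{(i)}\rangle\,dx=\delta_{ij}$. The obstruction integrals $b_i:=\int_{\Reals^n}\langle Z,Q^{(i)}\rangle\,dx$ are well-defined by the same considerations underlying the cokernel characterization in Proposition \ref{prop:flatcase}(\ref{part:flat-lap-image}), and I would then set
\[
\zeta_a:=\sum_{j=1}^m b_j\,\psi^{(j)}_a\in C^\infty_c(B_R),
\]
so that $\int_{\Reals^n}\langle Z-\zeta,Q^{(i)}\rangle\,dx=b_i-\sum_j b_j\delta_{ij}=0$ for every $i$.

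Now $Z_a-\zeta_a\in W^{k-2,p}_{\delta-2}$ annihilates all of $\mathcal K$, so Proposition \ref{prop:flatcase}(\ref{part:flat-lap-image}) yields $X^a\in W^{k,p}_\delta$ with $(\overline P X)_a=Z_a-\zeta_a$ on $\Reals^n$. Since $\zeta_a$ is supported in $B_R$ it vanishes on $E_R$, so $(\overline P X)_a=Z_a$ there, which is the assertion. I do not anticipate a serious obstacle: the only steps needing a little care are the construction of the dual family $\psi^{(j)}_a$ inside the prescribed ball — which rests only on the elementary fact that polynomials are determined by their restriction to an open set — and checking that the obstruction integrals $b_i$ make sense, which is immediate. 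Everything else is a direct appeal to Proposition \ref{prop:flatcase}.
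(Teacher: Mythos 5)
Your argument is correct and is essentially the paper's proof: both reduce to the global Fredholm alternative for $\overline P$ from Proposition \ref{prop:flatcase}, correct $Z_a$ by a smooth field supported in $B_R$ so that the result annihilates the cokernel $\ker \overline P^{2-k,p'}_{2-n-\delta}$, and justify that such a correction exists inside the prescribed ball via the fact that a nonzero polynomial cannot vanish on an open set. Your explicit dual family $\psi^{(j)}_a$ is just a repackaging of the paper's surjectivity argument for the pairing map $T:C^\infty_c(B_R)\to\Reals^m$, so there is no substantive difference.
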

\begin{proof}
    We need only consider the case $\delta<2-n$, since $\overline P$ is 
    surjective otherwise.  
    
    Recall that the adjoint of $\overline{P}^{k,p}_{\delta}$ is 
    $\overline P^{2-k, p'}_{2-n-\delta}$.
    Letting $\{H^a_{(j)}\}_{j=1}^m$ be a basis
    for $\ker \overline P^{2-k, p'}_{2-n-\delta}$, Proposition \ref{prop:flatcase} implies that 
    the components of each $H^a_{(j)}$ are polynomials.
    
    Define $T:C^\infty_c(B_R)\to \Reals^m$ via $T(X)_{j} = \int_{B_R} H_{(j)}^a X_a \; dV$.
    This map is surjective, for otherwise there would exist a nonzero $\beta\in \Reals^m$ with $\beta\cdot T(X)=0$
    for all compactly supported smooth vector fields on $B_R$.  As a consequence $\beta^j H^a_{(j)}$
    vanishes on $B_R$. But since the elements of the kernel are polynomials,
    we find that $\beta^j H^a_{(j)}=0$ on $\Reals^n$, contradicting the linear independence of the basis elements.

    Now fix $Z_a \in W^{k-2,p}_{\delta-2}$. From the surjectivity of $T$ we can find $Y_a\in C^\infty_c(B_R)$ with $T(Y)_{j} = \int_{\Reals^n} H^a_{(j)} Z_a \; d\overline{V}$
    for each $j$.  Then $Z_a - Y_a$ is $L^2$ orthogonal to the kernel of 
    the adjoint of $\overline P^{k,p}_\delta$ and Proposition \ref{prop:flatcase}
    implies that there exists $X^a\in W^{k,p}_\delta$ solving $(\overline PX)_a = Z_a - Y_a$.
    Since $Y_a=0$ outside $B_R$, the proof is complete.
    \\
\end{proof}

% The momentum carrier vector fields $W^a_{(j)}$ of Proposition \ref{prop:Wdual} are constructed to have compactly supported image under the action of $P_{g,N}$. This result, Lemma \ref{lem:P-ext-exact}, allows one to construct a vector field agreeing with $W^a_{(j)}$ at leading order and having compactly supported image under $\overline P$; this is done in the proof of Proposition \ref{prop:Wdual}. We now establish that such vector fields are closely related to the Green's function $G^{ab}$ of equation \eqref{eqn:greens}, a necessary ingredient for explicitly characterizing the leading order behavior of the second fundamental form $\widetilde K_{ab}$ of our initial data sets, as achieved in Theorems \ref{thm:summary_ck} and \ref{thm:summary_a}:

We now establish multipole expansions for solutions of $\overline P X = Z$ when $Z$ has compact support. The monopole term in this expansion 
is used to construct the momentum carrier vector fields $W^a_{(j)}$ of Proposition \ref{prop:Wdual}.

\begin{lemma}\label{lem:P-multipolar}
Suppose $k\in\Ints$, $1<p<\infty$, and that $\delta < 0$ is non-exceptional. 
Let $X^a\in W^{k,p}_{\delta}$ be a vector field such that 
$(\overline P X)_a$ is supported on $B_R$ for some $R>0$ and fix $\ell\in\Nats$.  There exist constants $c_b^\alpha$ (depending on indices $1\le b\le n$ and multiindices $\alpha$ with $|\alpha|\le \ell$) along with a vector field $Y^a\in W^{k,p}_{(2-n-\ell)^+}$ such that on $E_{2R}$
\begin{equation}\label{eq:veclap-decomp}
X^a = \sum_{|\alpha|\le \ell} c_b^\alpha \partial_\alpha G^{ab} + Y^a.
\end{equation}
Here, $G^{ab}$ is the Green's function from equation \eqref{eqn:greens}.
\end{lemma}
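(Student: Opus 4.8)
The plan is to derive the multipole expansion from the known Green's function identity for $\overline P$ together with the decay estimates for the operator. First I would observe that since $\overline P X$ is supported on $B_R$, we may write $Z_a := (\overline P X)_a \in W^{k-2,p}_{\delta-2}$ and it has compact support, so $Z_a \in W^{k-2,p}_\mu$ for \emph{every} $\mu$. In particular the convolution with the Green's function makes sense and yields a candidate solution. Explicitly, set $X_0^a(x) := \int_{\Reals^n} G^{ab}(x-y) Z_b(y)\,dy$, which solves $(\overline P X_0)^a = Z^a$ on all of $\Reals^n$. The difference $X^a - X_0^a$ is then $\overline P$-harmonic on $E_R$ (since both solve $\overline P(\cdot) = Z$ there, with $Z$ supported inside $B_R$), so by the classification of $\overline P$-harmonic vector fields on exterior domains — which follows from Proposition \ref{prop:flatcase}(a) applied on $E_R$, exactly as the scalar multipole expansion follows from Bartnik's theory — we can match $X - X_0$ against exterior harmonics. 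But harmonic vector fields for $\overline P$ that vanish at infinity are precisely (linear combinations of derivatives of) the columns of $G$; this is the vector analogue of the fact that exterior solutions of $\overline\Delta h = 0$ decaying at infinity are linear combinations of $\partial_\alpha(r^{2-n})$.

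Second, to get the finite expansion to order $\ell$ with a controlled remainder, I would Taylor-expand the Green's function kernel: for $|x| > 2R$ and $|y| < R$,
\[
G^{ab}(x-y) = \sum_{|\alpha|\le \ell} \frac{(-1)^{|\alpha|}}{\alpha!} (\partial_\alpha G^{ab})(x)\, y^\alpha + E^{ab}_\ell(x,y),
\]
where the Taylor remainder $E^{ab}_\ell(x,y)$ is bounded, together with its $x$-derivatives up to order $k$, by $C_\ell |y|^{\ell+1} |x|^{2-n-\ell-1}$ on this region. Integrating against $Z_b(y)\,dy$ (which is compactly supported in $B_R$) gives
\[
X_0^a(x) = \sum_{|\alpha|\le \ell} c_b^\alpha\, (\partial_\alpha G^{ab})(x) + R_\ell^a(x), \qquad c_b^\alpha := \frac{(-1)^{|\alpha|}}{\alpha!}\int_{\Reals^n} y^\alpha Z_b(y)\,dy,
\]
with $R_\ell^a \in W^{k,p}_{(2-n-\ell-1)^+} \subset W^{k,p}_{(2-n-\ell)^+}$ on $E_{2R}$ by the pointwise decay estimate on $E^{ab}_\ell$ and its derivatives (here one uses $k - 2 - \ell \ge $ the needed smoothness of $Z$ is not an issue since $Z$ is genuinely in $W^{k-2,p}$, but the remainder only needs $k$ derivatives of $G$, which $G$ has away from the origin). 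Combining this with the exterior-harmonic matching from the first paragraph: $X^a = X_0^a + (X^a - X_0^a)$, and since both pieces on $E_{2R}$ are, modulo $W^{k,p}_{(2-n-\ell)^+}$, finite linear combinations of $\partial_\alpha G^{ab}$, we obtain \eqref{eq:veclap-decomp} with $Y^a \in W^{k,p}_{(2-n-\ell)^+}$ and with (possibly adjusted) constants $c_b^\alpha$, absorbing the lower-order exterior-harmonic corrections into the $c_b^\alpha$ for $|\alpha| \le \ell$ and the faster-decaying ones into $Y^a$.

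The cleanest way to package the exterior-harmonic step is to invoke Lemma \ref{lem:P-ext-exact} and Proposition \ref{prop:flatcase} iteratively: given $X^a \in W^{k,p}_\delta$ with $\overline P X$ compactly supported, one peels off successive multipole terms $\partial_\alpha G^{ab}$ — each of which lies in the kernel of $\overline P$ on $E_{2R}$ — improving the decay of the remainder by one power at each stage, until reaching weight $(2-n-\ell)^+$; the obstruction to improving decay past a given non-exceptional weight is exactly the span of homogeneous harmonic vector fields of that degree, which are spanned by the relevant $\partial_\alpha G^{ab}$. The main obstacle — or at least the step demanding the most care — is establishing that the \emph{only} exterior $\overline P$-harmonic vector fields with prescribed polynomial-order growth rates are spanned by derivatives of the Green's function columns (equivalently, that there are no ``extra'' homogeneous solutions beyond those visible in $G$). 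This is the vector-Laplacian analogue of the scalar statement that homogeneous harmonic polynomials of each degree exhaust the exceptional-weight kernel jumps; it can be extracted from the index computation in Proposition \ref{prop:flatcase}(c)–(d) by noting that the index of $\overline P^{k,p}_\delta$ jumps by exactly the dimension of the space of homogeneous-degree-$(2-n-\ell)$ solutions as $\delta$ crosses the exceptional value $2-n-\ell$ from below (there being no corresponding growing-solution contribution since $\delta < 0$), and that dimension matches the number of independent $\partial_\alpha G^{ab}$ with $|\alpha| = \ell$ — a count one verifies directly from the explicit form \eqref{eqn:greens}. Alternatively, and perhaps more simply for the write-up, one sidesteps this entirely by working only with the explicit convolution $X_0^a$ and its Taylor expansion, using Proposition \ref{prop:flatcase}'s uniqueness (trivial kernel for $\delta<0$ after the leading terms are removed) to conclude $X^a - X_0^a$ lies in the span of those $\partial_\alpha G^{ab}$ that individually decay slower than $W^{k,p}_{(2-n-\ell)^+}$.
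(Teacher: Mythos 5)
Your core mechanism---convolving $Z_a=(\overline P X)_a$ with the Green's function, Taylor-expanding $G^{ab}(x-y)$ in $y$ over the compact support of $Z$ to extract the terms $c_b^\alpha\,\partial_\alpha G^{ab}$, and bounding the Taylor remainder and its $x$-derivatives by $|x|^{2-n-\ell-1}|y|^{\ell+1}$ on $|x|>2R$---is exactly the paper's argument, including the formula for the constants $c_b^\alpha$. Where you diverge is in identifying $X^a$ with the convolution $X_0^a$. Your primary route treats $X-X_0$ as $\overline P$-harmonic only on the exterior region and then proposes to classify decaying exterior $\overline P$-harmonic vector fields via index jumps at exceptional weights; you correctly flag this as the step demanding the most care, but it is unnecessary. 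Both $X$ and $X_0$ solve $\overline P(\cdot)=Z$ on \emph{all} of $\Reals^n$, not merely on $E_R$, so $\overline P(X-X_0)=0$ globally; since $X\in W^{k,p}_\delta$ with $\delta<0$ and $X_0\in W^{k,p}_{(2-n)^+}$ (which follows from the expansion you already established), the difference lies in a negative-weight space, and the trivial kernel of $\overline P$ for $\delta<0$ from Proposition \ref{prop:flatcase} gives $X=X_0$ outright. This is what the paper does. Your closing ``alternative'' nearly lands on this, but the conclusion you state there---that $X-X_0$ lies in the span of the slowly decaying $\partial_\alpha G^{ab}$---is still the exterior-domain conclusion; the global uniqueness gives $X-X_0=0$ exactly, and no exterior-harmonic classification or matching is needed anywhere in the proof.
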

\begin{proof}
Let $Z_a = (\overline P X)_a$ and let 
\[
U^a(x) = \int G^{ab}(x-y) Z_b(y) \; d\overline{V}(y)
\]
with the convolution meant in the sense of distributions 
if $k<2$. Then $\overline P U = Z$ and the proof proceeds by showing first that $U^a$ admits the decomposition \eqref{eq:veclap-decomp},
and then that $U^a=X^a$ to complete the proof.

For each multiindex $\alpha$ with $|\alpha|\le \ell$, define $c_b^\alpha := \int \frac{(-y)^\alpha}{\alpha !} Z_b(y)\; d\overline V(y)$; this quantity is well defined since $Z_b$ has compact support. Note that if $\alpha =(\alpha_1,\ldots,\alpha_n)$ then $z^\alpha = z_1^{\alpha_1}\cdots z_n^{\alpha_n}$ and $\alpha ! = \alpha_1!\cdots \alpha_n!$. We then write
\[
    \int G^{ab}(x-y) Z_b(y) \; d\overline{V}(y) =
    \sum_{|\alpha| \le \ell }  c_b^\alpha \partial_\alpha 
    G^{ab}(x) + W^a
\]
with
\[
W^a(x) = \int_{\Reals^n} H^{ab}(x,y) F_b(y)\; d\overline{V}(y),
\]
and with 
\[
H^{ab}(x,y)=G^{ab}(x-y) - \sum_{|\alpha|\le \ell} \frac{(-y)^\alpha}{\alpha !} \partial_\alpha G(x-y).
\]

On the region $|x|\ge 2R$, we find that $W^a$ is smooth
since it is the convolution of a smooth function with $Z_a$.  
Since $G^{ab}$ is smooth and homogeneous of degree ${2-n}$ on 
$\Reals^n\setminus \{0\}$,
if $|y|\le R$ and if $|x|\ge 2R$ then
\[
|H^{ab}(x,y)| \lesssim |x|^{2-n-\ell}|y|^{\ell}.
\]
For each multi-index $\alpha$, the derivative $\partial^\alpha G^{ab}$
is homogeneous of order $2-n-|\alpha|$, and the same argument
then shows that
\[
|\partial^{\alpha}_x(H^{ab}(x,y))| \lesssim |x|^{2-n-|\alpha|-\ell}|y|^\ell.
\]
Hence on the region $|x|\ge 2R$ we have uniform estimates
$|\partial^\alpha W^a|\lesssim |x|^{2-n-|\alpha| -\ell}$.

Let $\chi$ be a cutoff function equal to $1$ on $B_R$ and vanishing on $E_{2R}$. 
We then have
\[
U^a =  (1-\chi)\left[ \sum_{|\alpha|\le \ell} c^\alpha_b \partial_\alpha G^{ab}\right] + \underbrace{\chi U^a + (1-\chi) W^a}_{=:Y^a}.
\]
The uniform decay estimates on the derivatives of $W^b$ and the compact
support of $\chi Z^a$ show that $Y^a$ belongs to
$W^{k,p}_{(2-n-\ell)^+}$. Hence $U^a$ admits the 
decomposition \eqref{eq:veclap-decomp}, and that same decomposition 
implies $U^a\in W^{k,p}_{(2-n)^+}$. Since $\delta<0$ and since
$(\overline PX)_a = (\overline P Z)_a$, we conclude $X^a = Z^a$
to complete the proof.
\\
\end{proof}

We now return to the general case of the vector Laplacian $P = P_{g,N}$. A standard perturbation technique (compare with \cite{bartnik1} Theorem 1.10) provides the following coercivity estimate, generalizing equation (\ref{eqn:flat-semi-fred}):

\begin{proposition}\label{prop:semifred}
    Suppose that $g_{ab}$ and $N$ satisfy Assumption \ref{assumption:appendix} with parameters $k$, $p$, and $\tau$.
    Suppose $(j,q)\in \mathcal{S}^{k,p}$, and that $\delta$
    is non-exceptional.  Then for all $X\in W^{j,q}_\delta$ and any $\delta'\in\Reals$,
    \begin{equation}\label{eqn:semifred}
        ||X||_{W^{j,q}_\delta} \lesssim ||P X||_{W^{j-2,q}_{\delta-2}} + ||X||_{W^{j-2,q}_{\delta'}}.
    \end{equation}
\end{proposition}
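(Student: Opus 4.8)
The plan is to transfer the coercivity estimate \eqref{eqn:flat-semi-fred} for the Euclidean operator $\overline P$ to $P = P_{g,N}$ by a standard perturbation-plus-localization argument, exploiting $\tau<0$. Following equation \eqref{eqn:pdiff} and the proof of Lemma \ref{lem:basic-reg}, I would write $P = \overline P + E$, where $(E X)_a = \sum_{|\alpha|\le 2} B^\alpha_{ab}\,\partial_\alpha X^b$ with coefficients $B^\alpha \in W^{k-2+|\alpha|,p}_{\tau-2+|\alpha|}$. Running the same application of Lemma \ref{lem:mult} that gives continuity of $P : W^{j,q}_\delta \to W^{j-2,q}_{\delta-2}$ for $(j,q)\in\mathcal S^{k,p}$, but tracking the extra decay of the $B^\alpha$, one finds that $E : W^{j,q}_\delta \to W^{j-2,q}_{\delta+\tau-2}$ is also continuous; since $\tau<0$ this lands in a strictly faster-decaying space, which will be the source of all the smallness needed at infinity.

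First I would apply \eqref{eqn:flat-semi-fred} with the prescribed weight $\delta'$ to get $\|X\|_{W^{j,q}_\delta} \le C_1\big( \|\overline P X\|_{W^{j-2,q}_{\delta-2}} + \|X\|_{W^{j-2,q}_{\delta'}} \big)$, and then bound $\|\overline P X\|_{W^{j-2,q}_{\delta-2}} \le \|P X\|_{W^{j-2,q}_{\delta-2}} + \|E X\|_{W^{j-2,q}_{\delta-2}}$. The term $\|E X\|_{W^{j-2,q}_{\delta-2}}$ is then split with a cutoff $\chi_R(x) = \chi(x/R)$ that is $1$ on $B_R$ and supported in $B_{2R}$, as $EX = \chi_R EX + (1-\chi_R)EX$. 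For the exterior piece, multiplication by $1-\chi_R$ is bounded on $W^{j-2,q}$ uniformly in $R\ge 1$, and the weighted embedding $W^{j-2,q}_{\delta+\tau-2}(E_{R/2}) \hookrightarrow W^{j-2,q}_{\delta-2}(E_{R/2})$ has norm $O(R^\tau)$ (immediate from Definition \ref{def:weighted-full} and $\tau<0$), so
\[
\|(1-\chi_R) E X\|_{W^{j-2,q}_{\delta-2}} \lesssim R^\tau \|E X\|_{W^{j-2,q}_{\delta+\tau-2}} \lesssim R^\tau \|X\|_{W^{j,q}_\delta}.
\]
Choosing $R$ large enough that $C_1$ times the implied constant times $R^\tau$ is at most $\tfrac12$, this term is absorbed into the left-hand side.

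It then remains to handle the interior piece $\chi_R E X$, with $R$ now fixed. Since $\chi_R E X$ is supported in $B_{2R}$ and $k>n/p$, the coefficients $B^\alpha$ are bounded on $B_{2R}$ with enough local regularity for Lemma \ref{lem:mult} (equivalently, local Sobolev multiplication) to give $\|\chi_R E X\|_{W^{j-2,q}_{\delta-2}} \lesssim_R \|E X\|_{W^{j-2,q}(B_{2R})} \lesssim_R \|X\|_{W^{j,q}(B_{2R})}$. The interior elliptic regularity estimate for $P$ on $B_{3R}$ — the same local estimate used in the proof of Lemma \ref{lem:basic-reg} via Theorem 2.21 of \cite{holst_scaling_2023} — then gives $\|X\|_{W^{j,q}(B_{2R})} \lesssim_R \|P X\|_{W^{j-2,q}(B_{3R})} + \|X\|_{W^{j-2,q}(B_{3R})}$, and on the bounded set $B_{3R}$ these unweighted norms are dominated by $\|P X\|_{W^{j-2,q}_{\delta-2}}$ and $\|X\|_{W^{j-2,q}_{\delta'}}$ respectively. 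Collecting the three contributions yields \eqref{eqn:semifred}. I expect the only substantive ingredient to be this interior elliptic regularity estimate for the variable-coefficient operator $P$ in the full (possibly negative-order) Sobolev scale $W^{j,q} \leftarrow W^{j-2,q}$; since it is already available from \cite{holst_scaling_2023} and was invoked for Lemma \ref{lem:basic-reg}, the main ``obstacle'' is really just the cutoff bookkeeping and verifying that the perturbation $E$ is genuinely small on exterior regions — a routine adaptation of the perturbation technique behind \cite{bartnik1} Theorem 1.10.
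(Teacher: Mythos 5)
Your proposal is correct, and it follows the same broad perturbation-off-$\overline P$ philosophy as the paper, but the decomposition is genuinely different. The paper cuts the \emph{vector field}, writing $X = \chi_R X + (1-\chi_R)X$, applies the flat estimate \eqref{eqn:flat-semi-fred} only to the exterior piece $X_\infty$, absorbs $(P-\overline P)X_\infty$ using the decay of the coefficients at infinity (Lemma \ref{lem:vanish} plus operator-norm convergence, rather than your explicit $O(R^\tau)$ bound, though the two are interchangeable), and then must control the commutator $[P,\chi_R]X$ --- which is only \emph{first} order and compactly supported, so Sobolev interpolation $\|X\|_{W^{j-1,q}(B_R)} \le \epsilon\|X\|_{W^{j,q}(B_R)} + C_\epsilon\|X\|_{W^{j-2,q}(B_R)}$ lets the top-order part be absorbed into the global norm. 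You instead apply the flat estimate globally and cut the \emph{error} $EX = (P-\overline P)X$; the price is that the interior piece $\chi_R EX$ is genuinely second order in $X$, so you cannot reach $\|X\|_{W^{j,q}(B_{2R})}$ by interpolation alone and must invoke the full interior elliptic estimate for the variable-coefficient operator $P$ on a ball, in the scale $W^{j,q}\leftarrow W^{j-2,q}$ with possibly negative $j$ and Sobolev-class coefficients. That estimate is available (it is the quantitative companion of the local regularity from Theorem 2.21 of \cite{holst_scaling_2023} that the paper already uses in Lemma \ref{lem:basic-reg}), so your argument closes; it is simply a heavier local ingredient than the paper's interpolation of a first-order commutator. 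Your remaining steps --- the $O(R^\tau)$ gain on $E_R$ from $E: W^{j,q}_\delta \to W^{j-2,q}_{\delta+\tau-2}$, the uniform boundedness of multiplication by $1-\chi_R$, and the equivalence of all weights on bounded sets --- are all sound.
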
    
\begin{proof}
Let $X\in W^{j,q}_\delta$. 
Lemma \ref{lem:basic-reg} implies 
\be \label{eqn:regularity-comp}
\|X\|_{W^{j,q}_\delta} \lesssim \| P X \|_{W^{j-2,q}_{\delta-2}} + \|X\|_{W^{j-2,q}_\delta}
\ee
and we wish to improve the final term from $\|X\|_{W^{j-2,q}_\delta}$ to $\|X\|_{W^{j-2,q}_{\delta'}}$.

Let $\chi(x)$ be a decreasing smooth function vanishing for $|x| > 2$ and satisfying $\chi(x) = 1$ for $|x| < 1$, and set $\chi_R(x) := \chi(x/R)$ with $R > 0$ sufficiently large chosen below. Decompose $ X = X_0 + X_\infty$, with $X_0 := \chi_R X$ and $X_\infty := (1- \chi_R) X$. Since $X_0$ is supported in a ball
\begin{align}
\|X\|_{W^{j-2,q}_\delta} & \leq \|X_0\|_{W^{j-2,q}_\delta} + \|X_\infty\|_{W^{j-2,q}_\delta}  \notag
\\ & \lesssim \|X\|_{W^{j-2,q}_{\delta'}} + \| X_\infty\|_{W^{j,q}_\delta} \label{eqn:vector-split}
\end{align}
where we have also used a trivial estimate for $X_\infty$.
Inequality \eqref{eqn:flat-semi-fred} of Proposition \ref{prop:flatcase} implies 
\begin{align}
\|X_\infty \|_{W^{j,q,}_\delta} &\lesssim \|\overline P X_\infty \|_{W^{j-2,q}_{\delta-2}} +  
\| X_\infty \|_{W^{j-2,q}_{\delta'}}\nonumber\\
&\leq \| P X_\infty \|_{W^{j-2,q}_{\delta-2}} + \|(P - \overline P) X_\infty \|_{W^{j-2,q}_{\delta-2}}
+  \| X_\infty \|_{W^{j-2,q}_{\delta'}}.\label{eq:X-inf-bound1}
\end{align}
Observe $(P - \overline P) X_\infty = \chi_{R/2} (P - \overline P) X_\infty$.  We can write
\[
(P - \overline P) = \sum_{|\alpha|\le 2} A^\alpha \partial_\alpha
\]
with matrix coefficients $A^\alpha \in W^{k-2+|\alpha|,p}_{\tau-2+|\alpha|}$ and Lemma \ref{lem:vanish} proved below implies $\chi_{R/2} A^\alpha\to 0$
in $W^{k-2+|\alpha|,p}_{\tau-2+|\alpha|}$ as $R\to\infty$.  Lemma
\ref{lem:mult} then implies $\chi_{R/2} (P - \overline P)$ converges
to zero in operator norm on $W^{j,q}_\delta$. Hence we can take $R$ sufficiently large to absorb the term $\|(P - \overline P) X_\infty \|_{W^{j-2,q}_{\delta-2}}$ into the left-hand inequality \eqref{eq:X-inf-bound1} to find
\[
\|X_\infty \|_{W^{j,q,}_\delta} \lesssim 
\| P X_\infty \|_{W^{j-2,q}_{\delta-2}} + \| X \|_{W^{j-2,q}_{\delta'}}.
\]
Now decompose $P X_\infty = \chi_R P X + [P, \chi_R] X$ and note that the coefficients of $[P, \chi_R]$ are supported on $B_R$. A computation using the bounded domain analog of Lemma \ref{lem:mult} (see \cite{holst_scaling_2023} Theorem 2.5) shows $[P, \chi_R]$ is continuous as a map $W^{j,q}(B_R)\to W^{j-1,q}(B_R)$. 
This fact and the estimate $\| \chi_R P X \|_{W^{j-2,q}_{\delta-2}}\lesssim \| P X \|_{W^{j-2,q}_{\delta-2}}$ imply
\be\label{eq:X-inf-est}
\|X_\infty \|_{W^{j,q,}_\delta} 
\lesssim 
\| P X \|_{W^{j-2,q}_{\delta-2}} + \| X \|_{W^{j-1,q}(B_R)} + \| X \|_{W^{j-2,q}_{\delta'}}.
\ee

For each $\epsilon > 0$, Sobolev interpolation implies that there is a $C_\epsilon > 0$ such that
\be \label{eqn:interpolation}
\|X\|_{W^{j-1,q}(B_R)} \leq \epsilon \|X\|_{W^{j,q}(B_R)} + C_\epsilon \|X\|_{W^{j-2,q}(B_R)} \lesssim \epsilon \|X\|_{W^{j,q}_\delta} + C_\epsilon \|X\|_{W^{j-2,q}_{\delta'}}
\ee
where the implicit constant is independent of $\epsilon$. 
Combining inequalities \eqref{eqn:regularity-comp}-\eqref{eqn:interpolation} and choosing $\epsilon$ sufficiently small
so as to absorb the term $\epsilon \|X\|_{W^{j,q}_\delta}$ into
the left-hand side we obtain the desired inequality:
\[
\|X\|_{W^{j,q}_\delta} \lesssim \|P X\|_{W^{j-2,q}_{\delta-2}} + \|X\|_{W^{j-2,q}_{\delta'}}.
\]
\end{proof}

It remains to prove the following technical lemma used in the proof of Proposition \ref{prop:semifred}.
\begin{lemma}\label{lem:vanish}
Let $\chi$ be an increasing smooth function that equals zero on $B_{1}$ and equals $1$ on $E_2$, and define $\chi_R(x)=\chi(x/R)$. Given $u\in W^{j,q}_\delta$,
\[
\lim_{R\to\infty}||\chi_R u||_{W^{j,q}_\delta} = 0.
\]
\end{lemma}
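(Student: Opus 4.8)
The plan is to reduce the claim to the absolute continuity of a convergent series, working with the scaling-based norm of Definition~\ref{def:weighted-full}, which is available for every $j\in\Ints$ and coincides with the classical weighted norm when $j\ge 0$. Set $\phi_{m,R}(x):=\chi(2^m x/R)$, so that $S_{2^m}^*(\chi_R u)=\phi_{m,R}\,(S_{2^m}^* u)$. Since $\chi\equiv 0$ on $B_1$ we have $\chi_R\equiv 0$ on $B_R\supseteq B_1$ for $R\ge 1$, so the $W^{j,q}(B_1)$-part of $\|\chi_R u\|_{W^{j,q}_\delta}$ vanishes; thus it suffices to show $\sum_{m\ge 0}2^{-mq\delta}\|\phi_{m,R}\,S_{2^m}^* u\|_{W^{j,q}(A_1)}^q\to 0$ as $R\to\infty$.

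The first step is a uniform multiplier estimate: there is a constant $C=C(n,j,q,\chi)$ such that $\|\phi_{m,R}\,v\|_{W^{j,q}(A_1)}\le C\,\|v\|_{W^{j,q}(A_1)}$ for all $m\in\Nats$ and $R\ge 2$. This will follow because $\|\phi_{m,R}\|_{C^N(A_1)}$ is bounded uniformly in $m$ and $R$ for every $N$: one has $\partial_\alpha\phi_{m,R}(x)=(2^m/R)^{|\alpha|}(\partial_\alpha\chi)(2^m x/R)$, and for $|\alpha|\ge 1$ the factor $\partial_\alpha\chi$ is supported in $\{1\le|y|\le 2\}$, which forces $2^m/R\in[1/2,4]$ wherever $\partial_\alpha\phi_{m,R}$ is nonzero on $A_1$, so $(2^m/R)^{|\alpha|}$ is then bounded (while $0\le\phi_{m,R}\le 1$). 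Multiplication by a function with such a uniformly bounded $C^N$-norm is bounded on $W^{j,q}(A_1)$ with operator norm controlled by that $C^N$-norm --- directly from the Leibniz rule when $j\ge 0$, and by duality against $W^{-j,q'}_0(A_1)$ when $j<0$. The second step is support bookkeeping: for $x\in A_1$ one has $|2^m x/R|<2^{m+1}/R$, so if $2^{m+1}\le R$ then $\phi_{m,R}\equiv 0$ on $A_1$ and the $m$-th term drops out. Letting $M(R)$ be the largest integer with $2^{M(R)+1}\le R$ (so $M(R)\to\infty$ as $R\to\infty$), the multiplier estimate gives, for $R\ge 2$,
\[
\|\chi_R u\|_{W^{j,q}_\delta}^q=\sum_{m>M(R)}2^{-mq\delta}\,\|\phi_{m,R}\,S_{2^m}^* u\|_{W^{j,q}(A_1)}^q\le C^q\sum_{m>M(R)}2^{-mq\delta}\,\|S_{2^m}^* u\|_{W^{j,q}(A_1)}^q,
\]
and the right-hand side is $C^q$ times a tail of the convergent series $\sum_{m\ge 0}2^{-mq\delta}\|S_{2^m}^* u\|_{W^{j,q}(A_1)}^q\le\|u\|_{W^{j,q}_\delta}^q$, hence tends to $0$ as $R\to\infty$.

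The only delicate point is the uniform multiplier bound in the case $j<0$: one must know that multiplication by $\phi_{m,R}$ has operator norm on $W^{j,q}(A_1)$ bounded independently of $m$ and $R$, and this is exactly where the compact support of the derivatives of $\chi$ --- pinning $2^m/R$ to a bounded set on the relevant part of $A_1$ --- does the work; everything else is routine. If only the classical range $j\ge 0$ is needed, the whole argument can be replaced by the Leibniz rule together with dominated convergence applied to $\sum_{|\gamma|\le j}\int_{\{|x|>R\}}\langle x\rangle^{-n-q\delta}|\partial_\gamma u|^q\,dx$, using that derivatives of $\chi_R$ of order $\ge 1$ are supported in $\{R<|x|<2R\}$ and that $\chi_R$ itself is bounded and supported in $\{|x|>R\}$.
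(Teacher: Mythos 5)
Your proof is correct and follows essentially the same route as the paper's: both arguments observe that $S_{2^m}^*(\chi_R u)$ vanishes on $A_1$ for $2^m\lesssim R$ and that the rescaled cutoff has uniformly bounded derivatives there, so $\|\chi_R u\|_{W^{j,q}_\delta}^q$ is dominated by a tail of the convergent series defining $\|u\|_{W^{j,q}_\delta}^q$. Your write-up merely makes explicit the multiplier bound (including the $j<0$ case by duality) that the paper leaves implicit.
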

\begin{proof}
Consider a scaling operator $S_r$.  Then $S_r^* (\chi_R u)$ vanishes on $A_1$ unless $r\ge R$.  Moreover, it equals $S_r^* u$ on $A_1$ unless $1\le r/R \le 2$, in which case the values and derivatives of $S_r^*(\chi_R)$ are uniformly bounded independent of $r$ and $R$.  Hence for $R\ge 1$,
\[
||\chi_R u||_{W^{j,q}_\delta}^q \lesssim \sum_{\substack{j\in\Nats\\ 2^j \ge R}} 2^{-j \delta q} ||S_{2^j}u||^q_{W^{j,q}(A_1)}.
\]
Taking the limit in $R$ proves the result.
\\
\end{proof}

Combining estimate \eqref{eqn:semifred} from Proposition \ref{prop:semifred} with the compact embedding 
$W^{j,q}_{\delta} \hookrightarrow W^{j-2,q}_{\delta'}$ for $\delta'>\delta$,
 a standard argument (e.g. Theorem 1.10 of \cite{bartnik1})
implies that $P$ is semi-Fredholm:

\begin{corollary} \label{cor:semifred}
    Suppose that $g_{ab}$ and $N$ satisfy Assumption \ref{assumption:appendix} with parameters $k$, $p$, and $\tau$.
    Suppose $(j,q)\in \mathcal{S}^{k,p}$, and that $\delta$
    is non-exceptional. Then
    \[
    P : W^{j,q}_\delta \rightarrow W^{j-2,q}_{\delta-2}
    \]
    is semi-Fredholm (that is, it has a finite dimensional kernel and a closed range).
\end{corollary}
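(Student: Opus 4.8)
\emph{Proof proposal.} The plan is to deduce the semi-Fredholm property from the coercivity estimate \eqref{eqn:semifred} of Proposition \ref{prop:semifred} by the classical compactness argument, exactly as in Theorem 1.10 of \cite{bartnik1}. Fix once and for all some $\delta' > \delta$, so that the inclusion $W^{j,q}_\delta \hookrightarrow W^{j-2,q}_{\delta'}$ is compact; this is the one property of the weighted spaces (for the full range of scales including $k\le 0$) that is asserted rather than reproved in the discussion following Definition \ref{def:weighted-full}.

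First I would show that $\ker P$ is finite-dimensional. If $X\in\ker P$, then \eqref{eqn:semifred} gives $\|X\|_{W^{j,q}_\delta} \lesssim \|X\|_{W^{j-2,q}_{\delta'}}$, so on the closed subspace $\ker P$ the $W^{j,q}_\delta$-norm and the weaker $W^{j-2,q}_{\delta'}$-norm are equivalent. Since $\ker P \hookrightarrow W^{j-2,q}_{\delta'}$ is compact (it is the restriction of a compact embedding), the identity map of $\ker P$ factors through a compact operator, so the unit ball of $\ker P$ is precompact and hence $\dim\ker P < \infty$ by Riesz's theorem.

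Because $\ker P$ is finite-dimensional it admits a closed complement $Y$, i.e.\ $W^{j,q}_\delta = \ker P \oplus Y$. I claim the improved estimate $\|X\|_{W^{j,q}_\delta} \lesssim \|PX\|_{W^{j-2,q}_{\delta-2}}$ holds for all $X\in Y$. If not, there is a sequence $X_m \in Y$ with $\|X_m\|_{W^{j,q}_\delta} = 1$ and $\|PX_m\|_{W^{j-2,q}_{\delta-2}} \to 0$. By compactness of the embedding, after passing to a subsequence $X_m \to X$ in $W^{j-2,q}_{\delta'}$; applying \eqref{eqn:semifred} to the differences $X_m - X_{m'}$ shows $(X_m)$ is Cauchy in $W^{j,q}_\delta$, so $X_m \to X$ in $W^{j,q}_\delta$, with $X \in Y$ (as $Y$ is closed), $\|X\|_{W^{j,q}_\delta} = 1$, and $PX = 0$ by continuity of $P$. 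Then $X \in \ker P \cap Y = \{0\}$, contradicting $\|X\|=1$.

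Finally, the improved estimate on $Y$ shows $P$ is bounded below there, so $P(Y)$ is complete and hence a closed subspace of $W^{j-2,q}_{\delta-2}$; since $W^{j,q}_\delta = \ker P \oplus Y$ we have $P(Y) = P(W^{j,q}_\delta)$, so the range is closed. Together with the finite-dimensionality of the kernel, this is precisely the assertion that $P$ is semi-Fredholm. The argument is entirely routine; the only point needing care is to confirm that its three ingredients---the coercivity estimate \eqref{eqn:semifred}, the compact embedding $W^{j,q}_\delta \hookrightarrow W^{j-2,q}_{\delta'}$, and complementability of a finite-dimensional subspace of a Banach space---are all available for the full set $\mathcal S^{k,p}$ of compatible Sobolev indices allowed here, including those with negative differentiability.
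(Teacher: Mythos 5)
Your proposal is correct and is precisely the ``standard argument'' that the paper invokes without writing out: the paper's proof consists of citing the coercivity estimate \eqref{eqn:semifred}, the compact embedding $W^{j,q}_{\delta}\hookrightarrow W^{j-2,q}_{\delta'}$ with $\delta'>\delta$, and Theorem 1.10 of \cite{bartnik1}, and your Riesz/Peetre-type argument is exactly what that reference supplies. You have also correctly identified the one hypothesis that must be checked for the extended scale of spaces (compactness of the embedding for $k\le 0$), which the paper itself flags as the only property not covered by the cited literature.
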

\qed

As a final step before establishing the full Fredholm theory,
we now show that the kernel of $P$ depends on $\delta$ 
but not on the Sobolev parameters. This fact is not immediate 
from Lemma \eqref{lem:basic-reg}, which allows for an improvement of the number of derivatives, but only at a fixed integrability parameter $q$.  

\begin{lemma}\label{lem:kernel}
    Suppose $g_{ab}$ and $N$ satisfy Assumption \ref{assumption:appendix} with parameters $k$, $p$, and $\tau$.
    If $\delta$ is non-exceptional and if $(j,q)\in \mathcal S^{k,p}$
    from Definition \ref{def:calS} then the kernel of $P$ acting on $W^{j,q}_\delta$ agrees with
    the kernel of $P$ acting on $W^{k,p}_\delta$.  Moreover, 
    the kernel of $P$ acting on $W^{k,p}_\delta$ 
    is the same as the kernel of $P$ acting on $W^{k,p}_{\delta'}$ 
    for all\footnote{$\lfloor \cdot \rfloor$ denotes the floor operation, which rounds its input down to the nearest equal or smaller integer.} $\lfloor \delta \rfloor<\delta'\le \delta$.
\end{lemma}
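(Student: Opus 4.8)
The statement has two parts: first that the kernel of $P$ on $W^{j,q}_\delta$ coincides with the kernel on $W^{k,p}_\delta$ for every compatible pair $(j,q)\in\mathcal S^{k,p}$, and second that this kernel does not shrink as $\delta$ is lowered within $(\lfloor\delta\rfloor,\delta]$. In each part one inclusion is immediate. The inequalities defining $\mathcal S^{k,p}$ in Definition \ref{def:calS} include exactly the weighted Sobolev embedding $W^{k,p}_\delta\hookrightarrow W^{j,q}_\delta$ (cf.\ \cite{bartnik1,holst_scaling_2023}), so any element of the kernel on $W^{k,p}_\delta$ lies in $W^{j,q}_\delta$ and is annihilated there; hence the kernel on $W^{k,p}_\delta$ sits inside the kernel on $W^{j,q}_\delta$. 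Likewise $\delta'\le\delta$ gives $W^{k,p}_{\delta'}\subseteq W^{k,p}_\delta$, so the kernel on $W^{k,p}_{\delta'}$ sits inside the kernel on $W^{k,p}_\delta$. The content is the two reverse inclusions.

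For the reverse inclusion in the first part, let $X\in W^{j,q}_\delta$ with $PX=0$; I want $X\in W^{k,p}_\delta$. Because $PX=0$ lies in every weighted space, Lemma \ref{lem:basic-reg} promotes $X$ from $W^{j'-2,q}_\delta$ to $W^{j',q}_\delta$ whenever $(j',q)\in\mathcal S^{k,p}$, i.e.\ it gains two derivatives at fixed integrability. Alternating these gains with the weighted Sobolev embeddings available among the spaces $W^{\bullet,\bullet}_\delta$ (which trade derivatives for integrability and vice versa) one moves from $(j,q)$ to the endpoint $(k,p)$ of $\mathcal S^{k,p}$ in finitely many steps, the connectivity of $\mathcal S^{k,p}$ under these two moves being one of the standard facts about compatible Sobolev indices recorded in \cite{holst_scaling_2023}; interior elliptic regularity, already used in the proof of Lemma \ref{lem:basic-reg}, handles behavior on compact sets. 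This gives $X\in W^{k,p}_\delta$.

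For the reverse inclusion in the second part, put $\delta_0:=\lfloor\delta\rfloor$; if $\delta\in\Ints$ then $\delta_0=\delta$ and there is nothing to prove, so assume $\delta\in(\delta_0,\delta_0+1)$, and take $X\in W^{k,p}_\delta$ with $PX=0$. The goal is $X\in W^{k,p}_{(\delta_0)^+}$, reached by an iteration lowering the weight. Suppose $X\in W^{k,p}_\mu$ for some non-integer $\mu\in(\delta_0,\delta]$. Writing $P=\overline P+\sum_{|\alpha|\le2}A^\alpha\partial_\alpha$ with $A^\alpha\in W^{k-2+|\alpha|,p}_{\tau-2+|\alpha|}$ as in the proof of Lemma \ref{lem:basic-reg} and applying the multiplication Lemma \ref{lem:mult}, we get $\overline P X=(\overline P-P)X=:Z\in W^{k-2,p}_{\tau+\mu-2}$. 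Fix any non-integer $\mu'\in(\max(\delta_0,\tau+\mu),\mu)$; then $Z\in W^{k-2,p}_{\mu'-2}$ and $\mu'$ is non-exceptional. By Lemma \ref{lem:P-ext-exact} there is $X'\in W^{k,p}_{\mu'}$ with $\overline P X'=Z$ on an exterior region $E_R$, so $\overline P(X-X')=0$ on $E_R$ and $X-X'\in W^{k,p}_\mu$. Cutting off and invoking Lemma \ref{lem:P-multipolar} when $\mu<0$ --- or, when $\mu>2-n$, using that $\overline P$ is then surjective on $W^{k,p}_{\mu'}$ and that the kernel of $\overline P$ on $W^{k,p}_\mu$ consists of polynomials of degree $<\mu$, both from Proposition \ref{prop:flatcase} --- expresses $X-X'$ on $E_R$ as a finite sum of homogeneous fields (derivatives of the Green's function \eqref{eqn:greens}, or polynomials) plus a remainder in $W^{k,p}_{(\delta_0)^+}$. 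The crucial observation is that the degrees of these homogeneous fields are all integers, and each such field that actually occurs must have degree $<\mu<\delta_0+1$; since $(\delta_0,\delta_0+1)$ contains no integer, every occurring field has degree $\le\delta_0$ and hence already lies in $W^{k,p}_{(\delta_0)^+}$. Therefore $X=X'+(X-X')\in W^{k,p}_{\mu'}$. Applying this with $\mu'$ in place of $\mu$ repeatedly lowers the weight by at least a fixed multiple of $|\tau|$ until $\tau+\mu\le\delta_0$, at which point a final application yields $X\in W^{k,p}_{\mu'}$ for every non-integer $\mu'$ between $\delta_0$ and the current weight; combining all the steps gives $X\in W^{k,p}_{(\delta_0)^+}$, as desired.

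The routine inclusions and the first part's bootstrap are bookkeeping; the main obstacle is the iteration in the second part, and within it the identification of exactly which homogeneous terms the expansion of $X-X'$ can carry. The argument works precisely because the ``forbidden'' weight band $(\lfloor\delta\rfloor,\delta)$ over which the decay must be pushed contains no exceptional weight, so that no slowly-decaying homogeneous solution of the flat operator can obstruct the descent --- every term the expansion produces either decays fast enough to be absorbed into the remainder or is a pure integer-degree object of degree $\le\lfloor\delta\rfloor$, which is automatically $o(r^{\delta'})$ for all $\delta'>\lfloor\delta\rfloor$.
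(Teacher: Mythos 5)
Your second part is essentially sound and takes a legitimately different route from the paper's: where you solve $\overline P X' = (\overline P - P)X$ on an exterior region via Lemma \ref{lem:P-ext-exact} and then classify the homogeneous terms of $X - X'$ via Lemma \ref{lem:P-multipolar} (or surjectivity plus the polynomial kernel when $\mu > 2-n$), the paper instead observes that since $[\delta',\delta]$ contains no integers the cokernel of $\overline P$ (Proposition \ref{prop:flatcase}\eqref{part:flat-lap-image}) is the same at both weights, so $\overline P K$ is already solvable in the faster-decaying space and $K$ differs from that solution by a low-degree polynomial. Both arguments rest on the same fact --- no exceptional weight lies in $(\lfloor\delta\rfloor,\delta)$ --- and your iteration in steps of size $|\tau|$ matches the paper's.

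The first part, however, has a genuine gap. Your bootstrap from $(j,q)$ to $(k,p)$ relies on ``weighted Sobolev embeddings\dots which trade derivatives for integrability and vice versa'' and on an asserted connectivity of $\mathcal S^{k,p}$ under these moves. At a \emph{fixed} weight $\delta$ the only available moves are the elliptic gain of two derivatives at fixed $q$ (Lemma \ref{lem:basic-reg}) and the embedding $W^{j,q}_\delta\hookrightarrow W^{j',q'}_\delta$ with $j'\le j$ and $q'\ge q$; there is no embedding that \emph{lowers} the integrability exponent at the same weight (the $\ell^q$ summation over dyadic annuli goes the wrong way), and $\mathcal S^{k,p}$ contains pairs with $q>p$ from which $(k,p)$ is unreachable by these moves alone. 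For instance with $n=3$, $p=2$, $k=3$ the pair $(2,6)$ lies in $\mathcal S^{3,2}$, but $(3,6)\notin\mathcal S^{3,2}$, so you can neither gain a derivative at $q=6$ nor pass to $q=2$ at weight $\delta$. (The same issue infects your ``immediate'' inclusion: $W^{k,p}_\delta\hookrightarrow W^{j,q}_\delta$ fails for $(j,q)\in\mathcal S^{k,p}$ with $q<p$.) The fix is exactly the decay-improvement machinery you build in your second part, deployed \emph{first} and at the general indices $(j,q)$: once you know $K\in W^{j,q}_{\delta'}$ for some $\delta'<\delta$, the strict gain in decay buys the embedding $W^{j,q}_{\delta'}\hookrightarrow W^{j,Q}_{\delta}$ for any $Q<q$, after which the derivative count and the integrability exponent can be moved to $(k,p)$ in finitely many alternations of Lemma \ref{lem:basic-reg} and genuine embeddings. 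This reordering is precisely how the paper's proof proceeds.
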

\begin{proof}
Suppose $K\in W^{j,q}_\delta$ lies in the kernel of $P$.
We first show that $K\in W^{j,q}_{\delta'}$ for some $\delta'<\delta$.

Indeed,
\[
\overline P K = (\overline P - P) K + PK = (\overline P - P) K.
\]
The decay of the coefficients of $(\overline P - P)$ then imply 
$\overline P K \in W^{j-2,q}_{-2+\delta+\tau}$.  Now pick $\delta'> \lfloor \delta \rfloor$ with $\delta+\tau<\delta'<\delta$. 
Because the interval $[\delta',\delta]$ contains no integers,
the obstruction conditions of Proposition \ref{prop:flatcase}  
part \eqref{part:flat-lap-image} are the same for $\delta$ 
and $\delta'$ and hence there is a 
$Y\in W^{j,q}_{\delta'}$ with $\overline P K = \overline P Y$.
We conclude $Y$ and $K$ differ at most by a polynomial of degree less than $\delta'$ and consequently $K\in W^{j,q}_{\delta'}$ as well.

Now suppose $j<k$.  If $(j+1,q)\in \mathcal S^{k,p}$ we can apply
Lemma \ref{lem:basic-reg} to conclude $K\in W^{j+1,q}_\delta$.
Otherwise inequality \eqref{eqn:lap-condition1} 
shows that we can lower $q$ to some $Q$ so that $(j+1,Q)\in \mathcal S^{k,p}$. Although $W^{k,q}_\delta$ does not embed in $W^{k,Q}_\delta$
directly, we can use the faster decay rate and the 
embedding $W^{j,q}_{\delta'}\hookrightarrow W^{j,Q}_{\delta}$ 
along with Lemma \ref{lem:basic-reg} to conclude $K\in W^{j+1,Q}_{\delta}$. After finitely many iterations of this process we conclude $K\in W^{k,Q}_\delta$ for some $Q$ with $(k,Q)\in \mathcal S^{k,p}$.
We now wish to improve $Q$ to $p$.

In fact, from our arguments above we know  
$K\in W^{k,Q}_{\delta'}$ for some $\delta'< \delta$.  
But $W^{k,Q}_{\delta'}\hookrightarrow W^{k-1,Q'}_{\delta}$
for all $Q'<\infty$ with $1/Q' \ge 1/Q - 1/n$.  
If $1/Q-1/n> 1/p$ then we can take $1/Q' = 1/Q-1/n$
and apply Lemma \ref{lem:basic-reg} to conclude $K\in W^{k,Q'}_\delta$.
After finitely many iterations we find $K\in W^{k-1,Q''}_{\delta'}$
with $Q'' \ge p$ and $\delta'<\delta$.  But then $K\in W^{k-1,p}_\delta$
as well and a last application of Lemma \ref{lem:basic-reg} shows 
$K\in W^{k,p}_{\delta}$. 

Finally, repeating the decay lowering argument from the start of the proof finitely many times depending on the size of $\tau$ 
we conclude $K\in W^{k,p}_{\delta'}$ for all
$\lfloor \delta \rfloor<\delta'\le \delta$.
\\
\end{proof}

We now obtain the primary Fredholm result.

\begin{proposition}\label{prop:fredholm}
    Suppose that $g_{ab}$ and $N$ satisfy Assumption \ref{assumption:appendix} with parameters $k$, $p$, and $\tau$.
    Suppose $(j,q)\in \mathcal{S}^{k,p}$ and $\delta$
    is non-exceptional.  Then
    \[
    P : W^{j,q}_{\delta} \rightarrow W^{j-2,q}_{\delta-2}
    \]
    is Fredholm, and its Fredholm index satisfies $\iota(P^{j,q}_\delta) = \iota(\overline P^{j,q}_\delta)$ and
    is independent of $j$ and $q$.
    
    Let $P^*$ denote the action of $P$ on
     $W^{2-j,q'}_{2-n-\delta}$.  Then the kernel of $P^*$ is independent of $j$ and $q$ and depends only on $\delta$. Given $V \in W^{j-2,q}_{\delta-2}$, the equation
    \[
    PX = V
    \]
    is solvable for $X \in W^{j,q}_\delta$ if and only if $\int_{\Reals^n} \left<V, K \right>_g dV_g = 0$ 
    for all vector fields $K \in \ker(P^*)$.
\end{proposition}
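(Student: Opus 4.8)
The plan is to obtain Fredholmness from the semi-Fredholm bounds already in hand together with the self-adjoint structure, to transport the Fredholm index from the Euclidean model $\overline P$ by a norm-continuous homotopy, and finally to read off the cokernel description from duality and the integration-by-parts identity. Write $P^{j,q}_\delta$ for $P$ acting $W^{j,q}_\delta\to W^{j-2,q}_{\delta-2}$. Corollary \ref{cor:semifred} already gives that $P^{j,q}_\delta$ has finite-dimensional kernel and closed range. By Lemma \ref{lem:dual} the pairing $(X,Y)\mapsto\int_{\Reals^n}\langle X,Y\rangle_g\,dV_g$ identifies $(W^{j-2,q}_{\delta-2})^*$ with $W^{2-j,q'}_{2-n-\delta}$, and by Lemma \ref{lem:self-adjoint} the Banach-space adjoint of $P^{j,q}_\delta$ under this identification is $P$ acting on $W^{2-j,q'}_{2-n-\delta}$; here $(2-j,q')\in\mathcal S^{k,p}$ follows from $(j,q)\in\mathcal S^{k,p}$ by an elementary manipulation of \eqref{eqn:lap-condition1}--\eqref{eqn:lap-condition2}, and $2-n-\delta$ is non-exceptional exactly when $\delta$ is. Corollary \ref{cor:semifred} applied at the weight $2-n-\delta$ then shows this adjoint also has finite-dimensional kernel; since $P^{j,q}_\delta$ has closed range, its cokernel is isomorphic to the annihilator of that range, i.e.\ to this kernel, so $P^{j,q}_\delta$ is Fredholm. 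The identical argument shows $P$ on $W^{2-j,q'}_{2-n-\delta}$ is Fredholm as well.

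Next, to compute $\iota(P^{j,q}_\delta)$, I would deform along $P_t:=\overline P+t(P-\overline P)$, $t\in[0,1]$. Decomposing $P-\overline P=\sum_{|\alpha|\le 2}A^\alpha\partial_\alpha$ with $A^\alpha\in W^{k-2+|\alpha|,p}_{\tau-2+|\alpha|}$, each $P_t$ is again $\overline P$ plus a perturbation whose coefficients $tA^\alpha$ lie in the same weighted spaces with norms no larger, so Proposition \ref{prop:semifred} and hence Corollary \ref{cor:semifred} apply uniformly, and each $P_t:W^{j,q}_\delta\to W^{j-2,q}_{\delta-2}$ is semi-Fredholm. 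The $P_t$ are no longer self-adjoint for $0<t<1$, but expanding the Euclidean formal adjoint by Leibniz's rule gives $P_t^\dagger=\overline P+\sum_{|\beta|\le 2}\widetilde A^\beta\partial_\beta$, where each $\widetilde A^\beta$ is a sum of derivatives of the $A^\alpha$ and so lands precisely in $W^{k-2+|\beta|,p}_{\tau-2+|\beta|}$; thus $P_t^\dagger$ is semi-Fredholm by the same estimates, and using that $\im P_t$ is closed together with the identification $(W^{j-2,q}_{\delta-2})^*\cong W^{2-j,q'}_{2-n-\delta}$, the cokernel of $P_t$ is isomorphic to the finite-dimensional space $\ker P_t^\dagger$. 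Hence each $P_t$ is Fredholm, and since $t\mapsto P_t$ is norm-continuous the Fredholm index is constant along the path, giving $\iota(P^{j,q}_\delta)=\iota(\overline P^{j,q}_\delta)$; independence of $j$ and $q$ then follows from Proposition \ref{prop:flatcase}.

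For the statement about $P^*$ (the action of $P$ on $W^{2-j,q'}_{2-n-\delta}$) and the solvability criterion: Lemma \ref{lem:kernel}, applied at the non-exceptional weight $2-n-\delta$ and the compatible pair $(2-j,q')$, identifies $\ker P^*$ with the kernel of $P$ on $W^{k,p}_{2-n-\delta}$, so it is independent of $j$ and $q$ and determined by $\delta$ alone. Finally, since $P^{j,q}_\delta$ is Fredholm its range is closed, hence equals the annihilator of $(\im P^{j,q}_\delta)^\perp\subset(W^{j-2,q}_{\delta-2})^*\cong W^{2-j,q'}_{2-n-\delta}$; a covector $Y$ lies in $(\im P^{j,q}_\delta)^\perp$ iff $\int_{\Reals^n}\langle PX,Y\rangle_g\,dV_g=0$ for all $X\in W^{j,q}_\delta$, which by the integration-by-parts identity of Lemma \ref{lem:self-adjoint} is equivalent to $\int_{\Reals^n}\langle X,PY\rangle_g\,dV_g=0$ for all such $X$, i.e.\ to $PY=0$ by non-degeneracy of the pairing (Lemma \ref{lem:dual}). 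Thus $(\im P^{j,q}_\delta)^\perp=\ker P^*$, and $PX=V$ is solvable in $W^{j,q}_\delta$ precisely when $\int_{\Reals^n}\langle V,K\rangle_g\,dV_g=0$ for every $K\in\ker P^*$.

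The main obstacle is the index step: one must verify that the intermediate operators $P_t$ along the homotopy are genuinely Fredholm and not merely semi-Fredholm, which — since $P_t$ is not self-adjoint — requires observing that its Euclidean formal adjoint $P_t^\dagger$ still has the structural form (a flat vector Laplacian plus a perturbation whose coefficients lie in $W^{k-2+|\beta|,p}_{\tau-2+|\beta|}$) to which Corollary \ref{cor:semifred} applies. One could instead try to show the lower-order operator $P-\overline P$ is compact as a map $W^{j,q}_\delta\to W^{j-2,q}_{\delta-2}$ and conclude at once, but this seems to require more than the multiplication and embedding properties of the weighted spaces provide at the second-order terms, so the homotopy route is preferable. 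Everything else is a direct assembly of Corollary \ref{cor:semifred}, Lemmas \ref{lem:dual}, \ref{lem:self-adjoint}, \ref{lem:kernel}, and Proposition \ref{prop:flatcase}.
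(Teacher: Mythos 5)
Your proposal is correct, and the Fredholmness, duality/solvability, and kernel-independence parts coincide with the paper's argument (the paper spells out the Hahn--Banach step behind the identity $\im(P) = (\ker P^*)^\perp$, while you invoke the equivalent closed-range/annihilator fact directly). The one genuine divergence is the index computation. You interpolate the operators linearly, $P_t = \overline P + t(P-\overline P)$, and must then do extra work because the intermediate $P_t$ are not self-adjoint with respect to any single pairing: you establish their Fredholmness by expanding the Euclidean formal adjoint $P_t^\dagger$ via Leibniz and checking that its coefficients again lie in $W^{k-2+|\beta|,p}_{\tau-2+|\beta|}$. This works, but note that it requires applying Lemma \ref{lem:basic-reg}, Proposition \ref{prop:semifred}, and Corollary \ref{cor:semifred} to operators of the form $\overline P + \sum A^\alpha\partial_\alpha$ that are not literally of the form $P_{g,N}$; the proofs of those results use only the coefficient decomposition, so the extension is legitimate, but you are citing them beyond their stated scope and should say so. The paper instead interpolates the geometric data, $g_t = (1-t)\overline g + tg$ and $N_t = (1-t)+tN$: each $g_t$ is a positive-definite asymptotically Euclidean metric with $g_t - \delta = t(g-\delta)\in W^{k,p}_\tau$ and each $N_t$ a positive lapse with $N_t - 1 \in W^{k,p}_\tau$, so every intermediate operator is itself a lapse-weighted vector Laplacian of admissible data and is Fredholm by the part of the proposition already proved, with no formal-adjoint bookkeeping. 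Your route buys nothing over this and costs the extra verification, but it is sound; if you keep it, make the extension of Corollary \ref{cor:semifred} to general operators of the stated form explicit.
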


\begin{proof}
     Corollary \ref{cor:semifred} implies that $P$ is semi-Fredholm.
     Lemma \ref{lem:self-adjoint} shows that the adjoint of $P$ is
     $P: W^{2-j,q'}_{2-n-\delta} \to W^{-j,q'}_{-n-\delta}$, which we denote by $P^*$.
     Since $2-n-\delta$ is nonexceptional when $\delta$ is,    Corollary \ref{cor:semifred} implies that $P^*$
    is also semi-Fredholm.  We claim the following:
    \be \label{eqn:kernelperp}
    \mathrm{im}(P) = (\ker(P^*))^\perp.
    \ee
    Together with Lemma \ref{lem:dual}, this establishes the claimed solvability criterion. 
    Since the kernel of $P^*$ is finite dimensional, this also establishes that $P$ is Fredholm. Lemma \ref{lem:kernel}  
    shows that the kernel of $P^*$ is independent of $j$ and $q$.
    
    Lemma \ref{lem:self-adjoint} implies 
    \[
    \int_{\Reals^n} \left< PX, Y\right>_g dV_g = \int_{\Reals^n} 
    \left<X, P^* Y\right>_g dV_g
    \]
    for all $X \in W^{j,q}_{\delta}$ and $Y \in W^{2-j,q'}_{2-n-\delta}$, and the inclusion $\mathrm{im} \, P \subset (\ker P^*)^\perp$ is
    immediate.  Conversely, suppose that $Z\in W^{j-2,q}_{\delta-2}$
    is not in the image of $P$.  Since $P$ is semi-Fredholm its image is closed and
    the Hahn-Banach Theorem together with Lemma \ref{lem:dual} imply that
    there exists a $Y\in W^{2-j,q'}_{2-n-\delta}$ satisfying both
    \be \label{eqn:Yperp}
    \int_{\Reals^n} \left<P X, Y\right>_g\; dV_g 
    =0
    \ee
    for all $X \in W^{j,q}_{\delta}$, and
    \begin{equation}\label{eqn:Ynotperp}
    \int_{\Reals^n} \left<Z, Y\right>_g\; dV_g \neq 0.
    \end{equation}
    Equation (\ref{eqn:Yperp}) implies that
    \[
    \int_{\Reals^n} \left< X, P^* Y\right>_g\; dV_g 
    =0
    \]
    for all $X\in W^{j,q}_\delta$, so Lemma \ref{lem:dual} further yields that $Y\in \ker P^*$.
    That is: If $Z$ is not in the image of $P$, then 
    there exists $Y\in \ker(P^*)$ satisfying equation \eqref{eqn:Ynotperp},
    and hence $Z\not \in (\ker(P^*))^\perp$. This confirms equation \eqref{eqn:kernelperp}.
    
    Regarding the index, the paths of metrics $g_t = (1-t)\overline{g} + t g$
    and lapses $N_t = (1-t)+tN$ yield a continuous path of 
    operators $P_{g_t,N_t}$ from $\overline{P}$
    to $P_{g,N}$, and local constancy of the index shows
    that
    $\iota(P^{j,q}_\delta) = \iota(\overline P^{j,q}_\delta)$.
    The fact that the index is independent of $(j,q)$
    follows from the corresponding fact for $\overline P$ from 
    Proposition \ref{prop:flatcase}.
    \\
\end{proof}

\end{document}